\newtheorem{theorem}{Theorem}
\newtheorem{definition}[theorem]{Definition}
\newtheorem{example}[theorem]{Example}
\newtheorem{corollary}[theorem]{Corollary}
\newtheorem{proposition}[theorem]{Proposition}
\newtheorem{claim}[theorem]{Claim}
\newtheorem*{subclaim*}{Subclaim}
\newtheorem*{remark*}{Remark}
\newcommand{\Qall}{Q_{\mathrm{all}}}
\newcommand{\Qzero}{Q_{\mathrm{zero}}}
\newcommand{\trees}[1]{\mathsf{trees}(#1)}
\newcommand{\ignore}[1]{}
\newcommand{\zero}{\texttt {subzero}}
\renewcommand{\Aa}{{\mathcal A}}
\newcommand{\N}{{\mathbb N}}
\newcommand{\qall}{Q_\texttt{all}}
\newcommand{\qzero}{Q_\texttt{zero}}
\newcommand{\maxinf}{\texttt{maxinf}}
\newcommand{\synseq}[3]{#1 \stackrel {\le #2} \longrightarrow #3}
\newcommand{\synsneq}[3]{#1 \stackrel {< #2} \longrightarrow #3}
\newcommand{\profile}[3]{#1 \stackrel {\le #2} \longrightarrow #3}
\title{On the Regular Emptiness Problem of Subzero Automata\footnote{Henryk Michalewski was supported by Poland’s National Science Centre grant no.~2012/07/D/ST6/02443 and Matteo Mio was supported by``Projet \'{E}mergent PMSO'' of the \'{E}cole Normale Sup\'{e}rieure de Lyon and Poland’s National Science Centre grant no.~2014-13/B/ST6/03595.}}
\author{
Henryk Michalewski
\institute{University of Warsaw, Poland}
\and
Matteo Mio
\institute{CNRS and ENS-Lyon, France}
\and
Mikołaj Bojańczyk
\institute{University of Warsaw, Poland}
}
\begin{document}

\maketitle
\begin{abstract} Subzero automata is a class of tree automata whose acceptance condition can express probabilistic constraints. 
Our main result is that the problem of determining if a subzero automaton accepts some regular tree is decidable.
\end{abstract}	

\section{Introduction}

In the  fundamental paper \cite{Rabin69} Rabin proved that the monadic second order logic (MSO) of the full binary tree is decidable using the \emph{automata method}. This proof technique can be roughly described as follows: (1) an appropriate notion of \emph{tree automaton} is defined; (2) every formula $\phi$ of the MSO logic is effectively 
associated to an automaton $\Aa_\phi$ such that that $\phi$ is true if and only if $\Aa_\phi$ accepts a non-empty language; 
(3) the \emph{emptiness problem}, that is deciding if $\Aa_\phi$ accepts a non-empty language, is proved to be decidable. The latter point is typically established using combinatorial reasoning about the graph structure of $\Aa_\phi$. 

Recently, Michalewski and Mio have investigated in \cite{DBLP:conf/lfcs/MichalewskiM16} an extension of the MSO logic of the full binary tree, called $\textnormal{MSO}+\forall^{=1}$, capable of expressing probabilistic properties. While the full logic $\textnormal{MSO}+\forall^{=1}$ is undecidable (see Section 5 of \cite{DBLP:conf/lfcs/MichalewskiM16}), the decidability of an interesting fragment called $\textnormal{MSO}+\forall^{=1}_\pi$ (see Section 6 of \cite{DBLP:conf/lfcs/MichalewskiM16}), capable of expressing many probabilistic properties useful in program verification such as those definable by the logic pCTL and its variants,  is an open problem.

Boja{\'n}czyk proposed in \cite{bojanzero2016} to use the automata method to prove the decidability of the weak fragment of $\textnormal{MSO}+\forall^{=1}_\pi$, where second-order quantification is restricted to finite sets, which is still sufficiently expressive to express most useful probabilistic properties in program verification.
Namely, Boja{\'n}czyk has: (1) introduced a special class of \emph{zero automata} 
and (2) proved that for every formula $\phi$ of  $\textnormal{weakMSO}+\forall^{=1}_\pi$ one can effectively associate 
a zero automaton $\Aa_\phi$ such that $\phi$ is true if and only if $\Aa_\phi$ accepts a non-empty language. 

Hence what is still missing from Boja{\'n}czyk's approach is a proof of decidability of the emptiness problem of zero automata. In this paper we consider a simplified version of zero automata introduced by Boja{\'n}czyk\footnote{This simplification  makes the presentation smoother. At the same time, the regular emptiness problem for subzero automata seems to be equally difficult as the regular emptiness problem for zero automata. Admittedly, in order to characterize the full strength of $\textnormal{weakMSO}+\forall^{=1}_\pi$ one needs a class of automata more refined than the one analyzed in this paper. We accept this drawback in order to keep the presentation as simple as possible.}.  We call the simplified class  \emph{subzero automata} and prove the following result:

\begin{theorem}\label{main_result_1}
Given a subzero automaton $\mathcal{A}$, it is decidable if there exists a regular tree (i.e., representable as a finite directed graph) which is accepted by $\mathcal{A}$.
\end{theorem}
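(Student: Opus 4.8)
The plan is to characterise the regular trees accepted by $\mathcal{A}$ by a \emph{finite} set of \emph{profiles}, to show that the collection of achievable profiles can be computed by a saturation (least fixed point) procedure, and to read off the answer from the computed set.

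\textbf{Reduction to finitely presented certificates.} A regular tree is the unfolding of a finite pointed graph, and I would first argue that whenever $\mathcal{A}$ accepts such a tree it does so by an accepting run that is itself finitely presented, i.e. the unfolding of a finite graph refining the original one. A pumping/compactness argument over the finite state set of $\mathcal{A}$ justifies this restriction, as in the classical case of parity automata on regular trees. So the object to search for becomes a finite graph whose nodes carry a tree letter and an automaton state, together with the promise that its unfolding forms an accepting run.

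\textbf{Profiles of contexts.} The heart of the argument is to assign to every context (a finite graph with distinguished holes, or ports) a profile recording: the state at the root, the multiset of states at the ports, which parity-like and seed obligations are met along the internal branches, and --- the delicate ingredient --- coarse measure information about the sets of branches along which the zero and nonzero conditions fire and about the mass that flows into each port. Because the input tree is regular, all of these measures are computed by a finite Markov chain on the graph, so only the qualitative pattern survives: which of the relevant quantities are zero, and how the nonzero ones compare in order of magnitude. One then shows that (i) only finitely many profiles arise; (ii) the profile of a composite context is a function of the profiles of its pieces and of the joining data; and (iii) iterating a context infinitely often produces a profile obtained as a limit of its finite approximants. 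These closure properties are precisely the inference rules of the proof system (the $\longrightarrow$ and $\vdash$ judgements) developed in the following sections, and soundness and completeness of the rules would be proved by an \efg style induction on the rank parameter.

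\textbf{Saturation and decision.} Since the set of profiles is finite and each inference rule is effective --- in particular the probabilistic side conditions reduce to satisfiability of linear (in)equalities over the fixed finite state set, which is decidable --- the least set of achievable profiles is computable by a terminating saturation. Finally, $\mathcal{A}$ accepts some regular tree if and only if the saturated set contains a profile whose root state is initial and whose recorded obligations entail the global acceptance condition; checking this is immediate, which establishes Theorem~\ref{main_result_1}. I expect the genuine difficulty to lie in the profile design of the second step: making the abstraction simultaneously finite, compositional and faithful to the measure-zero conditions. Measures are real numbers, so one must show that only the zero/nonzero/domination pattern is relevant for acceptance, and that this pattern is preserved under context composition and under infinite iteration; the interaction of the seed states with the zero and nonzero conditions, and with the parity-like component, is where the bookkeeping becomes genuinely subtle.
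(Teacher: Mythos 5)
Your overall architecture --- profiles of partial runs with ports, a compositional calculus on profiles, and an effective search --- has the same general shape as the paper's argument, but the step you yourself flag as ``the genuine difficulty'' is precisely the one that is missing, and the design you sketch for it points in a direction the paper does not take. You propose to store coarse measure information (which branch masses are zero, and how the nonzero ones compare) inside the profiles and then to saturate over a finite profile set. Two concrete problems. First, your profiles also contain the multiset of port states, whose multiplicities are a priori unbounded, so the profile space is not finite as claimed and the saturation has no termination guarantee; some deduplication argument is needed and is not given. Second, the assertions that the qualitative measure pattern is (a) finitely presentable, (b) compositional under plugging, and (c) correctly computed ``as a limit'' under infinite iteration constitute the entire technical content of the theorem, and you give no argument for any of them (nor is it clear what the ``linear inequalities'' deciding the probabilistic side conditions would be). The paper avoids this problem altogether: its profiles $\profile p q w$ record only the root state, the maximal internal state and the port multiset --- no measure data at all. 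The zero condition is enforced purely syntactically, through the side conditions of the two looping rules: weak looping is permitted only for $p\in\Qall\setminus\Qzero$, while strong looping for $p\in\Qall$ additionally requires a nonempty port multiset, which guarantees a strictly positive escape probability at every iteration and hence measure zero for the set of branches that loop forever. Regularity of the realizing run is used exactly once, to extract a $q$-free subtree in the case $q\in\Qall\cap\Qzero$ with no ports. Decidability then follows not from finiteness of the profile space but from a primitive-recursive bound on the size of a completeness derivation, proved by induction on the priority order of states.

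A second gap is your opening reduction: the claim that an accepting run on a regular tree may be assumed finitely presented ``by a pumping/compactness argument as in the classical case of parity automata.'' The classical argument rests on positional determinacy of parity games, and it is not at all clear how to pump or reshape a run while preserving the measure-zero condition; Proposition \ref{proposition_irregularity} shows that subzero automata fail Rabin's regularity theorem, so one should be suspicious of importing classical regularity arguments wholesale. In the paper this equivalence is not assumed up front but emerges from soundness, since derivations always produce regular realizations. As it stands, your proposal is a plausible research plan rather than a proof: the finite, compositional, measure-faithful abstraction it requires is exactly what remains to be constructed, and the paper's solution suggests that the measure information is better pushed into side conditions of the rules than into the profiles themselves.
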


However we also observe that:

\begin{proposition}\label{proposition_irregularity}
There exists a subzero automaton $\mathcal{A}$ such that the language accepted by $\Aa$ is not empty but does not contain a regular tree.
\end{proposition}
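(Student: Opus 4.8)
The plan is to pin down a concrete subzero automaton $\mathcal{A}$, over the alphabet $\{a,b\}$, whose accepted language $L$ is nonempty yet contains no regular tree. I take $L$ to be the set of trees $t$ such that: (i) for every node $v$, the branches of the subtree $t_v$ that pass through some $b$-labelled node form a set of strictly positive measure; and (ii) the branches of $t$ carrying infinitely many $b$-labelled nodes form a set of measure zero. The first job is to see that $L$ is recognised by a subzero automaton. Condition~(ii) is a root-level ``measure-zero'' constraint on the $\omega$-regular branch event ``infinitely many $b$'s'', exactly the shape of requirement the $\Qzero$/$\Qseed$ part of the model is designed for; condition~(i) is a safety-style constraint in which, at every node, alongside the top-down bookkeeping carried by the $\Qall$ states, the automaton launches a $\Qnonzero$-check of the $\omega$-regular branch event ``eventually a $b$''. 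Running these two phases in parallel gives $\mathcal{A}$; I expect this to be the step most sensitive to the exact formulation of subzero automata, since it rests on how the $\Qall$, $\Qzero$, $\Qnonzero$ and $\Qseed$ states may be combined, but each ingredient on its own is a primitive the model provides.

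To see $L\neq\emptyset$ I would exhibit a concrete witness, which will automatically be non-regular. Let $S=\{\,v0^{|v|}:v\in\{0,1\}^*\,\}$ and let $t_S$ label a node $b$ when it belongs to $S$ and $a$ otherwise. Every node $v$ has the descendant $v0^{|v|}\in S$, reached from $v$ by $|v|$ left moves with probability $2^{-|v|}>0$, which gives~(i). For~(ii), the nodes of $S$ at depth $2m$ are the $2^m$ strings $v0^m$ with $|v|=m$, of total measure $2^m\cdot 2^{-2m}=2^{-m}$, so the branches meeting a node of $S$ at depth at least $d$ have measure at most $\sum_{m\ge d/2}2^{-m}\to 0$; since ``infinitely many $b$'s'' is contained in this event for every $d$, it has measure zero, so $t_S\in L$. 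Moreover $t_S$ is not regular: for distinct $a_1<a_2$ the subtrees rooted at $0^{a_1}1$ and $0^{a_2}1$ are distinct, as the relative position $0^{a_1+1}$ is $b$-labelled in the first but not in the second.

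Finally I would show that no regular tree lies in $L$. If $t$ is regular it has finitely many distinct subtrees, and mapping each subtree-type to the type of a uniformly chosen child turns this finite set into a Markov chain whose trajectories are exactly the random branches of $t$; moreover every type is reachable from the root. Almost every trajectory enters some bottom strongly connected component $C$ and thereafter visits every type of $C$ infinitely often. If any type in any such $C$ were $b$-rooted, a positive-measure set of branches would contain infinitely many $b$'s, contradicting~(ii); hence every type occurring in a bottom component is $a$-rooted. But a branch emanating from a node whose subtree has such a type remains in $C$ forever, so it visits only $a$-rooted types and meets no $b$-node at all, whence that subtree has measure zero of $b$-reaching branches, contradicting~(i). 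Thus $L$ is nonempty but contains no regular tree. The only delicate point is the recognisability of $L$ by a subzero automaton; the witness $t_S$ and the finite-Markov-chain argument against regular trees are then routine.
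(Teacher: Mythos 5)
There is a genuine gap, and it sits exactly where you flagged it: the recognisability of your language $L$ by a \emph{subzero} automaton. Definition \ref{def:zero-automata} equips a subzero automaton with only two acceptance components, $\Qall$ (a parity condition that every branch must satisfy) and $\Qzero$ (a measure-zero condition), and Definition \ref{def:accepting:run} gives it no mechanism whatsoever for asserting that an event has \emph{strictly positive} measure, let alone for asserting this locally at every node. Your condition~(i) is precisely such a positive-measure constraint, and your sketch of the automaton invokes ``$\Qnonzero$-checks'' and ``$\Qseed$'' states, which belong to Boja{\'n}czyk's full zero-automata model that this paper deliberately strips away. As written, the first step of your argument therefore does not go through: you have not produced a subzero automaton, and the language $L$ you define is (as far as the model of this paper is concerned) not recognisable. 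The fix is cheap: weaken~(i) to the purely qualitative condition that from every node some $b$-labelled descendant is reachable. This is MSO-definable, hence enforceable with $\Qall$ alone, and combining it with the $\Qzero$-condition~(ii) yields a genuine subzero automaton --- indeed essentially the three-state automaton the paper exhibits for its language $L_3$ (with the roles of $a$ and $b$ swapped).

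With that repair, the rest of your argument is correct and matches the paper's in substance. Your concluding step is unaffected: in the bottom strongly connected components of the Markov chain of subtree types, either some type is $b$-rooted (giving a positive-measure set of branches with infinitely many $b$'s, against~(ii)) or none is (so no $b$ is reachable from within the component, against the weakened~(i)); this is the same finite-Markov-chain argument the paper uses. Your witness $t_S=\{\,v0^{|v|}:v\in\{0,1\}^*\,\}$ is a nice, more explicit alternative to the paper's block construction: the paper spreads the $a$-labels over blocks of rapidly growing depth and invokes Borel--Cantelli, whereas your witness makes the summable-measure estimate completely elementary ($2^{-m}$ at depth $2m$), and it still satisfies the weakened condition~(i) since every node $v$ has the $b$-labelled descendant $v0^{|v|}$.
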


Recall that Rabin's regularity theorem  \cite{Rabin69}  asserts that a Rabin's tree automaton accepts some tree if and only if it accepts some regular tree. Hence Proposition \ref{proposition_irregularity} states that subzero automata do not enjoy this nice property. In turn, this fact implies that our result (Theorem \ref{main_result_1}) does not yet solve the emptiness problem of subzero automata. 

Nevertheless the proof method we use to prove Theorem \ref{main_result_1} is quite interesting as it not based on two-player (stochastic) games, graph algorithms or similar techniques based on combinatorics on graphs, but on the design of a \emph{deduction system} capable of deriving assertions such as ``from state $s$ the automaton $\mathcal{A}$ has an accepting run''. To the best of our knowledge this approach has not yet been used in the literature and we believe that it can be applied to the emptiness problems of other classes of automata.
Indeed, our preliminary investigations indicate that also the full result, the decidability of the (not just regular) emptiness problem of zero automata, might be provable by the design of a (significantly more complicated) proof system.  The purpose of this paper is to illustrate the main ideas behind the use of deductive proof systems 
 in the simpler context of the proof of Theorem \ref{main_result_1}.


\paragraph{Related work.}
In \cite{DBLP:journals/tocl/CarayolHS14},  Carayol, Haddad and  Serre have introduced the new automata model of \emph{qualitative tree automata}, a variant of Rabin's tree automata with a probabilistic acceptance condition. 
Main results from  \cite{DBLP:journals/tocl/CarayolHS14} about qualitative automata include: (1) the class of languages definable by qualitative automata is incomparable with the class of regular languages, i.e., there are some languages definable by qualitative automata that are not definable by Rabin's automata, and vice versa, (2) qualitative automata enjoy the Rabin's regularity theorem, that is they accept some tree if and only if they accept some regular tree, (3)  the emptiness of qualitative automata is decidable.

The work of \cite{DBLP:journals/tocl/CarayolHS14} on qualitative automata is closely connected with ours on subzero automata. Firstly, some useful results from \cite{DBLP:journals/tocl/CarayolHS14} 
are exploited in this work. For example, our proof of Proposition \ref{proposition_irregularity} is similar in the spirit to the argument used in \cite{DBLP:journals/tocl/CarayolHS14} to show that there exists a language defined by a qualitative tree automaton which is not accepted by any Rabin's tree automaton. Secondly, as we show in Section \ref{section:examples}, the class of languages definable by subzero automata includes all regular languages and all languages definable by qualitative automata (cf. Proposition \ref{fullqzero}). Hence,
\begin{proposition}
Subzero automata constitute a strict generalization of both Rabin's automata and qualitative automata.
\end{proposition}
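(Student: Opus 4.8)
The plan is to decompose the proposition into an \emph{inclusion} part and a \emph{strictness} part and to reduce each to results already available in the paper. For inclusion, I would simply invoke Proposition~\ref{fullqzero}, which already asserts that the class of languages recognised by subzero automata contains every regular (Rabin) language and every language recognised by a qualitative automaton. Thus subzero automata are at least as expressive as both models, and all that remains is to show that neither containment is an equality.

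For strictness, the key observation is that both Rabin's automata and qualitative automata satisfy Rabin's regularity theorem: if such an automaton accepts some tree, then it accepts some \emph{regular} tree — this is Rabin's theorem \cite{Rabin69} in the first case and point~(2) of \cite{DBLP:journals/tocl/CarayolHS14}, recalled in the introduction, in the second. Taking the contrapositive yields a single uniform separating witness: any non-empty language containing no regular tree lies outside the class of Rabin languages and outside the class of qualitative languages. Proposition~\ref{proposition_irregularity} provides precisely such a language, namely $L(\mathcal{A})$ for a suitable subzero automaton $\mathcal{A}$. Hence $L(\mathcal{A})$ witnesses that both inclusions of the previous paragraph are proper, and combining the two parts gives the proposition.

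Within this argument nothing is difficult, since all the content has been offloaded to Propositions~\ref{fullqzero} and~\ref{proposition_irregularity}, which are established elsewhere in the paper; here I would merely assemble them. The genuine obstacle therefore lies in Proposition~\ref{fullqzero}: one must simulate a qualitative automaton (and an ordinary Rabin automaton) by a subzero automaton, encoding the ``almost sure / positive probability'' acceptance condition into the subzero acceptance mechanism while preserving the recognised language — and, to a lesser extent, in the explicit construction of the irregular subzero language of Proposition~\ref{proposition_irregularity}, whose irregularity is argued in the spirit of the corresponding separation in \cite{DBLP:journals/tocl/CarayolHS14}.
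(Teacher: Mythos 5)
Your proposal is correct, but part of it follows a different route from the one the paper (implicitly) takes. Two remarks. First, a small misattribution on the inclusion side: Proposition~\ref{fullqzero} only covers qualitative automata (it identifies their languages with those of subzero automata having $\Qall=Q$); the containment of all regular, i.e.\ Rabin-definable, languages comes instead from the first example of Section~\ref{section:examples} and the remark following it, namely that a subzero automaton with $\Qzero=\emptyset$ is just an ordinary nondeterministic parity automaton. The content you need is in the paper, just not where you cite it. Second, and more interestingly, your strictness argument for the qualitative case differs from the paper's. The paper separates subzero automata from qualitative automata in the \emph{other direction of failure}: qualitative automata cannot define all regular languages (Proposition~20 of \cite{DBLP:journals/tocl/CarayolHS14}), whereas subzero automata can, so some regular subzero language is not qualitative-definable. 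You instead reuse the single witness $L(\mathcal{A})=L_3$ of Proposition~\ref{proposition_irregularity} for both separations, arguing that since qualitative automata enjoy the regularity property, a nonempty language with no regular tree cannot be qualitative-definable. Both arguments are sound; yours is more uniform (one witness, one contrapositive of the regularity property applied twice), while the paper's leans on a different result of \cite{DBLP:journals/tocl/CarayolHS14} and has the side benefit of exhibiting the separation already among regular languages. Your reduction of all the real work to Propositions~\ref{fullqzero} and~\ref{proposition_irregularity} matches the paper's intent, which states this proposition as an immediate consequence of those facts.
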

\noindent
Note that, since qualitative automata enjoy the Rabin's regularity property, our main Theorem \ref{main_result_1} implies the decidability of the emptiness problem of qualitative automata.

Thirdly, the proof of the decidability of qualitative automata uses known results from finite game-theory, namely the positional determinacy of $2\frac{1}{2}$-player turn-based parity games, and graph algorithms to solve such games. Instead, since subzero automata do not enjoy the Rabin's regularity property, our proof method is entirely different and is based on the design of a deductive system. Lastly, while the classes of zero-automata and subzero-automata have been introduced to solve the decision problem for a logic ($\textnormal{weakMSO}+\forall^{=1}_\pi$),  the class of qualitative automata does not seem to be connected to a logical theory.


\section{Technical background}

\paragraph*{Multisets.}
A multiset over a set $Q$ of elements is formally a function $w\!:\!Q\rightarrow\mathbb{N}$. We will only consider multisets over finite sets $Q$. We will use intuitive brackets notations with repetitions to denote multisets. For example, $w\!=\!\{q_1, q_1, q_2\}$ is the multiset $w$ over $\{q_1,q_2,q_3\}$ defined by $w(q_1)\!=\!2$, $w(q_2)\!=\!1$ and $w(q_3)\!=\!0$. We denote with $\subseteq$ the pointwise order on $Q\rightarrow\mathbb{N}$ and we say that $w$ is a sub(multi)set of $w^\prime$ if $w\subseteq w^\prime$. The meet (infimum) operation on the lattice  $Q\rightarrow\mathbb{N}$ is denoted by $\sqcap$. The length of $w$, denoted by $|w|$ is the sum of all multiplicities of $w$, i.e., $\displaystyle |w|=\sum_{q\in Q} w(q)$.

\begin{definition}[Maximal submultiset]\label{def:submultiset}
For a fixed finite set $Q$ and multisets $w$ and $u$, we say that $v$ is the maximal sub(multi)set of $w$ restricted to $u$, if $v=w\sqcap u$.
\end{definition}

\paragraph*{Example.} If $Q=\{q,p,r\}$ and $w=\{q,q,q,p,p,r\}$ and $u=\{ q,q,q,q,q, p\}$, then $w\sqcap u$ is equal to $\{ q,q,q, p \}$.

\paragraph*{Tree notation.} The tree automata we consider in this paper define sets of infinite binary trees with labels from a certain given alphabet. We identify a node in a  tree with a  sequence $x\in 2^*$, with $2$ denoting the set of directions $\set{0,1}$. We write $\trees \Sigma$ for the set of trees labelled by $\Sigma$, i.e.~the set of all functions $2^* \to \Sigma$. In the proofs, we will also talk about \emph{partial trees}, where the set of nodes in the domain in not necessarily all of $2^*$, but some prefix-closed subset thereof. We use standard terminology for trees: node, root, left child, right child, ancestor and descendant. In case of partial trees, we can also talk about leaves, which are nodes without any children. The following definition is standard. 
\begin{definition}\label{regular_tree_def}
A tree $t\in \trees \Sigma$ is \emph{regular} if it is representable as the infinite unfolding of a finite directed graph whose nodes are labeled by $\Sigma$. 
\end{definition}

\noindent
Equivalently, a tree $t$ is regular if and only if up-to isomorphism it has only finitely many subtrees.

\paragraph*{Probability measure over paths.} A path in the infinite binary tree can be identified with an infinite sequence in $2^\omega$ which can be also viewed as an infinite prefix-closed set of nodes that is totally ordered by the prefix relation. Given $w\!\in\!2^*$ and $\pi\!\in\! 2^\omega$ we write $w\!\leq\! \pi$ if $w$ is a finite prefix of $\pi$. The set $2^\omega$, endowed with the topology generated by the basic clopen sets $U_{w}\!=\!\{ x\in 2^{\omega} \mid w\leq x\}$, for $w\!\in\! 2^*$, is homeomorphic to the Cantor space.  We consider the \emph{coin-flipping}  complete Borel measure $\mu$ on $2^{\omega}$ uniquely determined by the assignment $\mu(U_w)\!=\!\frac{1}{2^{|w|}}$ on the basic clopen sets, where $|w|$ denotes the length of $w$. The measure $\mu$ is also known as the Lebesgue or the uniform measure. 
 Intuitively $\mu$ models the stochastic process of generation of an infinite path in the full binary tree by a sequence of coin tosses. A $\mu$-measurable subset $A\!\subseteq\!2^{\omega}$ is called $\emph{null}$ or $\emph{negligible}$ if it has measure $0$, i.e., if $\mu(A)\!=\!0$. See, e.g.,  \cite{Kechris} for a reference on the subject. 
\section{Subzero Automata}
\label{sec:zero-automata} 
In this section we define a class of tree automata generalizing ordinary nondeterministic parity automata,  called \emph{subzero automata}, which itself is a simplification of the class of \emph{zero automata} introduced in  \cite{bojanzero2016}. 
We assume some familiarity with tree automata over infinite trees as in, e.g., \cite{Thomas1997}.


\begin{definition}\label{def:zero-automata}
	A \emph{\zero} automaton consists of a tuple
	\begin{align*}
		\underbrace{Q}_{\text{states}} \qquad \underbrace{\Sigma}_{\text{input alphabet}}  \qquad \underbrace{\delta \subseteq Q \times \Sigma \times Q \times Q}_{\text{transition relation}} \qquad \underbrace{\leq \ \subseteq Q\times Q}_{\text{total order on }Q}
			\end{align*}
	with all components finite, together with two sets of states: 
$	\underbrace{\Qall\subseteq Q}_{\text{all states}},
\
	\underbrace{\Qzero\subseteq Q}_{\text{zero states}}$.
\end{definition}

The first part of the definition matches that of ordinary 
nondeterministic parity tree automata \footnote{\label{foot_1}In an ordinary nondeterministic parity tree automaton to each state is assigned a  priority. Every nondeterministic parity automaton can be transformed into an equivalent parity tree automaton such that to each state is assigned a {\em unique} priority. This unique priority determines a total order on states. We decided to use a total order in Definition \ref{def:zero-automata}, 
because this simplifies notationally our main proof --- we avoid an induction over a finite partial order in favor of an induction over a finite total order.}. The total order $\leq$ on states indicates the priority of states, with  $q_1\leq q_2$ meaning that the state $q_2$ has higher priority than $q_1$. The new aspect of the above definition is the presence of two sets of states $\Qall$ and $\Qzero$, which determine two different conditions that a run must satisfy to be accepting. The following notion of run in a $\zero$ automaton corresponds to the usual one of nondeterministic tree automata and is entirely standard.
\
\begin{definition}[Runs]
A run of a $\zero$ automaton on a tree $t\!\in\!\trees \Sigma$ is 
a labeling $\rho\!\in\! \trees Q$ of the full binary tree with states, which is consistent with the transition relation, that is, if some node $x\!\in\! 2^*$ is labeled with $q$ and has left and right children labeled by $q_0$ and $q_1$, respectively, then the automaton has a transition of the form $(q,a,(q_0,q_1))$, where $a$ is the letter labeling the vertex $x$ in $t$. 
\end{definition}
\begin{definition}[Maximal state]
Given an infinite branch $\pi$ in a run we write $\maxinf(\pi)$ for the maximal (in the order $\leq$) state appearing infinitely often in the branch.
\end{definition}
 The following is the crucial definition regarding \zero\ automata: 
 \begin{definition}[Accepting run]\label{def:accepting:run}
 A run $\rho$ is \emph{accepting} if the following two conditions hold:
\begin{enumerate}
\item $\forall \pi. ( \maxinf(\pi)\in \Qall)$, i.e., for all infinite branches $\pi$ in $\rho$, it holds that $\maxinf(\pi)\in \Qall$, 
\item $ \mu(\{\pi \mid \maxinf(\pi)\in \Qzero\}) = 0$, i.e., the probability of the set of branches $\pi$ in $\rho$, such that $\maxinf(\pi)\in \Qzero$, is $0$.
\end{enumerate}
\end{definition}
Hence a run is accepting if all of its branches satisfy the the $\Qall$ condition and only a negligible set of paths satisfies the $\Qzero$ condition.

 \begin{definition}[Acceptance of \zero\ automata]\label{def_acceptance}
A tree $t\!\in\! \trees \Sigma$ is accepted from a state $q\!\in\! Q$ of the automaton if there exists an accepting run $\rho\!\in\! \trees Q$ with the  root labeled by $q$.
\end{definition}

\section{Examples} \label{section:examples}

In this section we illustrate the notion of $\zero$ automata with a few illustrative  examples. 

\begin{example} A $\zero$ automaton with $\Qzero\!=\!\emptyset$ is just an ordinary  nondeterministic parity automaton. Indeed the second acceptance condition in Definition \ref{def_acceptance} trivializes in this case, and the set $\Qall$ can be seen as the collection of states having even priority.
\end{example}
\begin{remark*}
The above example shows that $\zero$ automata can define all regular sets of trees. This shows a difference between $\zero$ automata and qualitative automata of \cite{DBLP:journals/tocl/CarayolHS14} as the latter class can not define all regular languages (Proposition 20 in \cite{DBLP:journals/tocl/CarayolHS14}).
\end{remark*}

\begin{example} Consider the following $\zero$ automaton
\begin{align*}
		Q=\{q,\bot\} \qquad \Sigma=\{a,b\}  \qquad \delta=\{ (q,a,\bot,\bot), (q,b,q,q), (\bot,a,\bot,\bot),  (\bot,b,\bot,\bot)\}
					\end{align*}
with $\Qall\!=\!Q$ and $\Qzero= \{ q \}$. This is a deterministic automaton where $\bot$ is a sink state. Since $\Qall\!=\!Q$, the first acceptance condition in Definition \ref{def_acceptance} trivializes. Thus the language $L\!\subseteq\!\trees{\{a,b\}}$ accepted by this automaton consists of those trees $t$ such the set of branches in $t$ having only $b$'s has probability $0$.
\end{example}

The language $L$ is an interesting example of a non-regular set (Theorem 21 in \cite{DBLP:journals/tocl/CarayolHS14}) definable by a \emph{qualitative automaton} of \cite{DBLP:journals/tocl/CarayolHS14}. In fact, the above example can be generalized to a complete characterization of languages defined  by \emph{qualitative automata} in terms of \zero\ automata:
\begin{proposition}\label{fullqzero}
Qualitative tree automata of \cite{DBLP:journals/tocl/CarayolHS14} define the same languages as  \zero\ automata such that $\Qall\!=\! Q$.
\end{proposition}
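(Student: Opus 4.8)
The plan is to prove the two inclusions separately, in each direction by a translation of automata that leaves the state set, input alphabet and transition relation untouched and only reinterprets the acceptance data. This is possible because a \zero\ automaton with $\Qall = Q$ has a vacuously satisfied first acceptance condition, so a run $\rho$ is accepting exactly when $\mu(\{\pi \mid \maxinf(\pi) \in \Qzero\}) = 0$; and this is precisely the ``almost surely parity'' acceptance condition of qualitative automata, once $\Qzero$ is identified with the set of odd-priority states. The single combinatorial fact underlying both directions is the following. Suppose $\le$ is a total order on $Q$ and $\Omega \colon Q \to \N$ a priority function with $\Omega(q) < \Omega(q') \implies q < q'$, and put $\Qzero = \{ q \mid \Omega(q) \text{ odd} \}$. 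Then along every branch $\pi$ of every run, the $\le$-maximal state recurring infinitely often also attains the maximal $\Omega$-value among all states recurring infinitely often on $\pi$ --- for if an infinitely recurring $q$ had $\Omega(q) > \Omega(\maxinf(\pi))$, then $q > \maxinf(\pi)$, contradicting maximality. Hence $\maxinf(\pi) \in \Qzero$ if and only if the parity condition given by $\Omega$ fails on $\pi$; ties in $\Omega$ are harmless, as they do not affect parities.

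\emph{From qualitative automata to subzero automata.} Given a qualitative automaton with states $Q$, transition relation $\delta$, priority function $\Omega$ and initial state $q_0$, build the \zero\ automaton with the same $Q$, $\Sigma$ and $\delta$, with $\Qall = Q$, with $\Qzero = \{ q \mid \Omega(q) \text{ odd} \}$, and with $\le$ any total order satisfying $\Omega(q) < \Omega(q') \implies q < q'$ (such an order exists: sort states primarily by $\Omega$-value, breaking ties arbitrarily). By the fact above, a run is accepting for the qualitative automaton, i.e.\ almost every branch satisfies $\Omega$, if and only if $\mu(\{\pi \mid \maxinf(\pi) \in \Qzero\}) = 0$, i.e.\ if and only if it is accepting for the \zero\ automaton. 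Taking $q_0$ as the root state, the two automata accept the same language.

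\emph{From subzero automata to qualitative automata.} Conversely, given a \zero\ automaton with $\Qall = Q$, total order $q_1 < q_2 < \dots < q_n$ and zero-set $\Qzero$, define $\Omega(q_i) = 2i$ if $q_i \notin \Qzero$ and $\Omega(q_i) = 2i+1$ if $q_i \in \Qzero$. Then $\Omega$ is injective, and $i < j$ implies $\Omega(q_i) < \Omega(q_j)$, so the given order is exactly the order induced by $\Omega$ and $\Qzero = \{ q \mid \Omega(q) \text{ odd} \}$. The qualitative automaton with the same $Q$, $\Sigma$, $\delta$ and priority function $\Omega$ (and the same initial state) then has, again by the fact above, exactly the same accepting runs, hence the same language.

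I do not expect a serious obstacle here: once the two acceptance conditions are aligned the argument is essentially bookkeeping. The only points requiring care are reconciliations with the precise conventions of \cite{DBLP:journals/tocl/CarayolHS14} --- whether priorities are attached to states or to transitions (the latter are absorbed into the state by a standard product construction), and whether the probabilistic clause reads ``the set of branches satisfying the parity condition has measure $1$'' or ``its complement is null'', which coincide because the set $\{\pi \mid \maxinf(\pi) \in \Qzero\}$ is $\omega$-regular in the fixed branch labelling, hence Borel, hence $\mu$-measurable; this measurability is in any case already presupposed by Definition~\ref{def:accepting:run}. Finally, if the qualitative model of \cite{DBLP:journals/tocl/CarayolHS14} is phrased with a B\"uchi rather than a parity condition, it is the special case $\Omega \colon Q \to \{1,2\}$ and the translations above specialise without change.
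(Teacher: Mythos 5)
The paper states Proposition~\ref{fullqzero} without proof, so there is nothing to compare your argument against; it has to be judged on its own. Your central observation is correct and is surely the intended one: when $\Qall=Q$ the first clause of Definition~\ref{def:accepting:run} is vacuous, and for any priority function $\Omega$ and any total order $\leq$ refining the $\Omega$-preorder, the $\leq$-maximal state recurring on a branch realizes the maximal recurring priority, so $\maxinf(\pi)\in\Qzero$ iff the parity condition fails on $\pi$. Both of your translations preserve $Q$, $\Sigma$, $\delta$ and hence runs, so acceptance and therefore languages coincide. The measurability remark is fine (the relevant branch sets are Borel), and the initial-state bookkeeping is harmless.

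The one place where the argument is thinner than you suggest is the closing sentence. The qualitative automata of \cite{DBLP:journals/tocl/CarayolHS14} are, in their primary formulation, \emph{B\"uchi} automata: a run is accepting when the set of branches visiting the accepting set infinitely often has measure $1$. Under that convention your first direction (qualitative $\to$ subzero) indeed specialises without change, since a B\"uchi condition is a two-priority parity condition. But the second direction does \emph{not}: a subzero automaton with $\Qall=Q$ and an arbitrary order and $\Qzero$ encodes an almost-sure \emph{parity} condition of arbitrary index, and turning that into a qualitative B\"uchi automaton is not a reinterpretation of the acceptance data on the same transition structure --- it needs the collapse of the qualitative parity hierarchy to the B\"uchi level (a result one must either import from \cite{DBLP:journals/tocl/CarayolHS14} or prove, e.g.\ by a nondeterministic guess-and-verify construction that does change the state space). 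So either state the proposition relative to qualitative \emph{parity} automata, in which case your proof is complete, or cite the index-collapse explicitly; as written, ``the translations above specialise without change'' is wrong for the subzero-to-qualitative direction in the B\"uchi setting.
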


Hence \zero\ automata generalize both ordinary parity nondeterministic automata and qualitative tree automata. 

Our last example is slightly more involved and will show that there exists a \zero\ automaton $\mathcal{A}$ accepting a nonempty set of trees but not accepting any regular tree (Proposition \ref{proposition_irregularity} in Introduction and proof of Theorem 21 in \cite{DBLP:journals/tocl/CarayolHS14}). We first provide the definition of the language $L_3\!\subseteq\!\trees{\{a,b\}}$ accepted by $\mathcal{A}$ and only after describe the structure of $\mathcal{A}$.

\begin{definition}
Let $L_1\subseteq \trees{\{a,b\}}$ be the set of trees over the alphabet $\Sigma=\{a,b\}$ such that from every vertex $x$ it is possible to reach a descendant vertex $y$ labeled by the letter $a$, or as an MSO formula:
$$L_1 = \{ t \mid \forall x. \exists y . \big(x\leq y \wedge a(y) \big) \}$$

Let $L_2\subseteq \trees{\{a,b\}}$ be the set of trees such that the the set of infinite paths having infinitely many occurrences of the letter $a$ has probability $0$:
$$L_2 = \big\{ t \mid \mu\big(\{ \pi \mid  \textnormal{$\pi$ has infinitely many $a$'s}\}\big)=0 \big\}$$

Lastly, let $L_3=L_1\cap L_2$.
\end{definition}

\begin{proposition}[\cite{DBLP:journals/tocl/CarayolHS14}]
The following assertions hold:
\begin{enumerate}
\item the language $L_1$ is regular, 
\item the language $L_2$ is not regular, 
\item the language $L_3$ is not regular 
and does not contain any regular tree.
\end{enumerate}
\label{prop:serre}
\end{proposition}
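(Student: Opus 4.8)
Part (1) is immediate: the set $L_1$ is defined in the statement by a formula of monadic second-order logic, so it is regular by Rabin's theorem \cite{Rabin69}. (Equivalently, its complement ``some node carries an entirely $b$-labelled subtree'' is recognised by a nondeterministic B\"uchi tree automaton.) The plan for the rest is to prove directly that $L_3$ contains no regular tree, then to exhibit one (necessarily non-regular) tree in $L_3$, and finally to read off parts (2) and (3).

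For the ``no regular tree'' statement I would argue by contradiction. Suppose $t$ is regular and $t\in L_3 = L_1\cap L_2$. Being regular, $t$ has, up to isomorphism, only finitely many subtrees; and since $t\in L_1$, each of those subtrees is again in $L_1$ (an $a$-descendant of a node of the subtree is witnessed inside the subtree). Hence for each subtree-type $\tau$ the least depth $d(\tau)$ at which $a$ occurs in a tree of type $\tau$ is finite, so $N:=\max_\tau d(\tau)$ is a finite bound: \emph{every} node of $t$ has an $a$-labelled descendant at depth at most $N$ below it. Now take a $\mu$-random branch $\pi$ and split it into blocks of $N$ consecutive levels. Conditioned on the initial part of $\pi$ up to level $jN$, the node $v=\pi_{jN}$ is determined and has an $a$-descendant within $N$ levels, so the conditional probability that $\pi$ itself passes through an $a$-node in block $j$ is at least $2^{-N}$. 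A routine conditioning estimate then gives $\mu(\{\pi:\pi\text{ meets no }a\text{ at levels}\ge jN\})\le(1-2^{-N})^{m}$ for every $m$, hence this set is null for each $j$, hence $\mu(\{\pi:\pi\text{ has only finitely many }a\text{'s}\})=0$, i.e.\ $\mu(\{\pi:\pi\text{ has infinitely many }a\text{'s}\})=1$. This contradicts $t\in L_2$, so $L_3$ has no regular member.

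For nonemptiness of $L_3$ I would exhibit a concrete tree. Fix depths $0=D_0<D_1<D_2<\cdots$ with $D_{k+1}-D_k\to\infty$ (say $D_k=k^2$), and let $t^\ast\in\trees{\{a,b\}}$ be the tree in which the nodes labelled $a$ are exactly those of the form $v\cdot 0\cdot 1^{\,D_{k+1}-D_k-1}$, where $v$ ranges over nodes at depth $D_k$, every other node being labelled $b$. Every node has an $a$-descendant (descend to the next depth of the form $D_k$ and follow the prescribed pattern), so $t^\ast\in L_1$; and a $\mu$-random branch meets the $a$ available ``in block $k$'' with probability exactly $2^{-(D_{k+1}-D_k)}$, the events for distinct $k$ depending on disjoint blocks of coin tosses and hence being independent, so $\sum_k 2^{-(D_{k+1}-D_k)}<\infty$ and the Borel--Cantelli lemma gives that $\pi$ meets $a$ only finitely often almost surely, i.e.\ $t^\ast\in L_2$. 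Thus $t^\ast\in L_3$. Now part (3) follows: $L_3$ is nonempty and contains no regular tree, so by Rabin's regularity theorem \cite{Rabin69} it is not regular. And part (2) follows too: if $L_2$ were regular then $L_3=L_1\cap L_2$ would be regular (intersections of regular tree languages are regular and $L_1$ is regular by part (1)), contradicting what was just shown.

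The step I expect to be the main obstacle is the ``no regular tree'' argument: extracting the uniform depth bound $N$ from regularity together with the $L_1$-condition, and turning it, via the conditioning estimate, into the fact that a $\mu$-random branch is \emph{forced} to see $a$ infinitely often. The remainder — the construction of $t^\ast$ and the deductions of (2) and (3) — is routine bookkeeping.
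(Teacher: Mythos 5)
Your proof is correct and follows essentially the same route as the paper: MSO-definability for (1), an explicit block construction plus Borel--Cantelli for nonemptiness of $L_3$, the observation that regularity together with membership in $L_1$ forces a random branch to hit $a$ infinitely often almost surely, and then Rabin's regularity theorem plus closure of regular languages under intersection to conclude (2) and (3). The only divergence is that where the paper views the finite graph representing the regular tree as a Markov chain and cites elementary Markov-chain results, you prove the needed fact directly by extracting a uniform depth bound $N$ from the finitely-many-subtrees characterization and running the conditioning estimate $(1-2^{-N})^m\to 0$ by hand --- a self-contained unpacking of the same idea.
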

\begin{proof}
Clearly $L_1$ is a regular language as it is defined by the simple MSO formula provided above.

We will now show that $L_3$ is not regular. This will immediately imply that $L_2$ is not regular as well, because otherwise $L_3\!=\!L_1\cap L_2$ would also be regular since regular languages are closed under finite intersections.

To show that $L_3$ is not regular, by Rabin's regularity theorem, it suffices to prove that it is nonempty but it does not contain any regular tree. 

\begin{claim}
$L_3$ is not empty.
\end{claim}
\begin{proof}
In order to verify $L_3$ is not empty we construct a concrete tree $t\!\in\trees{\{a,b\}}$ in $L_3$. To do this, fix any mapping $f:\mathbb{N}\to\mathbb{N}$ such that $f(0)\!=\!0$ and for all $n>0$ holds $f(n) > n + \sum^{n-1}_{i=0} f(i) $. 
We say that a vertex $x\!\in\!\{0,1\}^*$ of the full binary tree belongs to the block $n$-th if its depth $|x|$ is such that $f(n)\leq |x| < f(n+1)$. Each block can be seen as a forest of finite trees (see Figure \ref{fig:l3_witness}) of depth $f(n+1)-f(n)$.
We now describe the tree $t$. For each $n$, all nodes of the $n$-th block are labeled by $b$ {\bf except} the leftmost vertices of each (finite) tree in the block (seen as a forest). Figure \ref{fig:l3_witness} illustrates this idea. Clearly $t$ is in $L_1$.


Let $E_{n}$ be the random event (on the space of infinite branches of the full binary tree) of a path having the $f(n+1)$-th vertex labeled by $a$. Then, by construction of $t$, the probability of $E_{n}$ is exactly $\frac{1}{2^{f(n+1) - f(n)}}$.

 
This implies that $\mu(E_0)+\mu(E_1)+\ldots \leq \sum_{n=0}^\infty \frac{1}{2^{f(n+1) - f(n)}}\leq \frac{1}{2} + \ldots + \frac{1}{2^n} \leq 1$. The Borel-Cantelli lemma implies that the probability of infinitely many events $E_n$ happening is $0$. Hence the probability of the set of paths having infinitely many $a$'s is $0$. Therefore $t\!\in\! L_2$ and thus $t\!\in\! L_3$.
\end{proof}


\def\prs{\tikz[scale=.65, every node/.style={scale=0.65}, baseline=1ex,shorten >=.1pt,node distance=1.8cm,on grid,semithick,auto,
every state/.style={fill=white,draw=black,circular drop shadow,inner sep=0mm,text=black},
accepting/.style ={fill=gray,text=white}]{
\node[state] (b) {$b$};
\begin{scope}[scale=.55, every node/.style={scale=0.55}, baseline=1ex,shorten >=.1pt,node distance=1.8cm, semithick,auto,
every state/.style={fill=white,draw=black,circular drop shadow,inner sep=0mm,text=black},
accepting/.style ={fill=gray,text=white}]
\node[state] (a1) [below left=1.7cm and 2.8cm  of b] {$a$};
\node[state] (b1) [below right=1.7cm and 2.8cm  of b] {$b$};
\node[state] (b2) [below left=1.7cm and 1.4cm  of b] {$b$};
\node[draw=none] (b3) [below right=1.7cm and 1.4cm  of b] {\ldots};
\node[draw=none] (b4) [below right=1.7cm and 0cm  of b] {\ldots};
\end{scope}
\draw [black,decorate,decoration=snake] (b) -- (a1);
\draw [black,decorate,decoration=snake] (b) -- (b1); 
\draw [black,decorate,decoration=snake] (b) -- (b2);
\draw [black,decorate,decoration=snake] (b) -- (b3);
\draw [black,decorate,decoration=snake] (b) -- (b4);
\sqsone; 
\sqstwo;
}
}
\def\sqsone{
\begin{scope}[scale=.45, every node/.style={scale=0.45}, baseline=1ex,shorten >=.1pt,node distance=1.8cm, semithick,auto,
every state/.style={fill=white,draw=black,circular drop shadow,inner sep=0mm,text=black},
accepting/.style ={fill=gray,text=white}]
\node[state] (a11) [below left=1.3cm and 2.2cm  of a1] {$a$};
\node[state] (b11) [below right=1.3cm and 2.2cm  of a1] {$b$};
\node[state] (b12) [below left=1.3cm and 1.1cm  of a1] {$b$};
\node[draw=none] (b13) [below right=1.3cm and 1.1cm  of a1] {\ldots};
\node[draw=none] (b14) [below right=1.3cm and 0cm  of a1] {\ldots};
\draw [black,decorate,decoration=snake] (a1) -- (a11);
\draw [black,decorate,decoration=snake] (a1) -- (b11); 
\draw [black,decorate,decoration=snake] (a1) -- (b12);
\draw [black,decorate,decoration=snake] (a1) -- (b13);
\draw [black,decorate,decoration=snake] (a1) -- (b14);
\end{scope}
}
\def\sqstwo{
\begin{scope}[scale=.45, every node/.style={scale=0.45}, baseline=1ex,shorten >=.1pt,node distance=1.8cm, semithick,auto,
every state/.style={fill=white,draw=black,circular drop shadow,inner sep=0mm,text=black},
accepting/.style ={fill=gray,text=white}]
\node[state] (a11) [below left=1.3cm and 2.2cm  of b1] {$a$};
\node[state] (b11) [below right=1.3cm and 2.2cm  of b1] {$b$};
\node[state] (b12) [below left=1.3cm and 1.1cm  of b1] {$b$};
\node[draw=none] (b13) [below right=1.3cm and 1.1cm  of b1] {\ldots};
\node[draw=none] (b14) [below right=1.3cm and 0cm  of b1] {\ldots};
\draw [black,decorate,decoration=snake] (b1) -- (a11);
\draw [black,decorate,decoration=snake] (b1) -- (b11); 
\draw [black,decorate,decoration=snake] (b1) -- (b12);
\draw [black,decorate,decoration=snake] (b1) -- (b13);
\draw [black,decorate,decoration=snake] (b1) -- (b14);
\end{scope}
}
\def\sqsthree{
\begin{scope}[scale=.35, every node/.style={scale=0.35}, baseline=1ex,shorten >=.1pt,node distance=1.8cm, semithick,auto,
every state/.style={fill=white,draw=black,circular drop shadow,inner sep=0mm,text=black},
accepting/.style ={fill=gray,text=white}]
\node[state] (q3) [below left=1.35cm and 0.4cm of q2r] {$q_1$};
\node[state] (s3) [below right=1.35cm and 0.4cm of q2r] {$q_2$};
\draw [black] (q2r) -- (q3);
\draw [black] (q2r) -- (s3); 
\end{scope}
}
\def\sqsfour{
\begin{scope}[scale=.35, every node/.style={scale=0.35}, baseline=1ex,shorten >=.1pt,node distance=1.8cm, semithick,auto,
every state/.style={fill=white,draw=black,circular drop shadow,inner sep=0mm,text=black},
accepting/.style ={fill=gray,text=white}]
\node[state] (q3) [below left=1.35cm and 0.4cm of p2] {$q_1$};
\node[state] (s3) [below right=1.35cm and 0.4cm of p2] {$q_3$};
\draw [black] (p2) -- (q3);
\draw [black] (p2) -- (s3); 
\end{scope}
}
\def\sqsfive{
\begin{scope}[scale=.35, every node/.style={scale=0.35}, baseline=1ex,shorten >=.1pt,node distance=1.8cm, semithick,auto,
every state/.style={fill=white,draw=black,circular drop shadow,inner sep=0mm,text=black},
accepting/.style ={fill=gray,text=white}]
\node[state] (q3) [below left=1.35cm and 0.4cm of q2] {$q_1$};
\node[state] (s3) [below right=1.35cm and 0.4cm of q2] {$q_1$};
\draw [black] (q2) -- (q3);
\draw [black] (q2) -- (s3); 
\end{scope}
}
\def\sqssix{
\begin{scope}[scale=.15, every node/.style={scale=0.15}, baseline=1ex,shorten >=.1pt,node distance=1.8cm, semithick,auto,
every state/.style={fill=white,draw=black,circular drop shadow,inner sep=0mm,text=black},
accepting/.style ={fill=gray,text=white}]
\node[state] (q5) [below left=1cm and 0.4cm of p3] {$p$};
\node[state] (s5) [below right=1cm and 0.4cm of p3] {$p$};
\draw [black] (p3) -- (q5);
\draw [black] (p3) -- (s5); 
\end{scope}
\node[draw=none] (s6) [below left=0.4cm and 0.35cm of s5,draw = none] {\ldots};
}
\def\sqsseven{
\begin{scope}[scale=.08, every node/.style={scale=0.08}, baseline=1ex,shorten >=.1pt,node distance=1.8cm, semithick,auto,
every state/.style={fill=white,draw=black,circular drop shadow,inner sep=0mm,text=black},
accepting/.style ={fill=gray,text=white}]
\node[state] (q6) [below left=0.7cm and 0.3cm of s5] {$q$};
\draw [black] (s5) -- (q6);
\end{scope}
\node[draw=none] (s6) [below right=0.7cm and 0.3cm of s5,draw = none] {\ldots};
\draw [black] (s5) -- (s6); 
}
\begin{figure}[H]
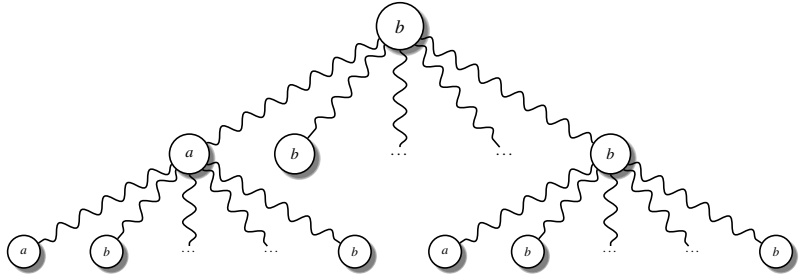

\centering
\prs
\caption{A prefix of a tree $t\in L_3$ up to the level $f(2)$.}
\label{fig:l3_witness}
\end{figure}

\begin{claim}
$L_3$ does not contain any regular tree $t$. 
\end{claim}

\begin{proof}
Indeed, let $G$ be the finite graph (where each vertex can reach exactly two vertices) representing $t$. We can view $G$ as a finite Markov chain where all edges have probability $\frac{1}{2}$. From the assumption that $t\!\in\!L_1$, we know that every vertex in $G$ can reach a vertex labeled $a$. By elementary results of Markov chains, a random infinite path in $G$ will almost surely visit infinitely many times states labeled by $a$ and this is a contradiction with the hypothesis that $t\!\in\! L_2$.
\end{proof}

\noindent
The proofs of the above two Claims finish the proof of the Proposition.
\end{proof}

Both $L_1$ and $L_2$ are easily definable by $\zero$-automata. A concrete and conveniently small $\zero$-automaton defining $L_3$ is presented below.

\begin{definition}
Let $\mathcal{A}$ be the $\zero$ automaton with $Q=\{\exists, R, \top\}$, transition relation defined as $\delta= \{(q,a,\top,\top), (q,b,\exists,R), (q,b,R,\exists) \mid q\in Q\}$, order on states $ \exists<R<\top $, $\Qall=\{ \top, R\}$ and $\Qzero=\{\top\}$.  
\end{definition}

Observe that the automaton is deterministic on reading the letter $a$ and nondeterministic on the letter $b$. Intuitively, the state $\top$ is reached exactly when the letter $a$ is read. When the letter $b$ is read, non-deterministically the automaton guesses which of the two children of the current vertex will lead to a further letter $a$ by labeling it with $\exists$ while the other child is labeled with $R$.

\begin{proposition}
The automaton $\mathcal{A}$ recognizes the language $L_3$.
\end{proposition}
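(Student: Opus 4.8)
Write $L(\mathcal A)$ for the set of trees accepted by $\mathcal A$ (from any state — the choice is immaterial, since $\maxinf$ ignores finite prefixes of branches, and every state has an outgoing transition on both letters). The plan is to prove the two inclusions $L(\mathcal A)\subseteq L_3$ and $L_3\subseteq L(\mathcal A)$ separately, after first unwinding what the two conditions of Definition~\ref{def:accepting:run} say for a run of $\mathcal A$. The key structural observation, true of \emph{every} run $\rho$ on a tree $t$, is: a non-root node $y$ carries $\top$ iff the parent of $y$ is labelled $a$ in $t$; if the parent of $y$ is labelled $b$ then $y$ carries $\exists$ or $R$, and moreover exactly one child of a $b$-node gets $\exists$ and the other gets $R$ (every $b$-transition has shape $(q,b,\exists,R)$ or $(q,b,R,\exists)$). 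Hence, for a branch $\pi$, $\maxinf(\pi)=\top$ iff $\pi$ meets infinitely many $a$-labelled nodes; since $\Qzero=\{\top\}$, the second acceptance condition for $\rho$ is \emph{equivalent} to $t\in L_2$, regardless of $\rho$. Also, $\maxinf(\pi)\notin\Qall=\{\top,R\}$ holds exactly when $\pi$ is eventually labelled only by $\exists$ in $\rho$.

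For $L(\mathcal A)\subseteq L_3$, I would fix an accepting run $\rho$ on $t$. The observation above immediately gives $t\in L_2$. To get $t\in L_1$, I would argue by contradiction: if some node $x_0$ had no $a$-labelled descendant, its whole subtree would be labelled only by $b$, so starting at $x_0$ one can repeatedly follow the $\exists$-child and obtain an infinite branch carrying $\exists$ from the first step on; then $\maxinf$ of that branch is $\exists\notin\Qall$, contradicting the first acceptance condition. Hence $t\in L_1\cap L_2=L_3$.

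For the converse $L_3\subseteq L(\mathcal A)$, given $t\in L_3$ I would construct an explicit accepting run. Since $t\in L_1$, the function $d(z)=$ ``distance from $z$ to the nearest descendant of $z$ labelled $a$'' is finite everywhere, with $d(z)=0$ when $t(z)=a$ and $d(z)=1+\min(d(z0),d(z1))$ when $t(z)=b$. Define $\rho$ by sending both children of each $a$-node to $\top$, and at each $b$-node sending the child of smaller $d$-value to $\exists$ and the other to $R$ (ties to the left); this is a legal run since all the required transitions exist for every state. Its second acceptance condition holds because that condition is equivalent to $t\in L_2$, which we assumed. For the first condition, suppose some branch $\pi$ were eventually labelled only by $\exists$; then from some depth on $\pi$ reads only $b$ and always moves to the $\exists$-child, so $d$ strictly decreases by $1$ along that tail — impossible over $\mathbb N$. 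Thus every branch has $\maxinf\in\{\top,R\}=\Qall$, the run is accepting, and $t\in L(\mathcal A)$.

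The transition-level bookkeeping and the equivalence ``condition~2 $\iff$ $t\in L_2$'' are routine. The one genuinely delicate point is the $\supseteq$ direction: choosing the $\exists/R$ split at $b$-nodes so that the first acceptance condition is satisfied. This is precisely what the distance function $d$ is for — it converts the requirement ``no infinite branch is eventually $\exists$-only'' into ``no infinite strictly decreasing sequence of natural numbers'', which is where $t\in L_1$ gets used in an essential way.
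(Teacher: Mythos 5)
Your proof is correct, and it is essentially a careful formalization of the argument the paper only sketches informally (the paper states this proposition without proof, offering just the remark that $\top$ is reached exactly when $a$ is read and that on $b$ the automaton guesses which child leads to a further $a$). Your distance function $d$ makes that guess precise and supplies the well-foundedness argument needed for the $\Qall$ condition, so nothing further is required.
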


\section{Decidability of the regular emptiness problem of \zero\ automata}
\label{sec:emptiness}

We define the \emph{regular emptiness problem} of \zero\ automata as follows.
\begin{definition}[Regular Emptiness Problem]
Given a $\zero$ automaton $\mathcal{A}$ decide if $\mathcal{A}$ accepts some regular tree (in the usual sense of Definition \ref{regular_tree_def}).
\end{definition}

The main result of this paper, stated as Theorem \ref{main_result_1} in the Introduction, is that the regular emptiness problem of \zero\ automata is decidable.
To prove Theorem \ref{main_result_1} we introduce in this section a deductive system whose rules depend on $\Aa$. We will then show that a regular tree is accepted by the automaton if and only if a certain assertion is derivable syntactically in the deductive system. Furthermore, we show that if a derivation exists, then a derivation of bounded depth (in the size of the automaton) exists. Therefore the derivation search space is finite and this implies that the regular emptiness problem of \zero\ automata is decidable.
We now proceed with some technical definitions needed to formulate the rules of the deductive system. We begin by introducing the notion of partial runs in \zero\ automata. Intuitively, these are like accepting runs (Definition \ref{def:accepting:run}) but can be partial trees and have leaves. In what follows we fix a generic \zero\ automaton.

\begin{definition}[Partial runs]\label{def:partial-run}
A  \emph{partial run with $n$ ports} is a partial binary tree labelled by states, with a partition of its leaves into nonempty sets $X_1,\ldots,X_n$, called \emph{ports}, subject to the following conditions: 
\begin{itemize}
	\item consistency with the transition function, i.e., if some node has state $q$ and its children have states $q_0,q_1$, then the automaton has a transition of the form $(q,a,(q_0,q_1))$, for some letter $a$;
	\item for every $X_i$, all leaves in $X_i$ are labelled by the same state which is called the \emph{type} of the port $X_i$ (for $i\neq j$ it may happen that two ports $X_i$ and $X_j$ have the same type).
	\item (all condition) every infinite path in the partial run has maxinf state in $\Qall$.
	\item (zero condition) the set of infinite paths having maxinf state in $\Qzero$ has probability $0$.
	\end{itemize}
\end{definition}

Note, by comparison with Definition \ref{def:accepting:run}, that every accepting run is also a partial run without any ports. Our proof system will manipulate statements about partial runs.

\begin{definition}[Profiles]
We define a \emph{profile}  to be an expression of the form
	$\profile p q   \{q_1, \cdots, q_n\}$,
where $p,q\in Q$ and $\{q_1,\dots, q_n\}$ is a multiset over $Q$. Hence, formally, a profile is a triple in $Q \times Q \times \N^Q $. 
\end{definition}
\def\prs{\tikz[scale=.50, every node/.style={scale=0.65}, baseline=1ex,shorten >=.1pt,node distance=1.5cm,on grid,semithick,auto,
every state/.style={fill=white,draw=black,circular drop shadow,inner sep=0mm,text=black},
accepting/.style ={fill=gray,text=white}]{
\node[state,accepting] (p) {$p$};
\begin{scope}[scale=.55, every node/.style={scale=0.55}, baseline=1ex,shorten >=.1pt,node distance=1.5cm, semithick,auto,
every state/.style={fill=white,draw=black,circular drop shadow,inner sep=0mm,text=black},
accepting/.style ={fill=gray,text=white}]
\node[state] (r) [below left=1.7cm and 0.7cm  of p] {$r$};
\node[state] (s) [below right=1.7cm and 0.7cm  of p] {$s$};
\end{scope}
\draw [black] (p) -- (r);
\draw [black] (p) -- (s); 
\sqsone; 
\rqp;
\node[draw=none] (portp) [below left=8cm and 1cm  of p] {$p$ port};
\node[draw=none] (portq) [below right=8cm and 0.5cm  of p] {$q$ port};
\node[draw=none] (portq1) [below right=8cm and 1.70cm  of p] {$q$ port};
\draw [red,dashed] (q2) -- (portq);
\draw [red,dashed] (q2r) -- (portq);
\draw [red,dashed] (q3) -- (portq);
\draw [red,dashed] (q4) -- (portq1);
\draw [red,dashed] (q5) -- (portq1);
\draw [red,dashed] (q6) -- (portq1);
\draw [red,dashed] (p2) -- (portp);
}
}
\def\sqsone{
\begin{scope}[scale=.45, every node/.style={scale=0.45}, baseline=1ex,shorten >=.1pt,node distance=1.8cm, semithick,auto,
every state/.style={fill=white,draw=black,circular drop shadow,inner sep=0mm,text=black},
accepting/.style ={fill=gray,text=white}]
\node[state] (q2r) [below left=1.5cm and 0.4cm of s] {$q$};
\node[state] (s2) [below right=1.5cm and 0.4cm of s] {$s$};
\draw [black] (s) -- (q2r);
\draw [black] (s) -- (s2); 
\sqstwo;
\sqsthree;
\sqsfour;
\sqsfive;
\end{scope}
}
\def\rqp{
\begin{scope}[scale=.45, every node/.style={scale=0.45}, baseline=1ex,shorten >=.1pt,node distance=1.8cm, semithick,auto,
every state/.style={fill=white,draw=black,circular drop shadow,inner sep=0mm,text=black},
accepting/.style ={fill=gray,text=white}]
\node[state] (q2) [below left=1.5cm and 0.4cm of r] {$q$};
\node[state] (p2) [below right=1.5cm and 0.4cm of r] {$p$};
\draw [black] (r) -- (q2);
\draw [black] (r) -- (p2); 
\end{scope}
}
\def\sqstwo{
\begin{scope}[scale=.35, every node/.style={scale=0.35}, baseline=1ex,shorten >=.1pt,node distance=1.8cm, semithick,auto,
every state/.style={fill=white,draw=black,circular drop shadow,inner sep=0mm,text=black},
accepting/.style ={fill=gray,text=white}]
\node[state] (q3) [below left=1.35cm and 0.4cm of s2] {$q$};
\node[state] (s3) [below right=1.35cm and 0.4cm of s2] {$s$};
\draw [black] (s2) -- (q3);
\draw [black] (s2) -- (s3); 
\end{scope}
}
\def\sqsthree{
\begin{scope}[scale=.25, every node/.style={scale=0.25}, baseline=1ex,shorten >=.1pt,node distance=1.8cm, semithick,auto,
every state/.style={fill=white,draw=black,circular drop shadow,inner sep=0mm,text=black},
accepting/.style ={fill=gray,text=white}]
\node[state] (q4) [below left=1.2cm and 0.4cm of s3] {$q$};
\node[state] (s4) [below right=1.2cm and 0.4cm of s3] {$s$};
\draw [black] (s3) -- (q4);
\draw [black] (s3) -- (s4); 
\end{scope}
}
\def\sqsfour{
\begin{scope}[scale=.15, every node/.style={scale=0.15}, baseline=1ex,shorten >=.1pt,node distance=1.8cm, semithick,auto,
every state/.style={fill=white,draw=black,circular drop shadow,inner sep=0mm,text=black},
accepting/.style ={fill=gray,text=white}]
\node[state] (q5) [below left=1cm and 0.4cm of s4] {$q$};
\node[state] (s5) [below right=1cm and 0.4cm of s4] {$s$};
\draw [black] (s4) -- (q5);
\draw [black] (s4) -- (s5); 
\end{scope}
}
\def\sqsfive{
\begin{scope}[scale=.08, every node/.style={scale=0.08}, baseline=1ex,shorten >=.1pt,node distance=1.8cm, semithick,auto,
every state/.style={fill=white,draw=black,circular drop shadow,inner sep=0mm,text=black},
accepting/.style ={fill=gray,text=white}]
\node[state] (q6) [below left=0.7cm and 0.3cm of s5] {$q$};
\draw [black] (s5) -- (q6);
\end{scope}
\node[draw=none] (s6) [below right=0.7cm and 0.3cm of s5,draw = none] {\ldots};
\draw [black] (s5) -- (s6); 
}
\vspace{-20pt}
\setlength{\intextsep}{25pt}%
\begin{figure}[H]
\centering
\prs
\caption{A partial run with root $p$, one port of type $p$ and two ports of type $q$.}
\label{figure_plug1}
\end{figure}

In what follows we reserve the letter $v,w$ to range over (possibly empty) multisets over $Q$ and simply write $\profile p q w$ for an arbitrary profile. We write $\max(w)$ to indicate the maximal state (with respect to the order $\leq$ of the automaton) in $w$.

\begin{definition}[Profile of a partial run]
We say that a partial run with $n$ ports $(X_1,\dots, X_n)$ has profile $\profile p q   \{q_1, \ldots , q_n\}$, if:
\begin{enumerate}
\item 
 $p$ is the state in the root, and
\item every leaf in the $i$-th port $X_i$ has type $q_i$, and
\item $q$ is the maximal state, according to the total order on states, that labels any inner state (i.e., not a leaf) of the partial run. 
\end{enumerate}
\end{definition}
Figure \ref{figure_plug1} illustrates the concept of a partial run on an example of a $\zero$ automaton with four states $\{p,q,r,s\}$. The run has profile $\profile p { \max(\{p,q,r,s\})} \{ p,q,q \}$ with one port of type $p$ and two ports of type $q$.
Note that every accepting run, which is a partial run without ports, with root labeled by $p$ has profile $\profile p {\max(Q)} \emptyset$.

\begin{definition}[Realizable partial run]\label{def:unsaturated-profile}
We say that a profile is \emph{realizable} if it is the profile of some partial run. 
\end{definition}

Accordingly, the \zero\ automaton $\mathcal{A}$ accepts some tree from a state $q_0\!\in\!Q$ if and only the profile $\profile {q_0} {\max(Q)} \emptyset$ is realizable.

\subsection{The Deductive System}

In this subsection we fix a given $\zero$ automaton $\mathcal{A}$ and define a deductive system to derive profiles from other profiles. The deductive system has one axiom rule (A), three unary derivation rules (WL), (SL) and (D), and one binary deduction rule (U), as listed in Figure \ref{fig:unsaturated-calculus}.

\begin{figure}[H]
\begin{center}

\AxiomC{}
\LeftLabel{Axiom (A):}
\RightLabel{\scriptsize if there is a transition $(p,a,q,r)$}
\UnaryInfC{$\profile p {p} {\{q, r\}}$}
\DisplayProof

\AxiomC{$\profile p p {\{p\}} \cup w$}
\LeftLabel{Weak Looping  (WL):}
\RightLabel{\scriptsize if $p\in \Qall\setminus\Qzero$}
\UnaryInfC{$\profile p p w$}
\DisplayProof

\AxiomC{$\profile p p {\{p\}}\cup w$}
\LeftLabel{Strong Looping  (SL):}
\RightLabel{\scriptsize if $p\in \Qall$ and  $w\neq \emptyset$}
\UnaryInfC{$\profile p p {w}$}
\DisplayProof

\AxiomC{$\profile p q {\{r\}\cup w}$}
\AxiomC{$\profile r {s} v$}
\LeftLabel{Unification (U):}
\BinaryInfC{$\profile p {\max(q,s,r)} {w\cup v}$}
\DisplayProof

\AxiomC{$\profile p q w\cup\{r,r\}$}
\LeftLabel{Deduplication (D):}
\UnaryInfC{$\profile p q w\cup\{r\}$}
\DisplayProof

\caption{Calculus for profiles. Variables $p,q,r,s$ range over $Q$ and variables $w,v$ range over multisets over $Q$. }
\label{fig:unsaturated-calculus}
\end{center}
\end{figure}

\vspace{-20pt}
\noindent
The crucial properties of the deductive system are formulated as the following theorem and corollary.
\begin{theorem}\label{soundness_and_completeness}The following assertions hold:
\begin{itemize}
\item[] Soundness: if a profile $\profile p q w$ is derivable in the deductive system then $\profile p q w$ is realizable by a regular tree.
\item[] Completeness and Boundedness: if a profile $\profile p q w$ is realizable by a regular tree then it is derivable in the deductive system with a derivation of size\footnote{The size of a derivation tree is defined as the number of its vertices.} smaller or equal than $f(q,|w|)$, where $f$ is a primitive recursive function $f\!:\!\mathbb{N}\times \mathbb{N}\rightarrow \mathbb{N}$ (we identify the linearly ordered set of states $Q$ with the corresponding initial segment of natural numbers $\{0,\ldots,|Q|-1\}$). 
\end{itemize}
\end{theorem}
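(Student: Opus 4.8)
The theorem has two halves that are proved independently, so I would split the proof into a \emph{soundness} part and a \emph{completeness-and-boundedness} part. For soundness, the plan is to proceed by induction on the structure of the derivation tree, showing for each rule that if the premises' profiles are realizable by regular trees then so is the conclusion's profile. For each rule one must exhibit the witnessing regular partial run, built by composition/plugging of the witnesses for the premises. The key technical device throughout is \emph{plugging}: given a partial run with a port of type $r$ and a (regular) partial run with root $r$, one substitutes the latter at every leaf of that port. One must check this preserves the two semantic conditions (the all-condition and the zero-condition) and produces a \emph{regular} tree; regularity is the reason the automaton-side statement is only about regular trees, and it is preserved because plugging finitely many finite-memory objects keeps the subtree set finite.

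\textbf{Soundness, rule by rule.} For the axiom (A), the single-node-plus-two-leaves partial tree with a transition $(p,a,q,r)$ has profile $\profile p p {\{q,r\}}$ and is trivially regular. For Unification (U), plug the witness of $\profile r s v$ into the $r$-port of the witness of $\profile p q {\{r\}\cup w}$; the new maximal inner state is $\max(q,s,r)$ (the $r$ is included because a leaf labelled $r$ becomes an inner node after plugging); the all- and zero-conditions are inherited because every new infinite path is either an old path of the first run or eventually follows a path of the second run, and the measure of the branches through the plugged port is the sum over those leaves of $2^{-\text{depth}}$ times the measure computed inside the second run — a null set union behaves well here. Deduplication (D) is just a relabelling of the port partition (merge two ports of type $r$ into one), which changes nothing semantically. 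The subtle cases are the two looping rules: from a witness of $\profile p p {\{p\}\cup w}$ with an $p$-port, one builds an infinite ``tower'' by repeatedly plugging the run into its own $p$-port. The resulting tree is regular (it is the unfolding of a finite graph obtained by wiring the $p$-port back to the root). The difference between (WL) and (SL) is exactly what happens on the single branch that stays in the $p$-ports forever: this branch has maxinf $=p$, which must lie in $\Qall$ in both cases; for the zero-condition, in rule (WL) we additionally need $p\notin\Qzero$ because that limit branch is a \emph{single} branch and need not be null, whereas in rule (SL) the hypothesis $w\neq\emptyset$ guarantees at every level a positive-probability ``escape'' into a non-$p$ port, so by Borel--Cantelli the set of branches staying forever in $p$-ports is null and $p\in\Qzero$ is harmless. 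I expect writing these two cases carefully — in particular the Borel--Cantelli computation for (SL) and the bookkeeping that the finitely many non-$p$ ports of $w$ survive the infinite tower with their types intact — to be the most delicate part of soundness.

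\textbf{Completeness and boundedness.} Here the plan is: given a regular partial run $\rho$ realizing $\profile p q w$, produce a derivation and bound its size by a primitive recursive $f(q,|w|)$. I would argue by induction on the pair $(q,|w|)$ ordered lexicographically (first on the priority $q$, then on the number of ports). The idea is to look at the top of $\rho$: since $q$ is the maximal inner state, consider the (regular, hence finite-memory) structure of nodes labelled $q$. Using regularity, one can find a finite initial piece of $\rho$ so that below it the profile has strictly smaller maximal priority, and the frontier of that piece consists of: the original ports of $w$, plus copies of nodes of type $q$. The top piece is itself assembled from axioms by (A), (U), and (D); then one reconstructs the $q$-labelled ``loops'' using (SL) or (WL) — (WL) when the loop contributes no extra branch mass and $q\notin\Qzero$, (SL) when there is genuine branching among $q$-ports — and finally glues in, by (U), the strictly-lower-priority subprofiles, to which the induction hypothesis applies with the same-or-smaller parameters. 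The size bound then satisfies a recurrence of the shape $f(q,k)\le g\big(|Q|,k,\ f(q,k-1),\ f(q-1,\cdot)\big)$ for an explicit elementary $g$, which unwinds to a primitive recursive function. The hard part will be making the decomposition of $\rho$ into ``top loops on priority $q$'' plus ``strictly lower tails'' both (i) effective/finitary using only regularity of $\rho$ and (ii) measure-faithful, so that the zero-condition localizes correctly to the pieces — this is where the distinction between (WL) and (SL) has to be read off from whether the set of branches confined to the $q$-loops is null, and it is the crux of the whole argument.
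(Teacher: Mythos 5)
Your soundness half is essentially the paper's own argument: induction on the derivation, plugging for (U) and (D), and the infinite tower for the looping rules, with the positive escape probability forcing nullity for (SL) and the side condition $p\notin\Qzero$ excusing (WL) from any measure obligation. (A minor imprecision: the set of branches confined forever to the $p$-ports of the tower is in general not a single branch, but your nullity argument covers the whole set, so nothing is lost.)

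The completeness half follows the paper's strategy in outline but has a genuine gap. Your decomposition --- a finite top derived by (A), (U), (D); strictly-lower-priority pieces by induction on $q$; $q$-loops closed by (WL) or (SL) --- does not cover the one case in which \emph{neither} looping rule is applicable: a tail with profile $\profile {p_i} q \emptyset$ (empty port multiset) where $q\in\Qall$ and $q\in\Qzero$. Rule (WL) is blocked because $q\in\Qzero$, and rule (SL) is blocked because the port multiset is empty. Your heuristic that the choice between (WL) and (SL) "is read off from whether the set of branches confined to the $q$-loops is null" does not help here: that set is indeed null (it must be, as $q\in\Qzero$), yet no rule certifies the resulting portless profile. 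The paper closes this case with the only genuine use of the regularity hypothesis: since the tail is a regular tree and the set of branches visiting $q$ infinitely often is null, the finite graph representing it, viewed as a Markov chain, must contain a vertex $r$ from which no $q$-labelled vertex is reachable --- otherwise almost every branch would see $q$ infinitely often, giving measure $1$. One then cuts the tail at $r$, derives $\profile {p_i} q {\{r\}}$ via the nonempty-port case (so (SL) applies), derives $\profilestrict r q \emptyset$ by the induction on $q$, and combines the two with (U). Without this step the induction does not close; note also that your stated use of regularity (to extract a finite initial piece) is not where regularity is actually needed, since the finite cut only requires that every port be nonempty. A secondary issue: a lexicographic induction on $(q,|w|)$ is not well-founded for this decomposition, because the sub-runs rooted at $q$-nodes can carry port multisets as large as $w$ (the paper's recursion even passes to $f(q-1,2|w|)$); the paper instead runs an inner induction on a finite "complexity of types", bounded by $2^{|w|+|Q|}$.
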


\noindent
The proof of Theorem \ref{soundness_and_completeness}, together with the definition of $f$, is presented in Sections \ref{sec:soundness} and \ref{sec:completness}.
\begin{corollary}
The regular emptiness problem of \zero\ automata is decidable. 
\end{corollary}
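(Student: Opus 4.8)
The plan is to reduce the regular emptiness problem to a terminating search in the calculus of Figure~\ref{fig:unsaturated-calculus}, using Theorem~\ref{soundness_and_completeness} as a black box. First I would make the reduction precise. A partial run without ports having profile $\profile{q_0}{\max(Q)}\emptyset$ is exactly an accepting run whose root is labelled $q_0$, and, as recorded just after Definition~\ref{def:unsaturated-profile}, $\mathcal{A}$ accepts some tree from $q_0$ iff $\profile{q_0}{\max(Q)}\emptyset$ is realizable; the same reasoning carried out with a regular witness gives that $\mathcal{A}$ accepts some \emph{regular} tree from $q_0$ iff $\profile{q_0}{\max(Q)}\emptyset$ is realizable by a regular tree (one direction is immediate; for the other one uses that the two acceptance conditions of Definition~\ref{def:accepting:run} constrain only the state-labelling $\rho\in\trees Q$, so a regular $\rho$ can be completed to a regular input tree by choosing the letter at each node as a fixed function of the triple of states there, and conversely a regular accepting state-labelling can be extracted). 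Since $Q$ is finite, the regular emptiness problem for $\mathcal{A}$ is thus equivalent to: is there $q_0\in Q$ such that the profile $\profile{q_0}{\max(Q)}\emptyset$ is realizable by a regular tree?

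By the soundness and the completeness parts of Theorem~\ref{soundness_and_completeness}, ``$\profile{q_0}{\max(Q)}\emptyset$ is realizable by a regular tree'' is equivalent to ``$\profile{q_0}{\max(Q)}\emptyset$ is derivable in the deductive system''. So it remains to decide derivability of a fixed profile, and here I would invoke the boundedness part of the theorem: if that profile is derivable at all, it has a derivation of size at most $f(q_0,0)$, which, $f$ being primitive recursive, is an explicitly computable number $N$. It therefore suffices to enumerate all derivations of size $\le N$ whose conclusion is $\profile{q_0}{\max(Q)}\emptyset$ and check whether this set is nonempty.

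To see that this search is finite and effective I would argue that, reading each rule of Figure~\ref{fig:unsaturated-calculus} from its conclusion to its premises, only finitely many instances are possible: (A) has no premises and a decidable side condition; (WL) and (SL) force the premise to be the conclusion with one extra copy of the root state, again under a decidable side condition; (D) only requires choosing which state of the conclusion's (finite) port multiset to duplicate; and (U) requires choosing a pivot state $r\in Q$, a splitting of the conclusion's finite port multiset into $w$ and $v$, and states $q,s$ with $\max(q,s,r)$ equal to the middle component of the conclusion --- finitely many options in each case. Moreover every rule, read upward, increases the total port-multiset length across the current frontier by exactly one, so along any derivation of size $\le N$ every profile that occurs is built from the finite set $Q$ with a port multiset of length $\le N$; there are only finitely many such profiles. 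Hence the proof-search tree rooted at $\profile{q_0}{\max(Q)}\emptyset$ is finitely branching and need only be explored to depth $N$, which is an effective, terminating procedure; running it for each $q_0\in Q$ decides the regular emptiness problem.

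The hard part is not in this argument at all but in Theorem~\ref{soundness_and_completeness} itself, proved in Sections~\ref{sec:soundness} and~\ref{sec:completness} --- especially its completeness-and-boundedness half, which must both decompose a regular witness tree along a derivation and bound the size of that decomposition by a primitive recursive function. Within the Corollary the only delicate point is the finiteness of the bounded proof search, which the previous paragraph settles; given Theorem~\ref{soundness_and_completeness}, the decidability of regular emptiness is then immediate.
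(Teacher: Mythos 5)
Your proposal is correct and follows essentially the same route as the paper's own proof: reduce regular emptiness to regular realizability of $\profile{q_0}{\max(Q)}\emptyset$, invoke the soundness, completeness and boundedness parts of Theorem~\ref{soundness_and_completeness}, and then observe that the bounded proof search is finite because each rule, read from conclusion to premises, admits only finitely many instances. You are in fact slightly more careful than the paper on two points it glosses over --- the passage between a regular accepting \emph{run} and a regular accepted \emph{tree}, and the explicit bound on the port-multiset lengths occurring in a size-$N$ derivation --- but these are refinements of the same argument, not a different one.
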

\begin{proof}
For a given initial state $q_0\!\in\! Q$, the automaton $\Aa$ accepts some tree if and only if there exists some accepting run starting from $q_0$, i.e., if there exists a realizable partial run having profile $\profile {q_0} {\max \{Q\}} \emptyset$. It follows from  Theorem \ref{soundness_and_completeness} that such a run exists if and only if there exists a valid derivation of $\profile {q_0} {\max \{Q\}} \emptyset$ having size at most $f(q,|w|)$. The number of derivations of size (i.e., number of nodes) at most $f(q,|w|)$ is finite. Indeed note by inspection of each of the six rules of the derivation system that, for each vertex $\profile {p} {q} w$ in a derivation the number of valid premises is finite. Therefore the existence of a derivation of $\profile {q_0} {\max \{Q\}} \emptyset$ of size at most $f(q,|w|)$ can be checked in finite time by enumerating all such derivations. 
\end{proof}

The upper bound provided by the primitive recursive function $f\!:\!\mathbb{N}\times \mathbb{N}\rightarrow \mathbb{N}$ is by no means tight, and better upper bounds might exist. The main goal of this work has been to establish the decidability of the regular emptiness problem. The analysis of its computational complexity is an interesting topic for future work.



\section{Proof of Soundness}
\label{sec:soundness}

We need to show that if a profile is derivable then it is realizable by a regular tree.
We prove this by induction on the complexity of the derivation tree.

\paragraph*{Case (A):}

The base case is given by profiles derived by application of the axiom rule (A).
%
%
In this case, the profile is of the form $\profile p p \{q,r\}$ for some transition $(p,a,q,r)$, with $a\!\in\! \Sigma$, of the automaton. Therefore the following tree  
is a regular partial run with profile  $\profile p p \{q,r\}$.

\def\prsa{\tikz[scale=.65, every node/.style={scale=0.65}, baseline=1ex,shorten >=.1pt,node distance=1.8cm,on grid,semithick,auto,
every state/.style={fill=white,draw=black,circular drop shadow,inner sep=0mm,text=black},
accepting/.style ={fill=gray,text=white}]{
\node[state,accepting] (p1) {$p$};
\begin{scope}[scale=.55, every node/.style={scale=0.55}, baseline=1ex,shorten >=.1pt,node distance=1.8cm, semithick,auto,
every state/.style={fill=white,draw=black,circular drop shadow,inner sep=0mm,text=black},
accepting/.style ={fill=gray,text=white}]
\node[state] (r) [below left=1.7cm and 1.4cm  of p1] {q};
\node[state] (s) [below right=1.7cm and 1.4cm  of p1] {r};
\end{scope}
\draw [black] (p) -- (r);
\draw [black] (p) -- (s); 
}
}
\begin{figure}[H]
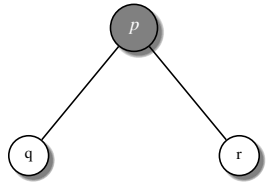

\centering
\prsa
\caption{The partial run corresponding to the axiom rule (A).}
\label{fig:axiom}
\end{figure}

\paragraph*{Case (WL):}

Assume the derivation ends with an application of the weak looping rule (WL).
\begin{center}
\AxiomC{$\profile p p \{p\}\cup w$}
\LeftLabel{Weak Looping  (WL):}
\RightLabel{\scriptsize if $p\in \Qall\setminus\Qzero$}
\UnaryInfC{$\profile p p w$}
\DisplayProof
\end{center}

By induction hypothesis, we can assume that the profile $\profile p p \{p\}\cup w$ is realizable by some regular partial run $\rho$. To obtain a realization of $\profile p p w$, we simply ``unfold'' or  ``loop'' in the partial run $\rho$ as illustrated in Figure \ref{fig:looping}. When viewing $\rho$ as a finite graph, this corresponds in adding a loop from all ports labeled by $p$ to the root $p$.

\def\prsa{\tikz[scale=.65, every node/.style={scale=0.65}, baseline=1ex,shorten >=.1pt,node distance=1.8cm,on grid,semithick,auto,
every state/.style={fill=white,draw=black,circular drop shadow,inner sep=0mm,text=black},
accepting/.style ={fill=gray,text=white}]{
\node[below left=1.4cm and 0.0cm of p,state] (p1) {$p$};
\begin{scope}[scale=.55, every node/.style={scale=0.55}, baseline=1ex,shorten >=.1pt,node distance=1.8cm, semithick,auto,
every state/.style={fill=white,draw=black,circular drop shadow,inner sep=0mm,text=black},
accepting/.style ={fill=gray,text=white}]
\node[draw=none] (r) [below left=1.7cm and 1.4cm  of p1] {};
\node[draw=none] (s) [below right=1.7cm and 1.4cm  of p1] {};
\end{scope}
\draw [black] (p) -- (r);
\draw [black] (p) -- (s); 
\node[draw=none] (q) [below left=1.4cm and 0.0cm of p1, draw = none] {};
\draw [black,decorate,decoration=snake] (p1) -- (q); 
}
}
\def\prsb{\tikz[scale=.65, every node/.style={scale=0.65}, baseline=1ex,shorten >=.1pt,node distance=1.8cm,on grid,semithick,auto,
every state/.style={fill=white,draw=black,circular drop shadow,inner sep=0mm,text=black},
accepting/.style ={fill=gray,text=white}]{
\node[state,accepting] (p) {$p$};
\begin{scope}[scale=.55, every node/.style={scale=0.55}, baseline=1ex,shorten >=.1pt,node distance=1.8cm, semithick,auto,
every state/.style={fill=white,draw=black,circular drop shadow,inner sep=0mm,text=black},
accepting/.style ={fill=gray,text=white}]
\node[draw=none] (r) [below left=1.7cm and 1.4cm  of p] {};
\node[draw=none] (s) [below right=1.7cm and 1.4cm  of p] {};
\end{scope}
\draw [black] (p) -- (r);
\draw [black] (p) -- (s); 
\node[draw=none] (q) [below left=1.4cm and 0.0cm of p, draw = none] {};
\draw [black,decorate,decoration=snake] (p) -- (q); 
\prsa
}
}
\def\prs{\tikz[scale=.65, every node/.style={scale=0.65}, baseline=1ex,shorten >=.1pt,node distance=1.8cm,on grid,semithick,auto,
every state/.style={fill=white,draw=black,circular drop shadow,inner sep=0mm,text=black},
accepting/.style ={fill=gray,text=white}]{
\node[state,accepting] (p) {$p$};
\begin{scope}[scale=.55, every node/.style={scale=0.55}, baseline=1ex,shorten >=.1pt,node distance=1.8cm, semithick,auto,
every state/.style={fill=white,draw=black,circular drop shadow,inner sep=0mm,text=black},
accepting/.style ={fill=gray,text=white}]
\node[draw=none] (r) [below left=1.7cm and 1.4cm  of p] {};
\node[draw=none] (s) [below right=1.7cm and 1.4cm  of p] {};
\end{scope}
\draw [black] (p) -- (r);
\draw [black] (p) -- (s); 
\node[draw=none] (q) [below left=1.4cm and 0.0cm of p, draw = none] {$p$};
\draw [black,decorate,decoration=snake] (p) -- (q); 
\begin{scope}[shift={(7cm,0)}]
\node[state,accepting] (p) {$p$};
\begin{scope}[scale=.55, every node/.style={scale=0.55}, baseline=1ex,shorten >=.1pt,node distance=1.8cm, semithick,auto,
every state/.style={fill=white,draw=black,circular drop shadow,inner sep=0mm,text=black},
accepting/.style ={fill=gray,text=white}]
\node[draw=none] (r) [below left=1.7cm and 1.4cm  of p] {};
\node[draw=none] (s) [below right=1.7cm and 1.4cm  of p] {};
\end{scope}
\draw [black] (p) -- (r);
\draw [black] (p) -- (s); 
\node[draw=none] (q) [below left=1.4cm and 0.0cm of p, draw = none] {};
\draw [black,decorate,decoration=snake] (p) -- (q); 
\node[below left=1.4cm and 0.0cm of p,state] (p1) {$p$};
\begin{scope}[scale=.55, every node/.style={scale=0.55}, baseline=1ex,shorten >=.1pt,node distance=1.8cm, semithick,auto,
every state/.style={fill=white,draw=black,circular drop shadow,inner sep=0mm,text=black},
accepting/.style ={fill=gray,text=white}]
\node[draw=none] (r) [below left=1.7cm and 1.4cm  of p1] {};
\node[draw=none] (s) [below right=1.7cm and 1.4cm  of p1] {};
\end{scope}
\draw [black] (p1) -- (r);
\draw [black] (p1) -- (s); 
\node[draw=none] (q) [below left=1.4cm and 0.0cm of p1, draw = none] {$p$};
\draw [black,decorate,decoration=snake] (p1) -- (q); 
\end{scope}
\begin{scope}[shift={(14cm,0)}]
\node[state,accepting] (p) {$p$};
\begin{scope}[scale=.55, every node/.style={scale=0.55}, baseline=1ex,shorten >=.1pt,node distance=1.8cm, semithick,auto,
every state/.style={fill=white,draw=black,circular drop shadow,inner sep=0mm,text=black},
accepting/.style ={fill=gray,text=white}]
\node[draw=none] (r) [below left=1.7cm and 1.4cm  of p] {};
\node[draw=none] (s) [below right=1.7cm and 1.4cm  of p] {};
\end{scope}
\draw [black] (p) -- (r);
\draw [black] (p) -- (s); 
\node[draw=none] (q) [below left=1.4cm and 0.0cm of p, draw = none] {};
\draw [black,decorate,decoration=snake] (p) -- (q); 
\node[below left=1.4cm and 0.0cm of p,state] (p1) {$p$};
\begin{scope}[scale=.55, every node/.style={scale=0.55}, baseline=1ex,shorten >=.1pt,node distance=1.8cm, semithick,auto,
every state/.style={fill=white,draw=black,circular drop shadow,inner sep=0mm,text=black},
accepting/.style ={fill=gray,text=white}]
\node[draw=none] (r) [below left=1.7cm and 1.4cm  of p1] {};
\node[draw=none] (s) [below right=1.7cm and 1.4cm  of p1] {};
\end{scope}
\draw [black] (p1) -- (r);
\draw [black] (p1) -- (s); 
\node[draw=none] (q) [below left=1.4cm and 0.0cm of p1, draw = none] {$p$};
\draw [black,decorate,decoration=snake] (p1) -- (q); 
\node[below left=1.4cm and 0.0cm of p1,state] (p2) {$p$};
\begin{scope}[scale=.55, every node/.style={scale=0.55}, baseline=1ex,shorten >=.1pt,node distance=1.8cm, semithick,auto,
every state/.style={fill=white,draw=black,circular drop shadow,inner sep=0mm,text=black},
accepting/.style ={fill=gray,text=white}]
\node[draw=none] (r) [below left=1.7cm and 1.4cm  of p2] {};
\node[draw=none] (s) [below right=1.7cm and 1.4cm  of p2] {};
\end{scope}
\draw [black] (p2) -- (r);
\draw [black] (p2) -- (s); 
\node[draw=none] (q) [below left=1.4cm and 0.0cm of p2, draw = none] {\vdots};
\draw [black,decorate,decoration=snake] (p2) -- (q); 
\end{scope}
}
}
\begin{figure}[H]
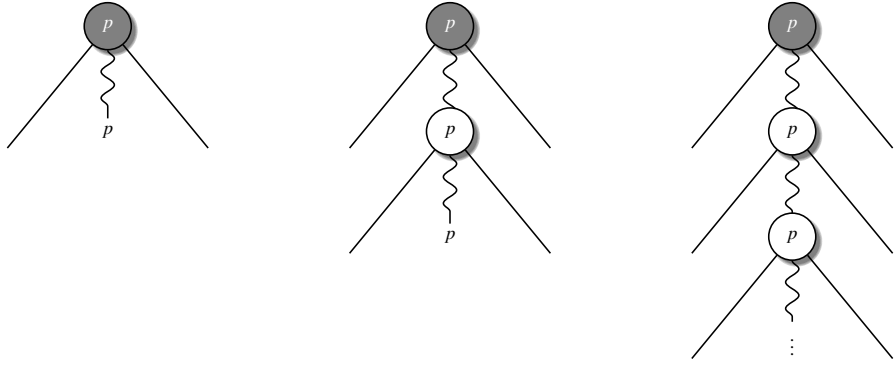

\centering
\prs
\caption{Illustration of the ($WL$) rule. From left to right, the original partial run is iteratively plugged into the port $p$.}
\label{fig:looping}
\end{figure}

\vspace{-10pt}
This construction is correct, because $p$ is the greatest (w.r.t. the order $\leq$ on  states) in any root-to-port path, and $p\in\qall$. Hence all new infinite paths generated by this unfolding have infinitely many occurrences of $p$'s and hence satisfy the acceptance condition. 
Furthermore, note that the proviso of the rule guarantees that $p\!\not\in\! \qzero$, hence there is nothing to preserve regarding the probabilistic condition.

%

\paragraph{Case (SL).}

Assume the derivation ends with an application of the weak looping rule (WL).

\AxiomC{$\profile p p \{p\}\cup w$}
\LeftLabel{Strong Looping  (SL):}
\RightLabel{\scriptsize if $p\in \Qall$ and  $w$ is nonempty}
\UnaryInfC{$\profile p p {w}$}
\DisplayProof

This case is very similar to the one just considered for the rule $(WL)$. The crucial difference is that, in this case, it is possible that $p\!\in\! \qzero$ and we need to guarantee that the set of infinite paths (with infinitely many $p$) generated by the looping construction has probability $0$.

This is enforced by the assumption $w\neq \emptyset$. This means that the partial run $\rho$ has some reachable ports. Therefore, the set of paths eventually in ending in a port has a probability $\epsilon$ {\bf strictly bigger than $0$}. This ensures that, after the looping construction, the set of infinite paths never reaching a port has probability $0$.

\def\prs{\tikz[scale=.65, every node/.style={scale=0.65}, baseline=1ex,shorten >=.1pt,node distance=1.8cm,on grid,semithick,auto,
every state/.style={fill=white,draw=black,circular drop shadow,inner sep=0mm,text=black},
accepting/.style ={fill=gray,text=white}]{
\node[state,accepting] (p) {$p$};
\begin{scope}[scale=.55, every node/.style={scale=0.55}, baseline=1ex,shorten >=.1pt,node distance=1.8cm, semithick,auto,
every state/.style={fill=white,draw=black,circular drop shadow,inner sep=0mm,text=black},
accepting/.style ={fill=gray,text=white}]
\node[draw=none] (r) [below left=1.7cm and 1.4cm  of p] {};
\node[draw=none] (s) [below right=1.7cm and 1.4cm  of p] {};
\end{scope}
\draw [black] (p) -- (r);
\draw [black] (p) -- (s); 
\node[draw=none] (q) [below left=1.4cm and 0.0cm of p, draw = none] {$r$};
\draw [black,decorate,decoration=snake] (p) -- (q); 
\begin{scope}[shift={(7cm,0)}]
\node[state,accepting] (p) {$r$};
\begin{scope}[scale=.55, every node/.style={scale=0.55}, baseline=1ex,shorten >=.1pt,node distance=1.8cm, semithick,auto,
every state/.style={fill=white,draw=black,circular drop shadow,inner sep=0mm,text=black},
accepting/.style ={fill=gray,text=white}]
\node[draw=none] (r) [below left=1.7cm and 1.4cm  of p] {};
\node[draw=none] (s) [below right=1.7cm and 1.4cm  of p] {};
\end{scope}
\draw [black] (p) -- (r);
\draw [black] (p) -- (s); 
\node[draw=none] (q) [below left=1.4cm and 0.0cm of p, draw = none] {\vdots};
\draw [black,decorate,decoration=snake] (p) -- (q); 
\end{scope}
\begin{scope}[shift={(14cm,0)}]
\node[state,accepting] (p) {$p$};
\begin{scope}[scale=.55, every node/.style={scale=0.55}, baseline=1ex,shorten >=.1pt,node distance=1.8cm, semithick,auto,
every state/.style={fill=white,draw=black,circular drop shadow,inner sep=0mm,text=black},
accepting/.style ={fill=gray,text=white}]
\node[draw=none] (r) [below left=1.7cm and 1.4cm  of p] {};
\node[draw=none] (s) [below right=1.7cm and 1.4cm  of p] {};
\end{scope}
\draw [black] (p) -- (r);
\draw [black] (p) -- (s); 
\node[draw=none] (q) [below left=1.4cm and 0.0cm of p, draw = none] {};
\draw [black,decorate,decoration=snake] (p) -- (q); 
\node[below left=1.4cm and 0.0cm of p,state] (p1) {$r$};
\begin{scope}[scale=.55, every node/.style={scale=0.55}, baseline=1ex,shorten >=.1pt,node distance=1.8cm, semithick,auto,
every state/.style={fill=white,draw=black,circular drop shadow,inner sep=0mm,text=black},
accepting/.style ={fill=gray,text=white}]
\node[draw=none] (r) [below left=1.7cm and 1.4cm  of p1] {};
\node[draw=none] (s) [below right=1.7cm and 1.4cm  of p1] {};
\end{scope}
\draw [black] (p1) -- (r);
\draw [black] (p1) -- (s); 
\node[draw=none] (q) [below left=1.4cm and 0.0cm of p1, draw = none] {\vdots};
\draw [black,decorate,decoration=snake] (p1) -- (q); 
\end{scope}
}
}
\begin{figure}[H]
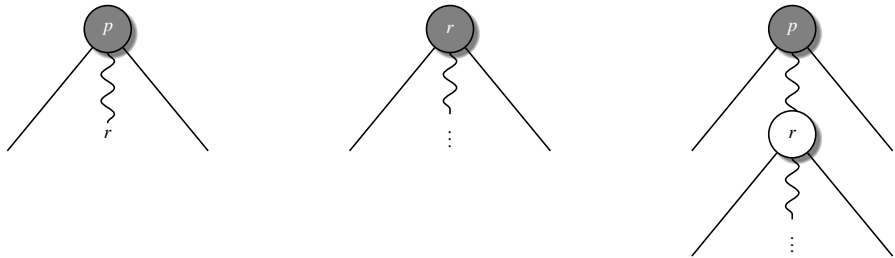

\centering
\prs
\caption{An illustration of the ($U$) rule. Into the run on the left we plug the run in the middle and obtain the run on the right.}
\label{fig:unification}
\end{figure}
\paragraph*{Case (U):}
Assume the derivation ends with an application of the binary rule (U).

\AxiomC{$\profile p q {\{r\}\cup w}$}
\AxiomC{$\profile r {s} v$}
\LeftLabel{Unification (U):}
\BinaryInfC{$\profile p {\max(q,s,r)} {w\cup v}$}
\DisplayProof

 By induction hypothesis, we can assume that the profiles $\profile p q {\{r\}\cup w}$ and $\profile r {s} v$ are realizable by regular partial runs, $\rho_1$ and $\rho_2$, respectively. To construct a regular partial run for the desired profile $\profile p {\max(q,s,r)} {w\cup v}$ it is sufficient to plug the partial run $\rho_2$ in every port $r$ of $\rho_1$, as shown in  Figure \ref{fig:unification}.

\paragraph*{Case (D):} 
This is a trivial case, as the rule simply corresponds to the operation of merging two ports having the same type.

\section{Completeness}
\label{sec:completness}

In this section we prove that if a profile $\synseq p q w$ is realizable by a regular partial run, then it is derivable in the deductive system with a derivation of size smaller or equal than $f(q,|w|)$, where 
\begin{definition} The function $f\!:\! \mathbb{N}^2\rightarrow \mathbb{N}$ is defined, by induction on the first argument, as follows:
\begin{itemize}
\item $f(0,n) = (c_1 \cdot n) + c_2$, for appropriate constants $c_1,c_2\!\in\!\mathbb{N}$, 
\item $f(q, n) = 
(K\cdot ({|Q|}+1)) + (|Q|\cdot |w|)$, where $K = \max \begin{cases}
f(q-1, 2|w|) + h(2^{|w|})\cdot |w| + |w|^2,\\
f(q-1, 2|w|)\cdot (2^{|w|} + 1) + |w|^2,\\
(3f(q-1,2) + 1)+ f(q-1,0)+1,\\
f(q-1, 2|w|) + g(2^{|w|+|Q|})\cdot {|w|} + |w|^2,
\end{cases}$
\end{itemize}
where $Q$ is the set of states of the fixed $\zero\ $ automaton and the auxiliary primitive recursive functions $g,h\!:\!\mathbb{N}\rightarrow\mathbb{N}$ are defined as:
\begin{itemize}
\item $g(0) = f(q-1, |w|+|Q|)$,
\item $g(n+1)=  f(q-1, 2(|w|+|Q|)) + g(n)\cdot (|w|+|Q|) +  (|w|+|Q|)^2$,
\end{itemize}
and
\begin{itemize}
\item $h(0) =  f(q-1, |w|)$,
\item $h(n+1) = f(q-1, 2|w|) + h(n) \cdot |w| + |w|^2$.
\end{itemize}
\label{definition_f}
\end{definition}

\begin{remark*} The formulas defining $f,g,h$ in Definition \ref{definition_f} look very technical at the first glance, but in fact they just reflect the recursive construction in the proof. 
\end{remark*}



The proof goes by induction on the order type of the state $q$ (in the well order $\leq$ on states). Since we identify the set of states $Q$ with the set of numbers $\{0,\dots, |Q|-1\}$, we write $q-1$ to denote the predecessor of $q$.

\subsection{Base case}

Let us assume that $q$ is the minimal state (w.r.t. the order $\leq$ on $Q$). That is, $q=0$. Let us fix an arbitrary partial run having profile $\synseq p q \{q_1, \ldots, q_n\}$.

In this case, since the partial run has profile $\synseq p q \{q_1, \ldots, q_n\}$, and the root $p$ is an inner vertex, we know that $p\leq q$ and therefore $q\!=\!p$. We can then conclude that the profile is of the form $\synseq p p \{q_1\ldots q_n\}$. We need to construct a derivation of $\synseq p p \{q_1\ldots q_n\}$ having size $\leq f(0,|w|)$, i.e., linearly proportional in $|w|$. Here we distinguish two cases: $p\!\in\!\Qall$ and $p\!\not\in\!\Qall$.

\vspace{-10pt}
\paragraph*{Subcase $p\not\in\qall$.}

If $p\!\not\in\Qall$ then there can not exist any infinite path in the partial run. Indeed every infinite path only consists of inner states, thus labeled by $p$.  Hence the existence of such an infinite path is a contradiction with the definition of partial run which requires that every infinite path has $\maxinf$ state in $\Qall$. 
As a consequence the partial run is well-funded and in fact, by weak K{\"o}nig's lemma, a finite  tree. A derivation of the profile $\synseq p p \{q_1,\ldots, q_n\}$ is then obtainable by subsequent applications of the rules (A) and (U) and (D).
This is best explained by a simple example.  Figure \ref{fig:base-cases-odd} illustrates the case of a partial run of profile $\synseq p p {\{q_1,q_2,q_3,q_4\}}$. 

\def\prs{\tikz[scale=.55, every node/.style={scale=0.65}, baseline=1ex,shorten >=.1pt,node distance=1.8cm,on grid,semithick,auto,
every state/.style={fill=white,draw=black,circular drop shadow,inner sep=0mm,text=black},
accepting/.style ={fill=gray,text=white}]{
\node[state,accepting] (p) {$p$};
\begin{scope}[scale=.50, every node/.style={scale=0.55}, baseline=1ex,shorten >=.1pt,node distance=1.8cm, semithick,auto,
every state/.style={fill=white,draw=black,circular drop shadow,inner sep=0mm,text=black},
accepting/.style ={fill=gray,text=white}]
\node[state] (r) [below left=1.7cm and 1.4cm  of p] {$p$};
\node[state] (s) [below right=1.7cm and 1.4cm  of p] {$p$};
\end{scope}
\draw [black] (p) -- (r);
\draw [black] (p) -- (s); 
\sqsone; 
\sqstwo;
\sqsthree;
\rqp;
\sqsfive;
}
}
\def\sqsone{
\begin{scope}[scale=.45, every node/.style={scale=0.45}, baseline=1ex,shorten >=.1pt,node distance=1.8cm, semithick,auto,
every state/.style={fill=white,draw=black,circular drop shadow,inner sep=0mm,text=black},
accepting/.style ={fill=gray,text=white}]
\node[state] (q2r) [below left=1.5cm and 0.7cm of s] {$p$};
\node[state] (s2) [below right=1.5cm and 0.7cm of s] {$p$};
\draw [black] (s) -- (q2r);
\draw [black] (s) -- (s2); 
\end{scope}
}
\def\rqp{
\begin{scope}[scale=.45, every node/.style={scale=0.45}, baseline=1ex,shorten >=.1pt,node distance=1.8cm, semithick,auto,
every state/.style={fill=white,draw=black,circular drop shadow,inner sep=0mm,text=black},
accepting/.style ={fill=gray,text=white}]
\node[state] (q2) [below left=1.5cm and 0.7cm of r] {$p$};
\node[state] (p2) [below right=1.5cm and 0.7cm of r] {$q_4$};
\draw [black] (r) -- (q2);
\draw [black] (r) -- (p2); 
\end{scope}
}
\def\sqstwo{
\begin{scope}[scale=.35, every node/.style={scale=0.35}, baseline=1ex,shorten >=.1pt,node distance=1.8cm, semithick,auto,
every state/.style={fill=white,draw=black,circular drop shadow,inner sep=0mm,text=black},
accepting/.style ={fill=gray,text=white}]
\node[state] (q3) [below left=1.35cm and 0.4cm of s2] {$q_3$};
\node[state] (p3) [below right=1.35cm and 0.4cm of s2] {$q_2$};
\draw [black] (s2) -- (q3);
\draw [black] (s2) -- (p3); 
\end{scope}
}
\def\sqsthree{
\begin{scope}[scale=.35, every node/.style={scale=0.35}, baseline=1ex,shorten >=.1pt,node distance=1.8cm, semithick,auto,
every state/.style={fill=white,draw=black,circular drop shadow,inner sep=0mm,text=black},
accepting/.style ={fill=gray,text=white}]
\node[state] (q3) [below left=1.35cm and 0.4cm of q2r] {$q_1$};
\node[state] (s3) [below right=1.35cm and 0.4cm of q2r] {$q_2$};
\draw [black] (q2r) -- (q3);
\draw [black] (q2r) -- (s3); 
\end{scope}
}
\def\sqsfour{
\begin{scope}[scale=.35, every node/.style={scale=0.35}, baseline=1ex,shorten >=.1pt,node distance=1.8cm, semithick,auto,
every state/.style={fill=white,draw=black,circular drop shadow,inner sep=0mm,text=black},
accepting/.style ={fill=gray,text=white}]
\node[state] (q3) [below left=1.35cm and 0.4cm of p2] {$q_1$};
\node[state] (s3) [below right=1.35cm and 0.4cm of p2] {$q_3$};
\draw [black] (p2) -- (q3);
\draw [black] (p2) -- (s3); 
\end{scope}
}
\def\sqsfive{
\begin{scope}[scale=.35, every node/.style={scale=0.35}, baseline=1ex,shorten >=.1pt,node distance=1.8cm, semithick,auto,
every state/.style={fill=white,draw=black,circular drop shadow,inner sep=0mm,text=black},
accepting/.style ={fill=gray,text=white}]
\node[state] (q3) [below left=1.35cm and 0.4cm of q2] {$q_1$};
\node[state] (s3) [below right=1.35cm and 0.4cm of q2] {$q_1$};
\draw [black] (q2) -- (q3);
\draw [black] (q2) -- (s3); 
\end{scope}
}
\def\sqssix{
\begin{scope}[scale=.15, every node/.style={scale=0.15}, baseline=1ex,shorten >=.1pt,node distance=1.8cm, semithick,auto,
every state/.style={fill=white,draw=black,circular drop shadow,inner sep=0mm,text=black},
accepting/.style ={fill=gray,text=white}]
\node[state] (q5) [below left=1cm and 0.4cm of p3] {$p$};
\node[state] (s5) [below right=1cm and 0.4cm of p3] {$p$};
\draw [black] (p3) -- (q5);
\draw [black] (p3) -- (s5); 
\end{scope}
\node[draw=none] (s6) [below left=0.4cm and 0.35cm of s5,draw = none] {\ldots};
}
\def\sqsseven{
\begin{scope}[scale=.08, every node/.style={scale=0.08}, baseline=1ex,shorten >=.1pt,node distance=1.8cm, semithick,auto,
every state/.style={fill=white,draw=black,circular drop shadow,inner sep=0mm,text=black},
accepting/.style ={fill=gray,text=white}]
\node[state] (q6) [below left=0.7cm and 0.3cm of s5] {$q$};
\draw [black] (s5) -- (q6);
\end{scope}
\node[draw=none] (s6) [below right=0.7cm and 0.3cm of s5,draw = none] {\ldots};
\draw [black] (s5) -- (s6); 
}
\begin{figure}[H]
\centering
\prs
\caption{A partial run realizing the profile $\synseq p p {\{q_1,q_2,q_3,q_4\}}$. The partial run is finite.}
\label{fig:base-cases-odd}
\end{figure}

\vspace{-20pt}
Note that the partial run has also profile $\synseq p p {\{q_1,q_1,q_4,q_1,q_2,q_3,q_2\}}$ by identifying each leaf with a singleton port. We will show how to derive the profile $\synseq p p {\{q_1,q_1,q_4,q_1,q_2,q_3,q_2\}}$. The desired profile $\synseq p p {\{q_1,q_2,q_3,q_4\}}$ can then be obtained by iterated applications (exactly three in this case) of the Deduplication (D) rule.
The derivation of the profile $\synseq p p {\{q_1,q_1,q_4,q_1,q_2,q_3,q_2\}}$ can be obtained by first deriving the profiles $\synseq p p {\{q_1,q_1,q_4\}}$ and $\synseq p p {\{q_1,q_2,q_3,q_2\}}$ corresponding to the left subtree and the right subtree respectively. The desired profile is then obtained by two applications of the axiom ($A$) and two applications of the ($U$) rule. All these derivations are presented in Figure \ref{fig:0_p_not_qall}. 

\vspace{-10pt}
\begin{figure}[H]
\begin{prooftree}
\scriptsize
\AxiomC{}
\LeftLabel{A}
\UnaryInfC{$\synseq p p {\{p,q_4\}}$}

\AxiomC{}
\LeftLabel{A}
\UnaryInfC{$\synseq p p {\{q_1,q_1\}}$}

\LeftLabel{U}
\BinaryInfC{$\synseq p p {\{q_1,q_1,q_4\}}$} 
\end{prooftree}
\begin{prooftree}
\scriptsize
\AxiomC{}
\LeftLabel{A}
\UnaryInfC{$\synseq p p {\{p,p\}}$}

\AxiomC{}
\LeftLabel{A}
\UnaryInfC{$\synseq p p {\{q_1,q_2\}}$}

\LeftLabel{U}
\BinaryInfC{$\synseq p p {\{p, q_1,q_2\}}$}

\AxiomC{}
\LeftLabel{A}
\UnaryInfC{$\synseq p p {\{q_3,q_2\}}$}

\LeftLabel{U}
\BinaryInfC{$\synseq p p {\{q_1,q_2,q_3,q_2\}}$}
\end{prooftree}
\begin{prooftree}
\scriptsize
\AxiomC{}
\LeftLabel{A}
\UnaryInfC{$\synseq p p {\{p,p\}}$}

\AxiomC{$\synseq p p {\{q_1,q_1,q_4\}}$}

\LeftLabel{U}
\BinaryInfC{$\synseq p p {\{p, q_1,q_1,q_4\}}$} 

\AxiomC{}
\LeftLabel{A}
\UnaryInfC{$\synseq p p {\{q_1,q_2,q_3,q_2\}}$}

\LeftLabel{U}
\BinaryInfC{$\synseq p p {\{ q_1,q_1,q_4,q_1,q_2,q_3,q_2\}}$} 
\end{prooftree}
\caption{Derivation of the profile $\synseq p p {\{ q_1,q_1,q_4,q_1,q_2,q_3,q_2\}}$.}
\label{fig:0_p_not_qall}
\end{figure}

\vspace{-20pt}
Note that all instances of the axiom rule are valid, because the partial run of Figure \ref{fig:base-cases-odd} guarantees the existence of transitions $p\stackrel{}{\to} (q_1,q_1)$, $p\stackrel{}{\to} (p,q_4)$, $p\stackrel{}{\to} (p,p)$, $p\stackrel{}{\to} (q_1,q_2)$  and $p\stackrel{}{\to} (q_3,q_2)$ in the automaton. 
Furthermore, notice that the constructed derivation has size linearly proportional to the size of the assumed partial run having profile $\synseq p q \{q_1,\dots, q_n\}$. This leads to: 

\begin{claim}
There exists a regular partial run with profile $\synseq p p \{q_1,\dots, q_n\}$ which is of size linear proportional with 
the size of the multiset $w=\{q_1,\dots, q_n\}$. 
\label{claim:size_0_q_not_qall}
\end{claim}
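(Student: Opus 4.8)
The plan is to start from the finite partial run $\rho$ that we have fixed (recall that, since $p\notin\qall$, $\rho$ is a finite tree), and to prune it down to a partial run of size $O(n)$ realizing the same profile. Two structural facts will be used throughout: since $q=p$ is the minimal state, every inner node of $\rho$ is labelled $p$; and since every leaf lies in one of the ports $X_1,\dots,X_n$, every leaf of $\rho$ is labelled by one of $q_1,\dots,q_n$. First I would pick, for each $i\in\{1,\dots,n\}$, a leaf $\ell_i$ of $\rho$ labelled $q_i$, with the $\ell_i$ pairwise distinct; this is possible because the number of leaves of $\rho$ carrying a given state $t$ is at least the number of ports of type $t$, i.e.\ the multiplicity of $t$ in $\{q_1,\dots,q_n\}$.

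The engine of the proof is the following \emph{shortcut} operation: if $v$ is a non-root inner node and $w$ is a proper descendant of $v$ that is also inner, then replacing the subtree rooted at $v$ by the subtree rooted at $w$ again produces a partial run --- the node sitting in position $v$ still carries the label $p$ (it is now $w$), so the transition at $v$'s parent is unaffected, and the subtree at $w$ is internally consistent. I would repeatedly apply shortcuts $(v,w)$ for which the set of marked leaves below $v$ equals the set of marked leaves below $w$, so that no $\ell_i$ is ever destroyed. Since each shortcut strictly lowers the number of nodes, this terminates, yielding a finite partial run $\rho^*$ that still contains $\ell_1,\dots,\ell_n$, has root $p$, all inner nodes labelled $p$, and all leaves among $q_1,\dots,q_n$, and in which no marked-leaf-preserving shortcut is possible.

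It remains to bound $|\rho^*|$. Let $S$ be the least subtree of $\rho^*$ containing the root and all the $\ell_i$; then $S$ has exactly $n$ leaves, hence at most $n-1$ branching nodes. Along a maximal non-branching path $u_0\to u_1\to\cdots\to u_m$ of $S$, at each internal node the child not belonging to $S$ hangs a subtree with no marked leaves, so the set of marked leaves below $u_1$ coincides with that below $u_2$, below $u_3$, and so on; therefore, if two consecutive internal nodes of such a path existed, a marked-leaf-preserving shortcut between them would still be available, contradicting maximality --- so each such path has at most one internal node, and $|S| = O(n)$. Every node of $\rho^*$ outside $S$ lies in a subtree hanging off a (non-branching) node of $S$, so there are $O(n)$ such subtrees; as each contains no marked leaf, maximality forces every inner node in it to have two leaf-children, so it has size at most $3$. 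Hence $|\rho^*| = O(|w|)$. Finally I would turn $\rho^*$ into a realization of $\synseq p p \{q_1,\dots,q_n\}$ by assigning ports: every leaf of $\rho^*$ is labelled by some $q_j$, and for each state $t$ occurring with multiplicity $m_t$ in $\{q_1,\dots,q_n\}$ I split the (at least $m_t$) leaves of $\rho^*$ labelled $t$ into $m_t$ nonempty ports; this produces $n$ ports with type multiset exactly $\{q_1,\dots,q_n\}$ covering all leaves, and since $\rho^*$ is finite the \emph{all} and \emph{zero} conditions are vacuous and its unique inner state is $p$, so the profile is realized; being finite, $\rho^*$ is regular.

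I expect the main obstacle to be the size analysis of $\rho^*$: one has to be scrupulous that the shortcut operation preserves both the partial-run structure and the selected leaves, that ``no further shortcut is possible'' genuinely forces the non-branching segments of the skeleton $S$ and the subtrees hanging off it to be of bounded size, and that the root --- which can never be the upper endpoint of a shortcut --- does not spoil these bounds. The rest is straightforward bookkeeping, and it also furnishes the explicit linear constants hidden in $f(0,n)=c_1 n + c_2$.
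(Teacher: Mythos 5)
Your proof is correct, but it follows a genuinely different route from the paper's. The paper argues top-down from the transition table: it splits the transitions out of $p$ into the looping kind $p\to(p,p)$ and the leaf-introducing kinds, calls a leaf-introducing transition $w$-productive when every non-$p$ state it introduces occurs in $w$, observes that a run with profile $\synseq p p w$ can only use looping and $w$-productive transitions, and then asserts that a fresh finite run can be assembled firing at most $|w|$ productive transitions, one per required port. You instead start from the given finite run $\rho$ (finite because $p\notin\qall$ forbids infinite all-$p$ branches) and prune it: you mark one witness leaf per port, repeatedly contract inner-to-inner segments whose marked-leaf sets agree (legitimate since every inner node carries the same label $p$, so the parent's transition is undisturbed), and bound the result by a Steiner-tree analysis --- at most one internal node per non-branching segment of the skeleton $S$, and hanging subtrees of size at most $3$. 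Each approach buys something: the paper's is shorter and makes the set of usable transitions explicit, but it leaves the assembly of the small run (in particular why only $O(|w|)$ looping transitions suffice and why the result is well-founded) implicit; yours is longer but every step --- preservation of the partial-run structure under shortcuts, termination, the $O(n)$ size bound, and the final reassignment of ports (needed because shortcuts change the leaf set, and sound because every surviving leaf is a copy of a leaf of $\rho$ and hence carries a state occurring in $w$) --- is fully justified, and it delivers the same linear bound $f(0,n)=c_1 n+c_2$ with explicit constants.
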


\begin{proof} (of the Claim)
The (finite number of) transitions in the subzero automaton can be divided into three kinds:
\begin{enumerate}[label={(\arabic*)}]
\item $p\stackrel{}{\to} (p,p)$,
\item $p\stackrel{}{\to} (p,r)$ or $p\stackrel{}{\to} (r,p)$, with $r\neq p$, and
\item $p\stackrel{}{\to} (r,s)$, with $r,s\neq p$.
\end{enumerate}

Note that transition of types (2) and (3) introduce leaves in the partial run. 
\begin{definition} 
We say that a transition of type (2)  is $w$-productive if $r$ is in $w$. Similarly, we say that transition of type (3) is  $w$-productive if both $r$ and $s$ are in $w$. 
\end{definition}

Note that only transitions of type $(1)$ and $w$-productive transitions of type $(2)$ and $(3)$ can appear in a partial run having profile $\synseq p q w$, as all other transition introduce ports which are not in $w$. It is also evident that, for each $q\!\in\! w$, some $w$-productive transition introducing a leaf labeled by $q$ must appear in the partial run.

It is then clear that if a partial run having profile $\synseq p q \{q_1,\dots, q_n\}$ exists, then there exists a partial run which uses at most $|w|$ many $w$-productive transitions, introducing each of the states appearing in the multiset $w$. Hence the size of this partial run is linear in the size of $w$.  
\end{proof}

\paragraph*{Subcase $p\in\qall$.}

Also in this case all inner states are $p$'s, but since $p\in\qall$, this time a partial run of the profile $\synseq p p \{q_1\ldots q_n\}$ may contain infinite paths. Note, furthermore, that if $p\!\in\!\Qzero$, then the sequence $q_1\ldots q_n$ is not empty (i.e., there are some ports) because otherwise all paths (and, therefore, a set of probability $1$ in the full binary tree) would be in $\Qzero$.

To deal with this case, it is sufficient to find some height $h$ such that all ports in $\{q_1,\dots, q_n\}$ appear (with the required multiplicities) as leaves at some depth $\leq h$. By ``cutting'' the (potentially infinite) partial run of profile  $\synseq p p \{q_1\ldots q_n\}$ at depth $h$ we obtain a partial run having profile $\synseq p p \{q_1,\ldots, q_n, p\}$ (the new port $p$ is present if $p$ appears as some leaf at depth $\leq h$).
As an illustrative example, consider the partial run with profile $\synseq p p \{q_1, q_1, q_2, q_3\}$ of Figure \ref{fig:base-cases}. 


\def\prs{\tikz[scale=.55, every node/.style={scale=0.65}, baseline=1ex,shorten >=.1pt,node distance=1.8cm,on grid,semithick,auto,
every state/.style={fill=white,draw=black,circular drop shadow,inner sep=0mm,text=black},
accepting/.style ={fill=gray,text=white}]{
\node[state,accepting] (p) {$p$};
\begin{scope}[scale=.55, every node/.style={scale=0.55}, baseline=1ex,shorten >=.1pt,node distance=1.8cm, semithick,auto,
every state/.style={fill=white,draw=black,circular drop shadow,inner sep=0mm,text=black},
accepting/.style ={fill=gray,text=white}]
\node[state] (r) [below left=1.7cm and 1.4cm  of p] {$p$};
\node[state] (s) [below right=1.7cm and 1.4cm  of p] {$p$};
\end{scope}
\draw [black] (p) -- (r);
\draw [black] (p) -- (s); 
\sqsone; 
\sqstwo;
\sqsthree;
\sqssix;
\rqp;
\sqsfive;
\node[draw=none] (portq11) [below left=6.5cm and 2.5cm  of p] {$q_1$ port};
\node[draw=none] (portq1) [below left=6.5cm and 0.7cm  of p] {$q_1$ port};
\node[draw=none] (portq2) [below right=6.5cm and 0.5cm  of p] {$q_2$ port};
\node[draw=none] (portq3) [below right=6.5cm and 1.70cm  of p] {$q_3$ port};
\draw [red,dashed] (p2) -- (portq1);
\draw [red,dashed] (q3) -- (portq3);
\draw [red,dashed] (s3) -- (portq2);
\draw [red,dashed] (q33) -- (portq11);
}
}
\def\sqsone{
\begin{scope}[scale=.45, every node/.style={scale=0.45}, baseline=1ex,shorten >=.1pt,node distance=1.8cm, semithick,auto,
every state/.style={fill=white,draw=black,circular drop shadow,inner sep=0mm,text=black},
accepting/.style ={fill=gray,text=white}]
\node[state] (q2r) [below left=1.5cm and 0.7cm of s] {$p$};
\node[state] (ps2) [below right=1.5cm and 0.7cm of s] {$p$};
\draw [black] (s) -- (q2r);
\draw [black] (s) -- (ps2); 
\end{scope}
}
\def\rqp{
\begin{scope}[scale=.45, every node/.style={scale=0.45}, baseline=1ex,shorten >=.1pt,node distance=1.8cm, semithick,auto,
every state/.style={fill=white,draw=black,circular drop shadow,inner sep=0mm,text=black},
accepting/.style ={fill=gray,text=white}]
\node[state] (pq2) [below left=1.5cm and 0.7cm of r] {$p$};
\node[state] (p2) [below right=1.5cm and 0.7cm of r] {$q_1$};
\draw [black] (r) -- (pq2);
\draw [black] (r) -- (p2); 
\end{scope}
}
\def\sqstwo{
\begin{scope}[scale=.35, every node/.style={scale=0.35}, baseline=1ex,shorten >=.1pt,node distance=1.8cm, semithick,auto,
every state/.style={fill=white,draw=black,circular drop shadow,inner sep=0mm,text=black},
accepting/.style ={fill=gray,text=white}]
\node[state] (q3) [below left=1.35cm and 0.4cm of s2] {$q_3$};
\node[state] (p3) [below right=1.35cm and 0.4cm of s2] {$p$};
\draw [black] (s2) -- (q3);
\draw [black] (s2) -- (p3); 
\end{scope}
}
\def\sqsthree{
\begin{scope}[scale=.35, every node/.style={scale=0.35}, baseline=1ex,shorten >=.1pt,node distance=1.8cm, semithick,auto,
every state/.style={fill=white,draw=black,circular drop shadow,inner sep=0mm,text=black},
accepting/.style ={fill=gray,text=white}]
\node[state] (pq3) [below left=1.35cm and 0.4cm of q2r] {$p$};
\node[state] (s3) [below right=1.35cm and 0.4cm of q2r] {$q_2$};
\draw [black] (q2r) -- (pq3);
\draw [black] (q2r) -- (s3); 
\end{scope}
}
\def\sqsfive{
\begin{scope}[scale=.35, every node/.style={scale=0.35}, baseline=1ex,shorten >=.1pt,node distance=1.8cm, semithick,auto,
every state/.style={fill=white,draw=black,circular drop shadow,inner sep=0mm,text=black},
accepting/.style ={fill=gray,text=white}]
\node[state] (q33) [below left=1.35cm and 0.4cm of q2] {$q_1$};
\node[state] (ps3) [below right=1.35cm and 0.4cm of q2] {$p$};
\draw [black] (q2) -- (q33);
\draw [black] (q2) -- (ps3); 
\end{scope}
}
\def\sqssix{
\begin{scope}[scale=.15, every node/.style={scale=0.15}, baseline=1ex,shorten >=.1pt,node distance=1.8cm, semithick,auto,
every state/.style={fill=white,draw=black,circular drop shadow,inner sep=0mm,text=black},
accepting/.style ={fill=gray,text=white}]
\node[state] (pq5) [below left=1cm and 0.4cm of p3] {$p$};
\node[state] (s5) [below right=1cm and 0.4cm of p3] {$p$};
\draw [black] (p3) -- (pq5);
\draw [black] (p3) -- (s5); 
\end{scope}
\node[draw=none] (s6) [below left=0.4cm and 0.35cm of s5,draw = none] {\ldots};
}
\begin{figure}[H]
\centering
\prs
\caption{A partial run with profile $\synseq p p {\{q_1,q_1,q_2,q_3\}}$.}
\label{fig:base-cases}
\end{figure}

The least depth at which all ports appear with the required multiplicities is $h\!=\!3$. 
Using the rules (A), (U) and (D), as done in the previous case, we can derive the profile $\synseq p p \{q_1, q_1, q_2, q_3, p\}$ and using similar ideas  
as in Claim \ref{claim:size_0_q_not_qall} we can prove that this derivation has size linear in $|w|+1$.  

Once a derivation of the profile $\synseq p p \{q_1,q_1,q_2,q_3,p\}$ is obtained, having size linearly proportional to $|w|+1$, we derive the profile $\synseq p p \{p,q_1,q_1,q_2,q_3\}$ by applying one of the two looping rules (WL) or (SL) depending if $p\in\Qzero$ or not: 
\begin{enumerate}
\item Case $p\!\not\in \Qzero$: in this case we obtain the desired profile by application of the weak looping rule:
\begin{prooftree}
\AxiomC{$\synseq p p {\{q_1,q_1,q_2,q_3, p\}}$}
\LeftLabel{WL}
\UnaryInfC{$\synseq p p {\{q_1,q_1,q_2,q_3\}}$} 
\end{prooftree}
\item Case $p\in \Qzero$: if 
the multiset $w$ is not empty (this is the case in this example, $w = \{q_1,q_1,q_2,q_3\}$), one can apply the strong looping rule:
\begin{prooftree}
\AxiomC{$\synseq p p {\{p, q_1,q_1,q_2,q_3\}}$}
\LeftLabel{SL}
\UnaryInfC{$\synseq p p {\{q_1,q_1,q_2,q_3\}}$} 
\end{prooftree}
\end{enumerate}
Hence, the final derivation has size linearly proportional to $|w|$, i.e., size $f(0,|w|)\!=\! c_1\cdot |w| + c_2$ for appropriate constants $c_1,c_2\in\mathbb{N}$, as stated in Definition \ref{definition_f}.

\subsection{Inductive step}
\label{subsection_inductive_step}

We reason by induction on $q$ and in the inductive step we assume that any realizable profile $\synsneq p {q} \{q_1\ldots q_n\}$ is derivable in the deductive system (note the strict inequality ${<\!q}$) by a proof of size at most $f(q-1, |\{q_1,\dots, q_n\}|)$.
In order to complete the inductive step we will consider an arbitrary realizable profile of the form  $\synseq p q {\{q_1,\ldots, q_n\}}$ and we will prove that it is derivable by a derivation of size at most $f(q, |\{q_1,\dots, q_n\}|)$. 
Let us fix an arbitrary partial run $\rho$ having profile $\synseq p q {\{q_1\ldots q_n\}}$. 

\begin{figure}[t!]
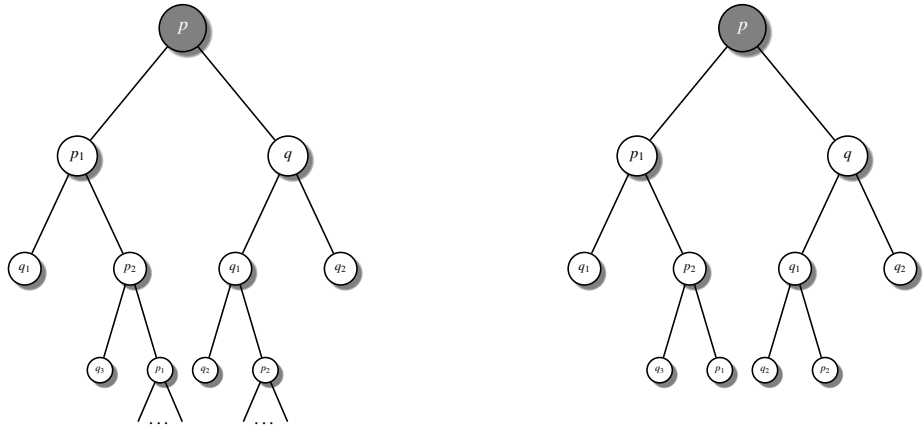

\centering
\input{pics/figure_72_1}
~
\input{pics/figure_72_2}
\caption{A run and its cut.}
\end{figure}



Let $h\!\in\!\mathbb{N}$ the least depth such that all ports in $w$ appear with at least the required multiplicities in the partial run $\rho$ at depth $\leq h$. E.g., if $w=\{r,r,s\}$ then at depth $\leq h$, there must be at least two leaves labeled by $r$ and at least one leaf labeled by $s$.  For the run $\rho$ in Figure \ref{fig:matteo_72_1} the least such $h$ is $3$ as visible in Figure \ref{fig:matteo_72_2}.
Let $\rho_h$ be the partial run $\rho$ up-to depth $h$, i.e., where all vertices below depth $h$ are removed.

Note, that $\rho_h$ has profile
$\synseq p q {w\cup P}$ where $P\subseteq Q$ is the set (i.e, multiset with multiplicities $1$) of states labeling leaves in $\rho_h$ that do not appear in $w$. 
In the example depicted above, $P=\{p_1,p_2\}$.

We can assume without loss of generality that, for each $p_i\!\in\! P$, all sub-partial runs of $\rho$ rooted at the vertices (which are leaves of $\rho_h$) labeled by $p_i$ are identical. 
If they are not, simply select one and replace all others by it. This produces another partial run $\rho^\prime$ having the same profile as $\rho$.

\def\prs{\tikz[scale=.65, every node/.style={scale=0.65}, baseline=1ex,shorten >=.1pt,node distance=1.8cm,on grid,semithick,auto,
every state/.style={fill=white,draw=black,circular drop shadow,inner sep=0mm,text=black},
accepting/.style ={fill=gray,text=white}]{
\node[state,accepting] (p) {$p_i$};
\begin{scope}[scale=.55, every node/.style={scale=0.55}, baseline=1ex,shorten >=.1pt,node distance=1.8cm, semithick,auto,
every state/.style={fill=white,draw=black,circular drop shadow,inner sep=0mm,text=black},
accepting/.style ={fill=gray,text=white}]
\node[draw=none] (r) [below left=1.7cm and 1.4cm  of p] {};
\node[draw=none] (s) [below right=1.7cm and 1.4cm  of p] {};
\end{scope}
\draw [black] (p) -- (r);
\draw [black] (p) -- (s); 
\node[draw=none] (q2m) [below left=1.4cm and 0.0cm of p, draw = none] {\ldots};
}
}
\def\sqsone{
\begin{scope}[scale=.45, every node/.style={scale=0.45}, baseline=1ex,shorten >=.1pt,node distance=1.8cm, semithick,auto,
every state/.style={fill=white,draw=black,circular drop shadow,inner sep=0mm,text=black},
accepting/.style ={fill=gray,text=white}]
\node[state] (q2r) [below left=1.5cm and 0.7cm of s] {$q_1$};
\node[state] (s2) [below right=1.5cm and 0.7cm of s] {$q_2$};
\draw [black] (s) -- (q2r);
\draw [black] (s) -- (s2); 
\end{scope}
}
\def\rqp{
\begin{scope}[scale=.45, every node/.style={scale=0.45}, baseline=1ex,shorten >=.1pt,node distance=1.8cm, semithick,auto,
every state/.style={fill=white,draw=black,circular drop shadow,inner sep=0mm,text=black},
accepting/.style ={fill=gray,text=white}]
\node[state] (q2) [below left=1.5cm and 0.7cm of r] {$q_1$};
\node[state] (p2) [below right=1.5cm and 0.7cm of r] {$p_2$};
\draw [black] (r) -- (q2);
\draw [black] (r) -- (p2); 
\end{scope}
}
\def\sqstwo{
\begin{scope}[scale=.35, every node/.style={scale=0.35}, baseline=1ex,shorten >=.1pt,node distance=1.8cm, semithick,auto,
every state/.style={fill=white,draw=black,circular drop shadow,inner sep=0mm,text=black},
accepting/.style ={fill=gray,text=white}]
\node[state] (q3) [below left=1.35cm and 0.4cm of s2] {$q_3$};
\node[state] (p3) [below right=1.35cm and 0.4cm of s2] {$p_1$};
\draw [black] (s2) -- (q3);
\draw [black] (s2) -- (p3); 
\end{scope}
}
\def\sqsthree{
\begin{scope}[scale=.35, every node/.style={scale=0.35}, baseline=1ex,shorten >=.1pt,node distance=1.8cm, semithick,auto,
every state/.style={fill=white,draw=black,circular drop shadow,inner sep=0mm,text=black},
accepting/.style ={fill=gray,text=white}]
\node[state] (q3) [below left=1.35cm and 0.4cm of q2r] {$q_2$};
\node[state] (s3p) [below right=1.35cm and 0.4cm of q2r] {$p_2$};
\draw [black] (q2r) -- (q3);
\draw [black] (q2r) -- (s3p); 
\end{scope}
}
\def\sqsfour{
\begin{scope}[scale=.35, every node/.style={scale=0.35}, baseline=1ex,shorten >=.1pt,node distance=1.8cm, semithick,auto,
every state/.style={fill=white,draw=black,circular drop shadow,inner sep=0mm,text=black},
accepting/.style ={fill=gray,text=white}]
\node[state] (q3) [below left=1.35cm and 0.4cm of p2] {$q_3$};
\node[state] (s3) [below right=1.35cm and 0.4cm of p2] {$p_1$};
\draw [black] (p2) -- (q3);
\draw [black] (p2) -- (s3); 
\end{scope}
}
\def\sqsfive{
\begin{scope}[scale=.35, every node/.style={scale=0.35}, baseline=1ex,shorten >=.1pt,node distance=1.8cm, semithick,auto,
every state/.style={fill=white,draw=black,circular drop shadow,inner sep=0mm,text=black},
accepting/.style ={fill=gray,text=white}]
\node[state] (q3) [below left=1.35cm and 0.4cm of q2] {$q_1$};
\node[state] (s3) [below right=1.35cm and 0.4cm of q2] {$p$};
\draw [black] (q2) -- (q3);
\draw [black] (q2) -- (s3); 
\end{scope}
}
\def\sqssix{
\begin{scope}[scale=.15, every node/.style={scale=0.15}, baseline=1ex,shorten >=.1pt,node distance=1.8cm, semithick,auto,
every state/.style={fill=white,draw=black,circular drop shadow,inner sep=0mm,text=black},
accepting/.style ={fill=gray,text=white}]
\node[state] (q5) [below left=1cm and 0.4cm of p3] {$p$};
\node[state] (s5) [below right=1cm and 0.4cm of p3] {$p$};
\draw [black] (p3) -- (q5);
\draw [black] (p3) -- (s5); 
\end{scope}
\node[draw=none] (s6) [below left=0.4cm and 0.35cm of s5,draw = none] {\ldots};
}
\def\sqsseven{
\begin{scope}[scale=.08, every node/.style={scale=0.08}, baseline=1ex,shorten >=.1pt,node distance=1.8cm, semithick,auto,
every state/.style={fill=white,draw=black,circular drop shadow,inner sep=0mm,text=black},
accepting/.style ={fill=gray,text=white}]
\node[draw=none] (q6) [below left=0.7cm and 0.3cm of s3] {};
\draw [black] (s3) -- (q6);
\node[draw=none] (s6) [below right=0.7cm and 0.3cm of s3,draw = none] {};
\draw [black] (s3) -- (s6); 
\end{scope}
\node[draw=none] (q6m) [below left=0.7cm and 0.0cm of s3, draw = none] {\ldots};
}
\def\sqseight{
\begin{scope}[scale=.08, every node/.style={scale=0.08}, baseline=1ex,shorten >=.1pt,node distance=1.8cm, semithick,auto,
every state/.style={fill=white,draw=black,circular drop shadow,inner sep=0mm,text=black},
accepting/.style ={fill=gray,text=white}]
\node[draw=none] (q8) [below left=0.7cm and 0.3cm of s3p] {};
\draw [black] (s3p) -- (q8);
\node[draw=none] (s8) [below right=0.7cm and 0.3cm of s3p,draw = none] {};
\draw [black] (s3p) -- (s8); 
\end{scope}
\node[draw=none] (q8m) [below left=0.7cm and 0.0cm of s3p, draw = none] {\ldots};
}
\begin{figure}[H]
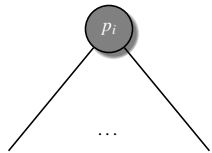

\centering
\prs
\caption{Sub-partial run rooted at $p_i$.}
\label{fig:matteo2-2}
\end{figure}



So, for each $p_i\!\in\! P$, the sub-partial run rooted at $p_i$ (Figure \ref{fig:matteo2-2}) which in what follows we denote by $\rho_{i}$, has profile $\synseq {p_i} q {w_i}$ for some $w_i\subseteq w$.
In the rest of the proof we will show how to construct derivations for:
\begin{itemize}
\item $\synseq p q {w\cup P}$, i.e., the profile of $\rho_h$, 
\item $\synseq {p_i} q {w_i}$, i.e., the profile of each $\rho_{i}$, for each $p_i\!\in\! P$.
\end{itemize}
By iterated applications ($|P|$-many) of the unification rule $(U)$, and subsequent applications of the rule $(D)$, it will then be possible to obtain a derivation of the desired profile $\synseq p q {w}$. The following figure shows one instance of such proof. 
\begin{figure}[H]
\begin{prooftree}
\AxiomC{$\synseq p q {\{q_1,q_2,q_2,q_3, p_1,p_2\}}$}
\AxiomC{$\synseq {p_1} q {w_1}$}
\LeftLabel{U}
\BinaryInfC{$\synseq p q {\{q_1,q_1,q_2,q_3, p_2\}\cup w_1}$}
\LeftLabel{D ($|w_1|$-many times)}
\UnaryInfC{$\synseq p q {\{q_1,q_1,q_2,q_3, p_2\}}$}
\AxiomC{$\synseq {p_2} q {w_2}$}
\LeftLabel{U}
\BinaryInfC{$\synseq p q {\{q_1,q_1,q_2,q_3\}\cup w_2}$}
\LeftLabel{D ($|w_2|$-many times)}
\UnaryInfC{$\synseq p q {\{q_1,q_1,q_2,q_3\}}$} 
\end{prooftree}
\caption{Example of a proof for $w=\{q_1,q_2,q_2,q_3\}$ and $P=\{p_1,p_2\}$, see Figure \ref{fig:matteo_72_1}.}
\label{proof_u_d}
\end{figure}
Furthermore, if $\synseq p q {w\cup P}$ and $\synseq {p_i} q {w_i}$ are derivable by proofs of size $N$ and $M_i$, respectively, then the final proof of 
$\synseq p q {w}$ will have size linearly proportional to $N+ \sum_{p_i\!\in\! P}M_i + |P|\cdot |w|$, where the expression $|P|\cdot |w|$ is an upper bound on the number of vertices corresponding to the application of the $(U)$ and $(D)$ rules. Therefore, by letting $K=\max{\{N, M_i\}}$ and since $|P|\leq |Q|$ where $Q$ is the set of states of the subzero automaton,  the size of derivation is bounded by 
\begin{equation}
f(q,|w|) = (K\cdot ({|Q|}+1)) + (|Q|\cdot |w|).
\label{main_equation_inductive}
\end{equation}

On the next few pages (see Equation \ref{final_inequality_last} at the very end of this Section) we will prove an upper bound 
$$ K \leq \max \begin{cases}
f(q-1, 2|w|) + h(2^{|w|})\cdot |w| + |w|^2,\\
f(q-1, 2|w|)\cdot (2^{|w|} + 1) + |w|^2,\\
(3f(q-1,2) + 1)+ f(q-1,0)+1,\\
f(q-1, 2|w|) + g(2^{|w|+|Q|})\cdot {|w|} + |w|^2.
\end{cases}$$
This agrees with  Definition \ref{definition_f} and will show that profile $\synseq p q {w}$ can be proved by a derivation of size $\leq f(q,|w|)$.

\begin{remark*} Note how the multiset $w_i\subseteq w$ of each profile $\synseq {p_i} q {w_i}$ is irrelevant in the process of proof construction, as all the ports in $w_i$ are removed by $|w_i|$-many applications of the (D) rule immediately after being introduced by the (U) rule.
Hence deriving a profile $\synseq {p_i} q {u}$, for  some $u\subseteq w$, is sufficient for the purpose.
\end{remark*}

\paragraph{How to construct a derivation of: $\synseq p q {w\cup P}$.\\}

Since $\synseq p q {w\cup P}$ is the profile of the (finite) partial run $\rho_h$, then the profile is derivable by iterated applications of the rules $(U)$, $(A)$ and $(D)$, by using the same ideas discussed in the base case of the induction (see Figure \ref{fig:0_p_not_qall}). 

What will require more work is to establish an upper bound $N$ on the size of this derivation. The reader not interested in the details regarding the estimation of the value $N$ can ignore the following claim and its proof. 

\begin{claim} The profile $\synseq p q {w\cup P}$ can be derived by a derivation of size
$$N \leq f(q-1, 2|w|) + (g(2^{|w|+|Q|})\cdot {|w|}) + |w|^2.$$
where the function $g\!:\!\mathbb{N}\rightarrow\mathbb{N}$ is defined as in Definition \ref{definition_f}.
\label{claim:ind_q_not_qall}
\label{claim:frontier_estimation}
\end{claim}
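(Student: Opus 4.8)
The plan is to exploit that $\rho_h$ is \emph{finite}: it is a cut of $\rho$ at a bounded depth, so it has no infinite path, and its profile can be derived using only the rules $(A)$, $(U)$ and $(D)$, much as in the subcase $p\notin\qall$ of the base case. The new difficulty is that inner states may now be as large as $q$ and, more seriously, that the depth $h$ — hence the size of $\rho_h$ itself — is not under control. So the real task is to replace $\rho_h$ by a partial run of the \emph{same} profile $\synseq p q {w\cup P}$ that is assembled from pieces small enough to be controlled by the inductive hypothesis for states $<q$.

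First I would cut $\rho_h$ at its topmost occurrences of the state $q$ (assume $p\neq q$; the case $p=q$ is subsumed by the loop analysis below). This presents $\rho_h$ as a ``$(<q)$-context'' $C$ — all of whose inner states are $<q$, with leaves lying in $w\cup P$ together with a few fresh leaves of type $q$ — into whose $q$-leaves one plugs the full sub-runs $\tau$ of $\rho_h$ rooted at the topmost $q$-vertices; each such $\tau$ has root $q$, is finite, has $q$ as an inner state (so its middle state is exactly $q$), and has port-set a subset of $w\cup P$. For $C$ itself, the inductive hypothesis on $q-1$, together with the linear-size normalisation of Claim~\ref{claim:size_0_q_not_qall} (keep only $w$-productive transitions, and merge the $q$-leaves with $(D)$ so that their number is at most $|w|$), gives a derivation of its profile $\synseq p {q'} {w'\cup\{q,\dots,q\}}$, $q'<q$, of size at most $f(q-1,2|w|)$.

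The heart of the argument is to derive, for each relevant $q$-rooted finite sub-run, a profile $\synseq q q {u}$, \emph{without} recursing blindly into $\rho_h$ (which would reintroduce the unbounded depth). Instead I would enumerate the possible \emph{port-sets} $S\subseteq w\cup P$ — at most $2^{|w|+|Q|}$ of them — in order of increasing cardinality, and for each realised $S$ fix a \emph{size-minimal} $q$-loop sub-run $\sigma_S$ of $\rho_h$ with port-set $S$. Minimality guarantees that no \emph{proper} $q$-rooted sub-run of $\sigma_S$ has port-set $S$; hence, cutting $\sigma_S$ at its topmost proper $q$-vertices, every resulting piece is either a $(<q)$-context (derivable by the inductive hypothesis) or a $q$-loop sub-run with port-set $S'\subsetneq S$, which has already been treated since we proceed by increasing $|S|$. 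So $\synseq q q {u_S}$ follows from one axiom $(A)$, two uses of the inductive hypothesis, a bounded number of $(U)$-plugs of previously derived $q$-loop profiles, and some $(D)$-steps. This is precisely the recursion encoded by the auxiliary function $g$ of Definition~\ref{definition_f}: one $f(q-1,\cdot)$ per level, plus a factor bounded by $g$ of the previous level for the re-plugging, iterated at most $2^{|w|+|Q|}$ times.

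Finally I would plug the derived $q$-loop runs into the $q$-leaves of $C$ by $(U)$ and deduplicate to obtain $\synseq p q {w\cup P}$; adding the sizes — $f(q-1,2|w|)$ for $C$, at most $|w|$ plugs each of size at most $g(2^{|w|+|Q|})$, and $|w|^2$ for the $(D)$-steps — yields the claimed bound. The step I expect to be the main obstacle is the size bookkeeping: pinning down the right notion of ``minimal realiser'', checking that the inclusion order on port-sets is genuinely well-founded for the sub-runs that actually occur (so that the $g$-recursion terminates), and — most delicately — keeping track of multiplicities so that after all the $(U)$- and $(D)$-steps the frontier is \emph{exactly} the multiset $w\cup P$, not merely a multiset with the correct support.
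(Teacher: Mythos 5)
Your proposal is correct and shares the paper's overall architecture---cut $\rho_h$ at the topmost occurrences of $q$ to obtain a context all of whose inner states are $<q$ (handled by the main induction on the state), derive the $q$-rooted sub-runs separately, and recombine with $(U)$ and $(D)$---but the secondary induction controlling the $q$-rooted pieces is organized genuinely differently. The paper assigns to each $q$-vertex $x$ a \emph{type} $w_x$ (the meet of the leaf multiset of $\rho_h^x$ with $w\cup P$) and ranks types by a \emph{complexity} defined through the nesting of $q$-vertices inside the finite tree $\rho_h$: a type has complexity $n+1$ if some realizer's first $q$-descendants all have types of complexity $\le n$; well-foundedness comes from the finiteness of $\rho_h$, and the rank is bounded by the number of types, $2^{|w|+|Q|}$, which is exactly where that exponent in $g$ comes from. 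You instead induct on strict inclusion of port-sets and use a \emph{size-minimal} realizer $\sigma_S$ to force every proper $q$-rooted sub-run of $\sigma_S$ to have a strictly smaller port-set. This buys a cleaner well-foundedness argument and a recursion depth of only $|w|+|Q|$ (the length of a strictly decreasing chain of submultisets of $w\cup P$), so it in fact gives a slightly better bound than $g(2^{|w|+|Q|})$; since $g$ is non-decreasing this still validates the claimed inequality. The two worries you flag are real but no worse than in the paper's own treatment: the multiset bookkeeping closes because every plugged-in port multiset is a submultiset of the target type, so their union has the same support and dominates the target pointwise, and $(D)$ then reduces it exactly; and your explicit axiom-$(A)$ step at the root of each $q$-piece actually repairs a small imprecision in the paper, whose base case invokes the inductive hypothesis on $\synsneq q q {w_x}$ even though the root of $\rho_h^x$ is itself an inner vertex labelled $q$.
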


\begin{proof}[Proof of the claim.]

For each inner vertex $x$ (i.e., not a leaf) labeled by $q$ in $\rho_h$, let $\rho^x_h$ be the sub-partial run rooted at $x$. The following picture shows $\rho_h^x$ where $x$ is the right child of the root of the partial run $\rho_h$ showed in Figure \ref{fig:matteo_72_2}.
\def\prs{\tikz[scale=.65, every node/.style={scale=0.65}, baseline=1ex,shorten >=.1pt,node distance=1.8cm,on grid,semithick,auto,
every state/.style={fill=white,draw=black,circular drop shadow,inner sep=0mm,text=black},
accepting/.style ={fill=gray,text=white}]{
\node[state] (s) {$q$};
\sqsone; 
\sqsthree;
}
}
\def\sqsone{
\begin{scope}[scale=.45, every node/.style={scale=0.45}, baseline=1ex,shorten >=.1pt,node distance=1.8cm, semithick,auto,
every state/.style={fill=white,draw=black,circular drop shadow,inner sep=0mm,text=black},
accepting/.style ={fill=gray,text=white}]
\node[state] (q2r) [below left=1.5cm and 0.7cm of s] {$q_1$};
\node[state] (s2) [below right=1.5cm and 0.7cm of s] {$q_2$};
\draw [black] (s) -- (q2r);
\draw [black] (s) -- (s2); 
\end{scope}
}
\def\rqp{
\begin{scope}[scale=.45, every node/.style={scale=0.45}, baseline=1ex,shorten >=.1pt,node distance=1.8cm, semithick,auto,
every state/.style={fill=white,draw=black,circular drop shadow,inner sep=0mm,text=black},
accepting/.style ={fill=gray,text=white}]
\node[state] (q2) [below left=1.5cm and 0.7cm of r] {$q_1$};
\node[state] (p2) [below right=1.5cm and 0.7cm of r] {$p_2$};
\draw [black] (r) -- (q2);
\draw [black] (r) -- (p2); 
\end{scope}
}
\def\sqsthree{
\begin{scope}[scale=.35, every node/.style={scale=0.35}, baseline=1ex,shorten >=.1pt,node distance=1.8cm, semithick,auto,
every state/.style={fill=white,draw=black,circular drop shadow,inner sep=0mm,text=black},
accepting/.style ={fill=gray,text=white}]
\node[state] (q3) [below left=1.35cm and 0.4cm of q2r] {$q_2$};
\node[state] (s3p) [below right=1.35cm and 0.4cm of q2r] {$p_2$};
\draw [black] (q2r) -- (q3);
\draw [black] (q2r) -- (s3p); 
\end{scope}
}
\def\sqsfour{
\begin{scope}[scale=.35, every node/.style={scale=0.35}, baseline=1ex,shorten >=.1pt,node distance=1.8cm, semithick,auto,
every state/.style={fill=white,draw=black,circular drop shadow,inner sep=0mm,text=black},
accepting/.style ={fill=gray,text=white}]
\node[state] (q3) [below left=1.35cm and 0.4cm of p2] {$q_3$};
\node[state] (s3) [below right=1.35cm and 0.4cm of p2] {$p_1$};
\draw [black] (p2) -- (q3);
\draw [black] (p2) -- (s3); 
\end{scope}
}
\def\sqsseven{
\begin{scope}[scale=.08, every node/.style={scale=0.08}, baseline=1ex,shorten >=.1pt,node distance=1.8cm, semithick,auto,
every state/.style={fill=white,draw=black,circular drop shadow,inner sep=0mm,text=black},
accepting/.style ={fill=gray,text=white}]
\node[draw=none] (q6) [below left=0.7cm and 0.3cm of s3] {};
\draw [black] (s3) -- (q6);
\node[draw=none] (s6) [below right=0.7cm and 0.3cm of s3,draw = none] {};
\draw [black] (s3) -- (s6); 
\end{scope}
\node[draw=none] (q6m) [below left=0.7cm and 0.0cm of s3, draw = none] {\ldots};
}
\def\sqseight{
\begin{scope}[scale=.08, every node/.style={scale=0.08}, baseline=1ex,shorten >=.1pt,node distance=1.8cm, semithick,auto,
every state/.style={fill=white,draw=black,circular drop shadow,inner sep=0mm,text=black},
accepting/.style ={fill=gray,text=white}]
\node[draw=none] (q8) [below left=0.7cm and 0.3cm of s3p] {};
\draw [black] (s3p) -- (q8);
\node[draw=none] (s8) [below right=0.7cm and 0.3cm of s3p,draw = none] {};
\draw [black] (s3p) -- (s8); 
\end{scope}
\node[draw=none] (q8m) [below left=0.7cm and 0.0cm of s3p, draw = none] {\ldots};
}
\begin{figure}[H]
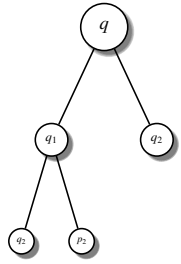

\centering
\prs
\caption{Partial run $\rho^x_h$.}
\label{fig:matteo_72_4}
\end{figure}

We define the \emph{type} of $x$ as the maximal multiset (see Definition \ref{def:submultiset}) $w_x\subseteq w\cup P$ with respect to the multiset of states labeling leaves in $\rho^x_h$. In particular $\rho^x_h$ has profile $\synseq q q {w_x}$. Note that there are only finitely many types, as there are only finitely many multisubsets of $w_x\subseteq w\cup P$. 
We now show how to construct derivations of each of the finitely many profiles $\synseq q q {w_x}$.
The proof is by induction on the \emph{complexity} of types, which we now define.

\begin{definition}
A type $v\subseteq w\cup P$ is of \emph{complexity $0$} if there exists some $x$ (labeled by $q$) in $\rho_h$ such that:
\begin{enumerate}
\item there are no vertices labeled by $q$ in any path from $x$ (excluded) to any leaf (also excluded) in $\rho_h$,
\item $x$ has type $v$, that is $v=w_x$.
\end{enumerate}
A type $v\subseteq w\cup P$ has complexity $n+1$ if there exists some $x$ (labeled by $q$) in $\rho_h$ such that:
\begin{enumerate}
\item the first state $y$ labeled by $q$ in any path from $x$ (excluded) to any leaf (also excluded) has type of complexity $\leq n$,
\item $x$ has type $v$, that is $v=w_x$.
\end{enumerate}
\end{definition}

\noindent
We now show, by induction on the complexity of types, that 
\begin{subclaim*}
Every profile  $\synseq q q {w_x}$, with $w_x$ a type having complexity $n$, can be derived by a proof of size $\leq g(n)$.
\end{subclaim*}
\begin{proof}[Proof of the subclaim] \mbox{ }

\smallskip
\noindent
\textbf{\underline{Base case:} how to derive $\synseq q q {w_x}$ with $w_x$ a type of complexity $0$.}

The fact that $w_x$ has complexity $0$ means that $\rho_h^x$ has no inner vertex labeled by $q$. So by induction hypothesis, there exists a derivation of $\synsneq q q {w_x}$ 
of size $f(q-1, |w_x|)$.

Since $|w_x| \leq |w \cup P| \leq |w \cup Q |$, where $Q$ is the set of states in the zero automaton, we get that the size of the proof of  $\synseq q q {w_x}$ has size $\leq g(0)= f(q-1, |w|+|Q|)$, as desired.

\noindent
\textbf{\underline{Inductive case:} how to derive $\synseq q q {w_x}$ with $w_x$ a type of complexity $n+1$.}

Let $\rho^x_h|_q$ be the partial run $\rho_h^x$ where all vertices below the first occurrences of $q$'s have been removed. 
The partial run
$\rho^x_h|_q$ has profile $\synsneq q q {w^\prime_x}\cup \{q, \dots, q\}$ (note the strict inequality $<q$), for some $w^\prime_x\subseteq w_x$ (this is because some ports in $w_x$ might not appear in $\rho^x_h|_q$ since some vertices where removed) and  where $\{ q,\dots, q\}$ denotes the multiset of leaves labeled by $q$ in $\rho^x_h|_q$  (up to a maximal multiplicity of $|w_x|$). 
Note that \[|{w^\prime_x}\cup \{q, \dots, q\}| \leq 2 |w_x| \leq 2(|w|+|Q|).\]

From the the inductive assumption on $q$, we know that the profile $\synsneq q q {w^\prime_x}\cup \{q, \dots, q\}$ can be proved using a derivation of  size at most $f(q-1, | {w^\prime_x}\cup \{q, \dots, q\}|)$. Moreover, we have 
\[f(q-1, | {w^\prime_x}\cup \{q, \dots, q\}|) \leq f(q-1, 2(|w|+|Q|)),\]
because $f$ is non-decreasing in the second coordinate. 

Now that we have constructed a derivation of $\synsneq q q {w^\prime_x}\cup \{q, \dots, q\}$ we can combine it with the derivations of $\synseq q q {w_y}$ corresponding to all vertices $y$, labeled by $q$, which appears as leaves in $\rho^x_h|_q$. These derivations can be constructed by induction hypothesis on the complexity of the types $w_y$ (the fact that each $w_y$ has complexity $\leq n$ follows, by definition, from the assumption that $w_x$ has type $n+1$). So, by means of application of the (U) and (D) rule (see Figure \ref{proof_u_d}) we can construct the desired derivation of $\synseq q q {w_x}$.
This completes the proof of the subclaim.
\end{proof}


\noindent
Since the derivation of $\synseq q q {w_x}$ is obtained by combining:
\begin{enumerate}
\item the derivation of $\synsneq q q {w^\prime_x}\cup \{q, \dots, q\}$, having size  $\leq f(q-1, 2(|w|+|Q|))$  
\item the $|w_x|$-many derivations $\synseq q q {w_y}$, having size $\leq g(n)$, 
\end{enumerate}
and recalling that $|w_x|\leq |w|+|Q|$, we have that the size of the proof $\synseq q q {w_x}$ is smaller or equal than:
$$f(q-1, 2(|w|+|Q|)) + g(n)\cdot (|w|+|Q|) +  (|w|+|Q|)^2 = g(n+1) $$
where the expression $(|w|+|Q|)^2$ counts the number of applications of the (U) and (D) rules.

We have established that, for every type $w_x$ of complexity $n$, the profile $\synseq q q {w_x}$ can be derived by a proof of size $\leq g(n)$. Since the number of types $v\subseteq w \cup P$ is bounded by $2^{|w| + |Q|}$,
each type has complexity at most $2^{|w| + |Q|}$. Hence, we know that an arbitrary profile $\synseq q q {w_x}$ can be derived by a derivation of size $\leq g(2^{|w|+|Q|})$. \\

Now we are ready to conclude the proof of Claim \ref{claim:frontier_estimation}. We need to construct a proof of the profile   $\synseq p q {w\cup P}$ of $\rho_h$ of size at most $N$.
Let $\rho_h|_q$ denote the profile $\rho_h$ where all vertices below the first occurrences of $q$ have been removed. Then $\rho_h|_q$ has profile  $\synsneq p q w^\prime \cup \{q,\dots, q\}$ (note the strict inequality) where $w^\prime\subseteq  w$ and, as before $\{ q,\dots, q\}$, denotes the multiset of leaves labeled by $q$ in $\rho_h|_q$  (up to a maximal multiplicity of $|w|$).  Since $|w^\prime|\leq |w|$ and $|\{q,\dots, q\}|\leq |w|$, a derivation of $\synsneq p q w^\prime \cup \{q,\dots, q\}$ of size $\leq f(q-1, 2|w|)$ can be obtained by induction hypothesis on $q$. 

Using this derivation and the appropriate required derivations $\synseq q q w_x$ (each having size bounded by $g(2^{|w|+|Q|})$), we can get a derivation of size 
\begin{equation}
N \leq f(q-1, 2|w|) + g(2^{|w|+|Q|})\cdot {|w|} + |w|^2,
\label{equation_bound_1}
\end{equation}
where the expression $|w|^2$ counts the number of $(U)$ and $(D)$ rules required to combine the sub-derivations. 
This finishes the proof of Claim \ref{claim:frontier_estimation}.
\end{proof}

 \paragraph{How to construct a derivation of: $\synseq {p_i} q {w_i}$.\\}

In this section we describe how to construct a derivation of $\synseq {p_i} q {w_i}$ for each $p_i\!\in\! P$. 
We consider separately the three cases: (1) $q\not\in\qall$ (with either $q\!\in\!\qzero$ or  $q\!\not\in\!\qzero$), (2) $q\in\qall$ and $q\!\not\in\!\qzero$, and (3) $q\in\qall$ and $q\!\in\!\qzero$.

\paragraph*{Subcase $q\not\in\qall$.}

Consider the sub-partial run $\rho_i$ rooted at one leaf labeled by $p_i$ having profile $\synseq {p_i} q {w_i}$, for some $w_i\subseteq w$. For example, in the image depicted, the sub-partial run rooted at $p_1$.

\def\prs{\tikz[scale=.65, every node/.style={scale=0.65}, baseline=1ex,shorten >=.1pt,node distance=1.8cm,on grid,semithick,auto,
every state/.style={fill=white,draw=black,circular drop shadow,inner sep=0mm,text=black},
accepting/.style ={fill=gray,text=white}]{
\node[state,accepting] (p) {$p_i$};
\begin{scope}[scale=.55, every node/.style={scale=0.55}, baseline=1ex,shorten >=.1pt,node distance=1.8cm, semithick,auto,
every state/.style={fill=white,draw=black,circular drop shadow,inner sep=0mm,text=black},
accepting/.style ={fill=gray,text=white}]
\node[draw=none] (r) [below left=1.7cm and 1.4cm  of p] {};
\node[draw=none] (s) [below right=1.7cm and 1.4cm  of p] {};
\end{scope}
\draw [black] (p) -- (r);
\draw [black] (p) -- (s); 
\node[draw=none] (q2m) [below left=1.4cm and 0.0cm of p, draw = none] {\ldots};
}
}
\def\sqsone{
\begin{scope}[scale=.45, every node/.style={scale=0.45}, baseline=1ex,shorten >=.1pt,node distance=1.8cm, semithick,auto,
every state/.style={fill=white,draw=black,circular drop shadow,inner sep=0mm,text=black},
accepting/.style ={fill=gray,text=white}]
\node[state] (q2r) [below left=1.5cm and 0.7cm of s] {$q_1$};
\node[state] (s2) [below right=1.5cm and 0.7cm of s] {$q_2$};
\draw [black] (s) -- (q2r);
\draw [black] (s) -- (s2); 
\end{scope}
}
\def\rqp{
\begin{scope}[scale=.45, every node/.style={scale=0.45}, baseline=1ex,shorten >=.1pt,node distance=1.8cm, semithick,auto,
every state/.style={fill=white,draw=black,circular drop shadow,inner sep=0mm,text=black},
accepting/.style ={fill=gray,text=white}]
\node[state] (q2) [below left=1.5cm and 0.7cm of r] {$q_1$};
\node[state] (p2) [below right=1.5cm and 0.7cm of r] {$p_2$};
\draw [black] (r) -- (q2);
\draw [black] (r) -- (p2); 
\end{scope}
}
\def\sqstwo{
\begin{scope}[scale=.35, every node/.style={scale=0.35}, baseline=1ex,shorten >=.1pt,node distance=1.8cm, semithick,auto,
every state/.style={fill=white,draw=black,circular drop shadow,inner sep=0mm,text=black},
accepting/.style ={fill=gray,text=white}]
\node[state] (q3) [below left=1.35cm and 0.4cm of s2] {$q_3$};
\node[state] (p3) [below right=1.35cm and 0.4cm of s2] {$p_1$};
\draw [black] (s2) -- (q3);
\draw [black] (s2) -- (p3); 
\end{scope}
}
\def\sqsthree{
\begin{scope}[scale=.35, every node/.style={scale=0.35}, baseline=1ex,shorten >=.1pt,node distance=1.8cm, semithick,auto,
every state/.style={fill=white,draw=black,circular drop shadow,inner sep=0mm,text=black},
accepting/.style ={fill=gray,text=white}]
\node[state] (q3) [below left=1.35cm and 0.4cm of q2r] {$q_2$};
\node[state] (s3p) [below right=1.35cm and 0.4cm of q2r] {$p_2$};
\draw [black] (q2r) -- (q3);
\draw [black] (q2r) -- (s3p); 
\end{scope}
}
\def\sqsfour{
\begin{scope}[scale=.35, every node/.style={scale=0.35}, baseline=1ex,shorten >=.1pt,node distance=1.8cm, semithick,auto,
every state/.style={fill=white,draw=black,circular drop shadow,inner sep=0mm,text=black},
accepting/.style ={fill=gray,text=white}]
\node[state] (q3) [below left=1.35cm and 0.4cm of p2] {$q_3$};
\node[state] (s3) [below right=1.35cm and 0.4cm of p2] {$p_1$};
\draw [black] (p2) -- (q3);
\draw [black] (p2) -- (s3); 
\end{scope}
}
\def\sqsfive{
\begin{scope}[scale=.35, every node/.style={scale=0.35}, baseline=1ex,shorten >=.1pt,node distance=1.8cm, semithick,auto,
every state/.style={fill=white,draw=black,circular drop shadow,inner sep=0mm,text=black},
accepting/.style ={fill=gray,text=white}]
\node[state] (q3) [below left=1.35cm and 0.4cm of q2] {$q_1$};
\node[state] (s3) [below right=1.35cm and 0.4cm of q2] {$p$};
\draw [black] (q2) -- (q3);
\draw [black] (q2) -- (s3); 
\end{scope}
}
\def\sqssix{
\begin{scope}[scale=.15, every node/.style={scale=0.15}, baseline=1ex,shorten >=.1pt,node distance=1.8cm, semithick,auto,
every state/.style={fill=white,draw=black,circular drop shadow,inner sep=0mm,text=black},
accepting/.style ={fill=gray,text=white}]
\node[state] (q5) [below left=1cm and 0.4cm of p3] {$p$};
\node[state] (s5) [below right=1cm and 0.4cm of p3] {$p$};
\draw [black] (p3) -- (q5);
\draw [black] (p3) -- (s5); 
\end{scope}
\node[draw=none] (s6) [below left=0.4cm and 0.35cm of s5,draw = none] {\ldots};
}
\def\sqsseven{
\begin{scope}[scale=.08, every node/.style={scale=0.08}, baseline=1ex,shorten >=.1pt,node distance=1.8cm, semithick,auto,
every state/.style={fill=white,draw=black,circular drop shadow,inner sep=0mm,text=black},
accepting/.style ={fill=gray,text=white}]
\node[draw=none] (q6) [below left=0.7cm and 0.3cm of s3] {};
\draw [black] (s3) -- (q6);
\node[draw=none] (s6) [below right=0.7cm and 0.3cm of s3,draw = none] {};
\draw [black] (s3) -- (s6); 
\end{scope}
\node[draw=none] (q6m) [below left=0.7cm and 0.0cm of s3, draw = none] {\ldots};
}
\def\sqseight{
\begin{scope}[scale=.08, every node/.style={scale=0.08}, baseline=1ex,shorten >=.1pt,node distance=1.8cm, semithick,auto,
every state/.style={fill=white,draw=black,circular drop shadow,inner sep=0mm,text=black},
accepting/.style ={fill=gray,text=white}]
\node[draw=none] (q8) [below left=0.7cm and 0.3cm of s3p] {};
\draw [black] (s3p) -- (q8);
\node[draw=none] (s8) [below right=0.7cm and 0.3cm of s3p,draw = none] {};
\draw [black] (s3p) -- (s8); 
\end{scope}
\node[draw=none] (q8m) [below left=0.7cm and 0.0cm of s3p, draw = none] {\ldots};
}
\begin{figure}[H]
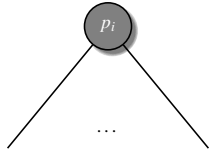

\centering
\prs
\caption{Sub-partial run rooted at $p_i$.}
\label{fig:matteo2}
\end{figure}

Our goal now is to derive the profile $\synseq {p_i} q {w_i}$.
Since $q$ is the maximal inner state in this sub-partial run, and it is not in $\Qall$, we conclude that there are no infinite paths in $\rho_i$ having infinitely many occurrences of $q$. 

If there are no vertices labeled by $q$ at all in $\rho_i$, then the profile of $\rho_i$ is actually $\synsneq {p_i} {q} {w_i}$ and this is derivable by inductive hypothesis on $q$ by a proof of size bounded by $f(q-1, |w_i|)$.

So assume there exist some vertices in $\rho_i$ labeled by $q$. For any such vertex $x$, let us denote with $\rho^x_i$ the sub-partial run of $\rho_i$ rooted at $x$. The partial run $\rho^x_i$ has profile $\synsneq {q} {q} {w_i^x}$, for some $w_i^x\subseteq w_i$. The multiset $w_i^x$ is called \emph{the type of $x$}. 

Following the same idea presented earlier, we prove how to derive all these profiles $\synsneq {q} {q} {w_i^x}$ by induction on the appropriate notion  of complexity of types. The notion of type is similar but technically different than the one introduced in the proof of Claim \ref{claim:frontier_estimation}. 

\begin{definition}
A type $v\subseteq w_i$ is of \emph{complexity $0$} if there exists some $x$ (labeled by $q$) in $\rho_i$ such that:
\begin{enumerate}
\item there are no vertices labeled by $q$ below $x$ (excluded $x$ itself) 
\item $x$ has type $v$, that is $v=w_i^x$.
\end{enumerate}
A type $v\subseteq w_i$ has complexity $n+1$ if there exists some $x$ (labeled by $q$) in $\rho_i$ such that:
\begin{enumerate}
\item the first (if any)  state $y$ labeled by $q$ in any path from $x$ (excluded) has type of complexity $\leq n$,
\item $x$ has type $v$, that is $v=w_i^x$.
\end{enumerate}
\end{definition}

 From the fact that there are no infinite paths with infinitely many $q$'s in $\rho_i$, we deduce that each type has a finite complexity. Furthermore, there are at most $2^{|w_i|}\leq 2^{|w|}$ types (i.e., multisubsets of $w_i$). 
\begin{claim}
Each type $w_i^x$ of complexity $n$, the profile $\synsneq {q} {q} {w_i^x}$ is derivable by a proof of size smaller or equal than $h(n)$ where the function $h\!:\!\mathbb{N}\rightarrow\mathbb{N}$ is defined as in Definition \ref{definition_f} as:
\begin{itemize}
\item $h(0) =  f(q-1, |w|)$,
\item $h(n+1) = f(q-1, 2|w|) + (h(n) \cdot |w|) + |w|^2$.
\end{itemize}
\label{estimation_qnotq_all}
\end{claim}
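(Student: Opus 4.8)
The plan is to prove the claim by induction on the complexity $n$ of the type $w_i^x$, mirroring the structure of the subclaim inside the proof of Claim~\ref{claim:frontier_estimation}; the only real difference is that here there is no auxiliary port set $P$, so the relevant multiset sizes are bounded by $|w_i|\le|w|$ rather than by $|w|+|Q|$, which is why the recurrence for $h$ takes the place of the one for $g$. The first thing I would record is that the complexity of every type is indeed finite, so that the induction makes sense: since $q\notin\qall$, no infinite path of $\rho_i$ carries infinitely many occurrences of $q$, hence along any downward path from a $q$-labelled vertex there are only finitely many further $q$-labelled vertices; this also bounds the number of distinct types by $2^{|w_i|}\le 2^{|w|}$.

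For the base case, if $w_i^x$ has complexity $0$ then $\rho_i^x$ has no inner occurrence of $q$ below its root, and exactly as in the base case of the subclaim of Claim~\ref{claim:frontier_estimation} the profile $\synsneq q q {w_i^x}$ is obtained directly from the outer induction hypothesis on $q$, by a derivation of size at most $f(q-1,|w_i^x|)\le f(q-1,|w|)=h(0)$. For the inductive step, given $w_i^x$ of complexity $n+1$, I would cut $\rho_i^x$ at the first occurrences of $q$ strictly below its root, obtaining a finite partial run $\rho_i^x|_q$ whose profile is of the form $\synsneq q q {w'\cup\{q,\dots,q\}}$ with $w'\subseteq w_i^x$ and the freshly created multiset of $q$-ports capped at multiplicity $|w_i^x|$; since $|w'\cup\{q,\dots,q\}|\le 2|w_i^x|\le 2|w|$, the outer induction hypothesis on $q$ yields a derivation of it of size at most $f(q-1,2|w|)$. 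Every vertex $y$ labelled $q$ that became a leaf of $\rho_i^x|_q$ has a type $w_i^y$ of complexity at most $n$, so by the inner induction hypothesis $\synsneq q q {w_i^y}$ is derivable by a proof of size at most $h(n)$; plugging these (at most $|w|$-many) derivations into the corresponding $q$-ports with the unification rule $(U)$ and then removing duplicated ports with $(D)$, as in Figure~\ref{proof_u_d}, gives a derivation of $\synsneq q q {w_i^x}$. Accounting at most $|w|^2$ vertices for the $(U)$ and $(D)$ applications, its size is at most $f(q-1,2|w|)+h(n)\cdot|w|+|w|^2=h(n+1)$, as required.

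The conceptual content is light, and I expect the main work to be the bookkeeping. One has to be careful that after the cut the multiset $w'\cup\{q,\dots,q\}$ genuinely has size at most $2|w|$: the $w'$-part is a submultiset of $w_i^x$, hence of size at most $|w|$, and the freshly created $q$-ports may be capped at multiplicity $|w_i^x|\le|w|$ without affecting realizability precisely because the deduplication rule $(D)$ is available. One must also check that the iterated $(U)$ followed by $(D)$ produces exactly $w_i^x$ (up to support) rather than some larger multiset; here it helps, exactly as in the remark following Figure~\ref{proof_u_d}, that it suffices to derive $\synsneq q q u$ for any $u\subseteq w_i^x$ with the correct support, since the precise multiplicities are washed out by $(D)$ — this is what makes the final bound depend only on $|w|$ and lets the recursion for $h$ close.
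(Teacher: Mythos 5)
Your proposal is correct and follows essentially the same route as the paper's own proof: induction on the complexity of the type, with the base case handled by the outer induction hypothesis on $q$ (bound $f(q-1,|w_i^x|)\le f(q-1,|w|)=h(0)$), and the inductive step obtained by cutting $\rho_i^x$ at the first occurrences of $q$, deriving the resulting profile of size at most $2|w|$ by the outer induction hypothesis, and plugging in at most $|w|$-many derivations of complexity-$\le n$ types via $(U)$ and $(D)$, yielding the recurrence $h(n+1)=f(q-1,2|w|)+h(n)\cdot|w|+|w|^2$. Your additional remarks on the finiteness of complexities and on why capping the multiplicity of the fresh $q$-ports is harmless are consistent with (and slightly more explicit than) the paper's presentation.
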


\noindent
From the claim follows that each $\synsneq {q} {q} {w_i^x}$ can be derived by a proof of size $\leq h(2^{|w|})$.

\begin{proof}[Proof of the Claim.]
$\ $\\ 
\textbf{\underline{Base case:} how to derive $\synseq q q {w_i^x}$ given $w_i^x$ a type of complexity $0$.}

The fact that $w_i^x$ has complexity $0$ means that $\rho_i^x$ has no inner vertices (excluded the root $x$ itself) labeled by $q$. So by induction hypothesis, there exists a derivation of $\synsneq q q {w_i^x}$ 
of size $f(q-1, |w_i^x|)$. Since $|w_i^x| \leq | w_i | \leq |w| $, the proof has size $\leq f(q-1, |w|) = h(0)$, as desired.

\noindent
\textbf{\underline{Inductive case:} how to derive $\synseq q q {w_i^x}$ given $w_i^x$, a type of complexity $n+1$.}

Let $\rho^x_i |_q$ be the partial run $\rho^x_i$ where all vertices below the first occurrences of $q$'s have been removed. 

\def\prs{\tikz[scale=.65, every node/.style={scale=0.65}, baseline=1ex,shorten >=.1pt,node distance=1.8cm,on grid,semithick,auto,
every state/.style={fill=white,draw=black,circular drop shadow,inner sep=0mm,text=black},
accepting/.style ={fill=gray,text=white}]{
\node[state,accepting] (p) {$q$};
\begin{scope}[scale=.55, every node/.style={scale=0.55}, baseline=1ex,shorten >=.1pt,node distance=1.8cm, semithick,auto,
every state/.style={fill=white,draw=black,circular drop shadow,inner sep=0mm,text=black},
accepting/.style ={fill=gray,text=white}]
\node[draw=none] (r) [below left=1.7cm and 1.4cm  of p] {};
\node[draw=none] (s) [below right=1.7cm and 1.4cm  of p] {};
\end{scope}
\draw [black] (p) -- (r);
\draw [black] (p) -- (s); 
\node[draw=none] (q) [below left=1.4cm and 0.0cm of p, draw = none] {$q$};
\draw [black,decorate,decoration=snake] (p) -- (q); 
}
}
\begin{figure}[H]
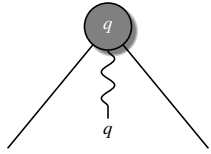

\centering
\prs
\caption{Removing all vertices below $q$'s in the partial run $\rho^{x}_i$.}
\label{fig:matteo4}
\end{figure}

The partial run
$\rho^x_i|_q$ has profile $\synsneq q q {w^\prime_x}\cup \{q, \dots, q\}$ (note the strict inequality $<q$), for some $w^\prime_x\subseteq w_i^x$ (this is because some ports in $w_x^i$ might not appear in $\rho^x_i|_q$ since some vertices were removed) and  where $\{ q,\dots, q\}$ denotes the multiset of leaves labeled by $q$ in $\rho^x_i|_q$  (up to a maximal multiplicity of $|w_i^x|$). 
Note that \[|{w^\prime_x}\cup \{q, \dots, q\}| \leq  2 |w_i^x| \leq 2|w|.\]

From the inductive assumption on $q$ the profile $\synsneq q q {w^\prime_x}\cup \{q, \dots, q\}$ can be proved using a derivation of  size at most $f(q-1, | {w^\prime_x}\cup \{q, \dots, q\}|)$. Moreover, we have 
\[f(q-1, | {w^\prime_x}\cup \{q, \dots, q\}|) \leq f(q-1, 2|w|),\]
because $f$ is non-decreasing in the second coordinate. 

Now we are ready to construct the desired derivation $\synseq q q {w_i^x}$.
We construct it using the derivation of $\synsneq q q {w^\prime_x}\cup \{q, \dots, q\}$, rules $(U)$ and $(D)$ and appropriate auxiliary derivations $\synsneq q q {w_i^y}$ of type $\leq n$.
The resulting proof has then size $ \leq f(q-1, 2|w|) + (h(n) \cdot |w|) + |w|^2= h(n+1)$.
\end{proof}

We are ready to  construct a proof of the profile   $\synseq {p_i} q {w_i}$ of $\rho_i$ and
prove an appropriate estimation of its size (see Equation \ref{equation_bound_2}).  
Let $\rho_i|_q$ denote the profile $\rho_i$ where all vertices below the first occurrences of $q$ have been removed. Then $\rho_i|_q$ has profile  $\synsneq p q w^\prime \cup \{q,\dots, q\}$ (note the strict inequality) where $w^\prime\subseteq  w_i$ and, as before $\{ q,\dots, q\}$, denotes the multiset of leaves labeled by $q$ in $\rho_i|_q$  (up to a maximal multiplicity of $|w|$).  Since $|w^\prime| \leq |w_i| \leq |w|$ and $|\{q,\dots, q\}|\leq |w|$, a derivation of $\synsneq p q w^\prime \cup \{q,\dots, q\}$ of size $\leq f(q-1, 2|w|)$ can be obtained by induction hypothesis on $q$. 

Using this derivation and the appropriate required derivations $\synseq q q w_x$ (having size bounded by $h(2^{|w|})$), we can get a derivation of size 

\begin{equation}
M_i \leq f(q-1, 2|w|) + (h(2^{|w|})\cdot |w|) + |w|^2.
\label{equation_bound_2}
\end{equation}

\paragraph*{Subcase $q\in\qall$ and $q\!\not\in \qzero$.}
Consider the sub-partial run $\rho_i$ rooted at  $p_i$. 
\def\prs{\tikz[scale=.65, every node/.style={scale=0.65}, baseline=1ex,shorten >=.1pt,node distance=1.8cm,on grid,semithick,auto,
every state/.style={fill=white,draw=black,circular drop shadow,inner sep=0mm,text=black},
accepting/.style ={fill=gray,text=white}]{
\node[state,accepting] (p) {$p_i$};
\begin{scope}[scale=.55, every node/.style={scale=0.55}, baseline=1ex,shorten >=.1pt,node distance=1.8cm, semithick,auto,
every state/.style={fill=white,draw=black,circular drop shadow,inner sep=0mm,text=black},
accepting/.style ={fill=gray,text=white}]
\node[draw=none] (r) [below left=1.7cm and 1.4cm  of p] {};
\node[draw=none] (s) [below right=1.7cm and 1.4cm  of p] {};
\end{scope}
\draw [black] (p) -- (r);
\draw [black] (p) -- (s); 
\node[draw=none] (q2m) [below left=1.4cm and 0.0cm of p, draw = none] {\ldots};
}
}
\def\sqsone{
\begin{scope}[scale=.45, every node/.style={scale=0.45}, baseline=1ex,shorten >=.1pt,node distance=1.8cm, semithick,auto,
every state/.style={fill=white,draw=black,circular drop shadow,inner sep=0mm,text=black},
accepting/.style ={fill=gray,text=white}]
\node[state] (q2r) [below left=1.5cm and 0.7cm of s] {$q_1$};
\node[state] (s2) [below right=1.5cm and 0.7cm of s] {$q_2$};
\draw [black] (s) -- (q2r);
\draw [black] (s) -- (s2); 
\end{scope}
}
\def\rqp{
\begin{scope}[scale=.45, every node/.style={scale=0.45}, baseline=1ex,shorten >=.1pt,node distance=1.8cm, semithick,auto,
every state/.style={fill=white,draw=black,circular drop shadow,inner sep=0mm,text=black},
accepting/.style ={fill=gray,text=white}]
\node[state] (q2) [below left=1.5cm and 0.7cm of r] {$q_1$};
\node[state] (p2) [below right=1.5cm and 0.7cm of r] {$p_2$};
\draw [black] (r) -- (q2);
\draw [black] (r) -- (p2); 
\end{scope}
}
\def\sqstwo{
\begin{scope}[scale=.35, every node/.style={scale=0.35}, baseline=1ex,shorten >=.1pt,node distance=1.8cm, semithick,auto,
every state/.style={fill=white,draw=black,circular drop shadow,inner sep=0mm,text=black},
accepting/.style ={fill=gray,text=white}]
\node[state] (q3) [below left=1.35cm and 0.4cm of s2] {$q_3$};
\node[state] (p3) [below right=1.35cm and 0.4cm of s2] {$p_1$};
\draw [black] (s2) -- (q3);
\draw [black] (s2) -- (p3); 
\end{scope}
}
\def\sqsthree{
\begin{scope}[scale=.35, every node/.style={scale=0.35}, baseline=1ex,shorten >=.1pt,node distance=1.8cm, semithick,auto,
every state/.style={fill=white,draw=black,circular drop shadow,inner sep=0mm,text=black},
accepting/.style ={fill=gray,text=white}]
\node[state] (q3) [below left=1.35cm and 0.4cm of q2r] {$q_2$};
\node[state] (s3p) [below right=1.35cm and 0.4cm of q2r] {$p_2$};
\draw [black] (q2r) -- (q3);
\draw [black] (q2r) -- (s3p); 
\end{scope}
}
\def\sqsfour{
\begin{scope}[scale=.35, every node/.style={scale=0.35}, baseline=1ex,shorten >=.1pt,node distance=1.8cm, semithick,auto,
every state/.style={fill=white,draw=black,circular drop shadow,inner sep=0mm,text=black},
accepting/.style ={fill=gray,text=white}]
\node[state] (q3) [below left=1.35cm and 0.4cm of p2] {$q_3$};
\node[state] (s3) [below right=1.35cm and 0.4cm of p2] {$p_1$};
\draw [black] (p2) -- (q3);
\draw [black] (p2) -- (s3); 
\end{scope}
}
\def\sqsfive{
\begin{scope}[scale=.35, every node/.style={scale=0.35}, baseline=1ex,shorten >=.1pt,node distance=1.8cm, semithick,auto,
every state/.style={fill=white,draw=black,circular drop shadow,inner sep=0mm,text=black},
accepting/.style ={fill=gray,text=white}]
\node[state] (q3) [below left=1.35cm and 0.4cm of q2] {$q_1$};
\node[state] (s3) [below right=1.35cm and 0.4cm of q2] {$p$};
\draw [black] (q2) -- (q3);
\draw [black] (q2) -- (s3); 
\end{scope}
}
\def\sqssix{
\begin{scope}[scale=.15, every node/.style={scale=0.15}, baseline=1ex,shorten >=.1pt,node distance=1.8cm, semithick,auto,
every state/.style={fill=white,draw=black,circular drop shadow,inner sep=0mm,text=black},
accepting/.style ={fill=gray,text=white}]
\node[state] (q5) [below left=1cm and 0.4cm of p3] {$p$};
\node[state] (s5) [below right=1cm and 0.4cm of p3] {$p$};
\draw [black] (p3) -- (q5);
\draw [black] (p3) -- (s5); 
\end{scope}
\node[draw=none] (s6) [below left=0.4cm and 0.35cm of s5,draw = none] {\ldots};
}
\def\sqsseven{
\begin{scope}[scale=.08, every node/.style={scale=0.08}, baseline=1ex,shorten >=.1pt,node distance=1.8cm, semithick,auto,
every state/.style={fill=white,draw=black,circular drop shadow,inner sep=0mm,text=black},
accepting/.style ={fill=gray,text=white}]
\node[draw=none] (q6) [below left=0.7cm and 0.3cm of s3] {};
\draw [black] (s3) -- (q6);
\node[draw=none] (s6) [below right=0.7cm and 0.3cm of s3,draw = none] {};
\draw [black] (s3) -- (s6); 
\end{scope}
\node[draw=none] (q6m) [below left=0.7cm and 0.0cm of s3, draw = none] {\ldots};
}
\def\sqseight{
\begin{scope}[scale=.08, every node/.style={scale=0.08}, baseline=1ex,shorten >=.1pt,node distance=1.8cm, semithick,auto,
every state/.style={fill=white,draw=black,circular drop shadow,inner sep=0mm,text=black},
accepting/.style ={fill=gray,text=white}]
\node[draw=none] (q8) [below left=0.7cm and 0.3cm of s3p] {};
\draw [black] (s3p) -- (q8);
\node[draw=none] (s8) [below right=0.7cm and 0.3cm of s3p,draw = none] {};
\draw [black] (s3p) -- (s8); 
\end{scope}
\node[draw=none] (q8m) [below left=0.7cm and 0.0cm of s3p, draw = none] {\ldots};
}
\begin{figure}[H]
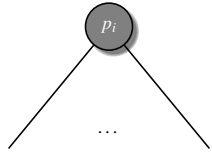

\centering
\prs
\caption{Sub-partial run rooted at $p_i$.}
\label{fig:matteo2_bis}
\end{figure}


Our goal now is to derive the profile $\synseq {p_i} q {w_i}$. If there are not states labeled by $q$ in $\rho_i$, then the partial run $\rho_i$ actually has profile $\synsneq {p_i} q {w_i}$ and this is derivable by inductive hypothesis on $q$ with a proof of size $\leq f(q-1, |w_i|) \leq f(q-1, |w|)$. 
So let us assume that there are some states labeled by $q$ in $\rho_i$.

Note, that unlike the previous case, this time it is possible that the partial run $\rho_i$ contains infinite paths with infinitely many $q$ as this would satisfy the $\qall$ condition and would not constitute a problem with respect to the $\qzero$ conditions, since $q\!\not\in \qzero$.
Let $\rho_i|_q$ be obtained by $\rho_i$ by removing all vertices below the first occurrences of states labeled by $q$.

\def\prs{\tikz[scale=.65, every node/.style={scale=0.65}, baseline=1ex,shorten >=.1pt,node distance=1.8cm,on grid,semithick,auto,
every state/.style={fill=white,draw=black,circular drop shadow,inner sep=0mm,text=black},
accepting/.style ={fill=gray,text=white}]{
\node[state,accepting] (p) {$p_i$};
\begin{scope}[scale=.55, every node/.style={scale=0.55}, baseline=1ex,shorten >=.1pt,node distance=1.8cm, semithick,auto,
every state/.style={fill=white,draw=black,circular drop shadow,inner sep=0mm,text=black},
accepting/.style ={fill=gray,text=white}]
\node[draw=none] (r) [below left=1.7cm and 1.4cm  of p] {};
\node[draw=none] (s) [below right=1.7cm and 1.4cm  of p] {};
\end{scope}
\draw [black] (p) -- (r);
\draw [black] (p) -- (s); 
\node[draw=none] (q) [below left=1.4cm and 0.0cm of p, draw = none] {$q$};
\draw [black,decorate,decoration=snake] (p) -- (q); 
}
}
\begin{figure}[H]
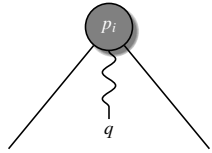

\centering
\prs
\caption{Removing all vertices below $q$'s in the partial run $\rho_i$.}
\label{fig:matteo4bis}
\end{figure}

 Note, that the obtained tree $\rho_i|_q$ is itself a partial run with profile $\synsneq {p_i} {q} {w^\prime \cup \{ q, \dots, q\}}$, for a multiset $w^\prime$ contained in $w_i$ and where $\{ q,\dots, q\}$ denotes the multiset of leaves labeled by $q$  up to a maximal multiplicity of $|w_i|$. Moreover, in $\rho_i|_q$ all inner nodes of $\rho^\prime$ are different than $q$. Hence we can derive $\synsneq {p_i}{q} {w^\prime\cup \{ q,\dots, q\}}$ (note the strict inequality) by induction hypothesis on $q$ with a proof of size $\leq f(q-1, 2|w|)$.

Now, if $p_i\!=\!q$, we can obtain the desired derivation of $\synseq {p_i} q {w^\prime}$ by application of the Weak Looping (WL) rule:

\begin{prooftree}
\AxiomC{$\synsneq {q} {q} {w^\prime \cup  \{q,\dots, q\}}$}
\LeftLabel{D}
\UnaryInfC{$\synsneq {q} {q} {w^\prime \cup  \{q\}}$}
\LeftLabel{WL}
\UnaryInfC{$\synsneq {q} {q} {w^\prime}$} 
\end{prooftree}

If instead $p_i\neq q$, then for all leaves in $\rho_i |_q$ labeled by $q$, we can derive  $\synsneq {q} {q} {w^{\prime\prime}}$, for some $w^{\prime\prime}\!\subseteq\!w_i$ as described just above for the case $p_i\!=\!q$. Note, that there are at most $2^{|w|}$ such derivations, as $w^{\prime\prime}\subseteq w_i \subseteq w$, and each derivation has size bounded by $f(q-1, 2|w|)$.

We can then combine the proof of the profile $\synsneq {p_i} {q} {w^\prime \cup \{ q,\dots, q\}}$ (having size $\leq f(q-1,2|w|)$) with all the relevant profiles $\synsneq {q} {q} {w^{\prime\prime}}$ (having size $ \leq f(q-1, 2|w|)$ ) to obtain the desired derivation. The final proof has size:

\begin{equation}
M_i  \leq f(q-1, 2|w|) + (f(q-1, 2|w|)\cdot 2^{|w|} ) + |w|^2 =  f(q-1, 2|w|)\cdot (2^{|w|} + 1) + |w|^2,
\label{equation_bound_3}
\end{equation}
where the last term $|w|^2$ counts the number of applications of $(U)$ and $(D)$ rule to combine all sub-derivations into the final derivation.




\paragraph*{Subcase $q\in\qall$ and $q\!\in \qzero$.}
Consider the sub-partial run $\rho_i$ rooted at  $p_i$. 
\def\prs{\tikz[scale=.65, every node/.style={scale=0.65}, baseline=1ex,shorten >=.1pt,node distance=1.8cm,on grid,semithick,auto,
every state/.style={fill=white,draw=black,circular drop shadow,inner sep=0mm,text=black},
accepting/.style ={fill=gray,text=white}]{
\node[state,accepting] (p) {$p_i$};
\begin{scope}[scale=.55, every node/.style={scale=0.55}, baseline=1ex,shorten >=.1pt,node distance=1.8cm, semithick,auto,
every state/.style={fill=white,draw=black,circular drop shadow,inner sep=0mm,text=black},
accepting/.style ={fill=gray,text=white}]
\node[draw=none] (r) [below left=1.7cm and 1.4cm  of p] {};
\node[draw=none] (s) [below right=1.7cm and 1.4cm  of p] {};
\end{scope}
\draw [black] (p) -- (r);
\draw [black] (p) -- (s); 
\node[draw=none] (q2m) [below left=1.4cm and 0.0cm of p, draw = none] {\ldots};
}
}
\def\sqsone{
\begin{scope}[scale=.45, every node/.style={scale=0.45}, baseline=1ex,shorten >=.1pt,node distance=1.8cm, semithick,auto,
every state/.style={fill=white,draw=black,circular drop shadow,inner sep=0mm,text=black},
accepting/.style ={fill=gray,text=white}]
\node[state] (q2r) [below left=1.5cm and 0.7cm of s] {$q_1$};
\node[state] (s2) [below right=1.5cm and 0.7cm of s] {$q_2$};
\draw [black] (s) -- (q2r);
\draw [black] (s) -- (s2); 
\end{scope}
}
\def\rqp{
\begin{scope}[scale=.45, every node/.style={scale=0.45}, baseline=1ex,shorten >=.1pt,node distance=1.8cm, semithick,auto,
every state/.style={fill=white,draw=black,circular drop shadow,inner sep=0mm,text=black},
accepting/.style ={fill=gray,text=white}]
\node[state] (q2) [below left=1.5cm and 0.7cm of r] {$q_1$};
\node[state] (p2) [below right=1.5cm and 0.7cm of r] {$p_2$};
\draw [black] (r) -- (q2);
\draw [black] (r) -- (p2); 
\end{scope}
}
\def\sqstwo{
\begin{scope}[scale=.35, every node/.style={scale=0.35}, baseline=1ex,shorten >=.1pt,node distance=1.8cm, semithick,auto,
every state/.style={fill=white,draw=black,circular drop shadow,inner sep=0mm,text=black},
accepting/.style ={fill=gray,text=white}]
\node[state] (q3) [below left=1.35cm and 0.4cm of s2] {$q_3$};
\node[state] (p3) [below right=1.35cm and 0.4cm of s2] {$p_1$};
\draw [black] (s2) -- (q3);
\draw [black] (s2) -- (p3); 
\end{scope}
}
\def\sqsthree{
\begin{scope}[scale=.35, every node/.style={scale=0.35}, baseline=1ex,shorten >=.1pt,node distance=1.8cm, semithick,auto,
every state/.style={fill=white,draw=black,circular drop shadow,inner sep=0mm,text=black},
accepting/.style ={fill=gray,text=white}]
\node[state] (q3) [below left=1.35cm and 0.4cm of q2r] {$q_2$};
\node[state] (s3p) [below right=1.35cm and 0.4cm of q2r] {$p_2$};
\draw [black] (q2r) -- (q3);
\draw [black] (q2r) -- (s3p); 
\end{scope}
}
\def\sqsfour{
\begin{scope}[scale=.35, every node/.style={scale=0.35}, baseline=1ex,shorten >=.1pt,node distance=1.8cm, semithick,auto,
every state/.style={fill=white,draw=black,circular drop shadow,inner sep=0mm,text=black},
accepting/.style ={fill=gray,text=white}]
\node[state] (q3) [below left=1.35cm and 0.4cm of p2] {$q_3$};
\node[state] (s3) [below right=1.35cm and 0.4cm of p2] {$p_1$};
\draw [black] (p2) -- (q3);
\draw [black] (p2) -- (s3); 
\end{scope}
}
\def\sqsfive{
\begin{scope}[scale=.35, every node/.style={scale=0.35}, baseline=1ex,shorten >=.1pt,node distance=1.8cm, semithick,auto,
every state/.style={fill=white,draw=black,circular drop shadow,inner sep=0mm,text=black},
accepting/.style ={fill=gray,text=white}]
\node[state] (q3) [below left=1.35cm and 0.4cm of q2] {$q_1$};
\node[state] (s3) [below right=1.35cm and 0.4cm of q2] {$p$};
\draw [black] (q2) -- (q3);
\draw [black] (q2) -- (s3); 
\end{scope}
}
\def\sqssix{
\begin{scope}[scale=.15, every node/.style={scale=0.15}, baseline=1ex,shorten >=.1pt,node distance=1.8cm, semithick,auto,
every state/.style={fill=white,draw=black,circular drop shadow,inner sep=0mm,text=black},
accepting/.style ={fill=gray,text=white}]
\node[state] (q5) [below left=1cm and 0.4cm of p3] {$p$};
\node[state] (s5) [below right=1cm and 0.4cm of p3] {$p$};
\draw [black] (p3) -- (q5);
\draw [black] (p3) -- (s5); 
\end{scope}
\node[draw=none] (s6) [below left=0.4cm and 0.35cm of s5,draw = none] {\ldots};
}
\def\sqsseven{
\begin{scope}[scale=.08, every node/.style={scale=0.08}, baseline=1ex,shorten >=.1pt,node distance=1.8cm, semithick,auto,
every state/.style={fill=white,draw=black,circular drop shadow,inner sep=0mm,text=black},
accepting/.style ={fill=gray,text=white}]
\node[draw=none] (q6) [below left=0.7cm and 0.3cm of s3] {};
\draw [black] (s3) -- (q6);
\node[draw=none] (s6) [below right=0.7cm and 0.3cm of s3,draw = none] {};
\draw [black] (s3) -- (s6); 
\end{scope}
\node[draw=none] (q6m) [below left=0.7cm and 0.0cm of s3, draw = none] {\ldots};
}
\def\sqseight{
\begin{scope}[scale=.08, every node/.style={scale=0.08}, baseline=1ex,shorten >=.1pt,node distance=1.8cm, semithick,auto,
every state/.style={fill=white,draw=black,circular drop shadow,inner sep=0mm,text=black},
accepting/.style ={fill=gray,text=white}]
\node[draw=none] (q8) [below left=0.7cm and 0.3cm of s3p] {};
\draw [black] (s3p) -- (q8);
\node[draw=none] (s8) [below right=0.7cm and 0.3cm of s3p,draw = none] {};
\draw [black] (s3p) -- (s8); 
\end{scope}
\node[draw=none] (q8m) [below left=0.7cm and 0.0cm of s3p, draw = none] {\ldots};
}
\begin{figure}[H]
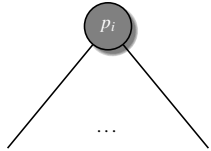

\centering
\prs
\caption{Sub-partial run rooted at $p_i$.}
\label{fig:matteo2_bis_bis}
\end{figure}
Our goal now is to derive the profile $\synseq {p_i} q {w_i}$.
As in the previous case, if there are no states labeled by $q$ in $\rho_i$, then the partial run $\rho_i$ actually has profile $\synsneq {p_i} q {w_i}$ and this is derivable by inductive hypothesis on $q$ with a proof of size $f(q-1, |w_i|)$.
Therefore let us assume that some states in $\rho_i$ are labeled by $q$.

We distinguish two cases: $w_i\!\neq\! \emptyset$ and $w_i\!=\!\emptyset$, where $w_i$ is the multiset of ports in $\rho_i$.
If $w_i \!\neq\! \emptyset$ then our argument proceeds exactly as in the previous case with the only difference that the derivation of the profile $\synsneq {p_i} {q} {w_i}$ is obtained by application of the Strong Looping rule (SL) rather than (WL). Hence (as in Equation \ref{equation_bound_3}) we have a proof of size 
\begin{equation}
M_i  \leq f(q-1, 2|w|) + (f(q-1, 2|w|)\cdot 2^{|w|} ) + |w|^2.
\label{equation_bound_3_bis}
\end{equation}

Consider then the case $w_i\!=\! \emptyset$, i.e., the case of $\rho_i$ having profile $\synsneq {p_i} {q} {\emptyset}$. Since $q\!\in\! \qzero$, the set of paths in $\rho_i$ having infinitely many occurrences of $q$ has probability $0$.

\begin{claim} Since $\rho_i$ is a \emph{regular tree}, there exists a subtree $\rho^\prime$ of $\rho_i$ that does not contain any vertex labeled by $q$. 
\end{claim}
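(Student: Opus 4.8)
The plan is to exploit the regularity of $\rho_i$ in order to reduce the statement to an elementary fact about finite Markov chains, in the same spirit as the argument in the proof of Proposition~\ref{prop:serre}. Since $\rho_i$ is regular and has no ports (its profile is $\synsneq {p_i} q \emptyset$), it is a genuine run over the full binary tree, and we may fix a finite directed graph $G$ whose infinite unfolding is $\rho_i$, in which every vertex has exactly two outgoing edges (a ``left'' one and a ``right'' one) and is labelled by a state; we may assume every vertex of $G$ is reachable from its root, discarding the rest. The subtrees of $\rho_i$, up to isomorphism, are exactly the unfoldings of $G$ from its various vertices, so the statement to be proved is equivalent to: \emph{there is a vertex of $G$ from which no $q$-labelled vertex is reachable.}

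Suppose, for contradiction, that from every vertex of $G$ one can reach a $q$-labelled vertex. I would view $G$ as a finite Markov chain in which each of the two outgoing edges of a vertex is taken with probability $\tfrac12$; then a $\mu$-random infinite path of $\rho_i$ is exactly a random trajectory of this chain started at the root. By the elementary theory of finite Markov chains, almost surely the trajectory eventually enters some bottom strongly connected component $C$ of $G$ --- a strongly connected component from which no edge leaves --- and thereafter visits every vertex of $C$ infinitely often. Pick any vertex of $C$; by hypothesis it reaches a $q$-labelled vertex, and since no edge leaves $C$ this $q$-labelled vertex lies in $C$. Hence, with probability $1$, the random path visits a $q$-labelled vertex infinitely often. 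As $q$ is the maximal state labelling an inner vertex of $\rho_i$, this says that the $\maxinf$ state of a $\mu$-random path is $q$ with probability $1$, which contradicts the already-established fact that the set of paths of $\rho_i$ with infinitely many occurrences of $q$ is $\mu$-null (this is the zero condition for $\rho_i$, using $q\in\Qzero$).

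Therefore there is a vertex $v$ of $G$ from which no $q$-labelled vertex is reachable; in particular $v$ itself is not labelled $q$. Since $v$ is reachable from the root of $G$, it occurs as (the root of) some subtree $\rho'$ of $\rho_i$, and by the choice of $v$ no vertex of $\rho'$ is labelled $q$. This is the required subtree.

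The main obstacle is the implication ``every vertex of $G$ reaches a $q$-vertex $\Rightarrow$ a random path sees $q$ infinitely often almost surely'': reachability of $q$-vertices in $G$ alone is not enough, one needs that $q$-vertices are reachable from inside the recurrent class into which the trajectory settles, and the bottom strongly connected component structure of finite Markov chains is precisely what supplies this. The remaining ingredients --- that $\rho_i$ is a full binary tree because it has no ports, that ``$\maxinf=q$'' coincides with ``$q$ occurs infinitely often'' because $q$ is the maximal inner state, and that the zero condition provides the required null set --- are routine.
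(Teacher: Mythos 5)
Your proof is correct and follows essentially the same route as the paper: view the finite graph representing the portless regular run $\rho_i$ as a Markov chain, and derive from ``every vertex reaches a $q$-vertex'' that almost every path sees $q$ infinitely often, contradicting the zero condition since $q\in\Qzero$ is the maximal inner state. The paper compresses this into a one-line appeal to ``elementary results of Markov chains''; your bottom-strongly-connected-component argument is exactly the missing detail that justifies that appeal.
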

\begin{proof}
By contradiction, if vertices labeled by $q$ are reachable by all states in the tree $\rho_i$, then by regularity (i.e., $\rho_i$ can be represented as a finite graph), the set of paths visiting infinitely many $q$'s has probability $1$.
\end{proof}

\begin{remark*}
The above claim is the only point in the proof where the regularity assumption is used.
\end{remark*}

Let $r$ be the root of the subtree. 
The subtree $\rho^\prime$ has profile $\synsneq {r} {q} {\emptyset}$ (note the strict inequality) which is derivable by induction on $q$ with a proof of size at most $f(q-1,0)$.
Now, from the partial run $\rho_i$, remove all the vertices below the state $r$. In this way we obtain a new partial run having profile $\synseq {p_i} {q} {\{r\}}$. We can derive this profile as described in the previous part of the proof (case $w_i\!\neq\! \emptyset$, see Equation \ref{equation_bound_3_bis}) with a derivation having size $ \leq f(q-1, 2|\{r\}|) + (f(q-1, 2|\{r\}|)\cdot 2^{|\{r\}|}) + |\{r\}|^2 = f(q-1,2)+ 2f(q-1,2) + 1 = 3f(q-1,2) + 1$.
Then, by application of the rule $(U$) we can obtain the desired derivation:

\begin{prooftree}
\AxiomC{$\synseq {p_i} {q} {\{r\}}$}
\AxiomC{$\synsneq {r} {q} {\emptyset}$}
\LeftLabel{Unification (U):}
\BinaryInfC{$\synseq {p_i} {q} {\emptyset}$}
\end{prooftree}
The resulting proof has size 
\begin{equation}
M_i  \leq (3f(q-1,2) + 1)+ f(q-1,0)+1.
\label{equation_bound_4}
\end{equation}
This concludes the proof regarding the derivability of $\synseq {p_i} q {w_i}$ of $\rho_i$. That is, we have established in this Subsection 
(Subsection \ref{subsection_inductive_step}, see Equations \ref{equation_bound_2},  \ref{equation_bound_3}, \ref{equation_bound_3_bis} and  \ref{equation_bound_4})  that, for each $p_i\!\in\! P$, the profile $\synseq {p_i} q {w_i}$ of $\rho_i$ can be proved by a derivation of size 

$$M_i \leq \max
\begin{cases}
f(q-1, 2|w|) + h(2^{|w|})\cdot |w| + |w|^2,\\
f(q-1, 2|w|)\cdot (2^{|w|} + 1) + |w|^2,\\
(3f(q-1,2) + 1)+ f(q-1,0)+1.
\end{cases}$$

From Equation \ref{equation_bound_1} we know that 
$$N \leq f(q-1, 2|w|) + g(2^{|w|+|Q|})\cdot {|w|} + |w|^2.$$ Therefore, following Equation \ref{main_equation_inductive}, we know that $K=\max\{N,M_i\}$ is bounded by 

\begin{equation}
 K \leq\begin{cases}
f(q-1, 2|w|) + h(2^{|w|})\cdot |w| + |w|^2,\\
f(q-1, 2|w|)\cdot (2^{|w|} + 1) + |w|^2,\\
(3f(q-1,2) + 1)+ f(q-1,0)+1,\\
N \leq f(q-1, 2|w|) + g(2^{|w|+|Q|})\cdot {|w|} + |w|^2
\end{cases}
 \label{final_inequality_last}
 \end{equation}

\noindent
This completes the proof of Theorem \ref{soundness_and_completeness}.

\bibliographystyle{eptcs}
\bibliography{bib}



\end{document}